\newcommand{\be}{\begin{equation}}
\newcommand{\ee}{\end{equation}}
\newtheorem{definition}{Definition}
\newtheorem{theorem}{Theorem}
\newtheorem{proposition}{Proposition}
\newcommand{\cC}{{\cal C}}
\newcommand{\cD}{{\cal D}}
\newcommand{\cT}{{\cal T}}
\newcommand{\cG}{{\cal G}}
\newcommand{\G}{{\cal G}}
\newcommand{\cB}{{\cal B}}
\newcommand{\cJ}{{\cal J}}
\newcommand{\cS}{{\cal S}}
\newcommand{\cP}{{\cal P}}
\def\d{\mathfrak d}
\def\g{\mathfrak g}
\def\b{\mathfrak b}
\def\h{\mathfrak h}
\def\so{\mathfrak{so}}
\def\an{\mathfrak{an}}
\def\su{\mathfrak{su}}
\def\bbR{\mathbb{R}}
\def\R{\mathbb{R}}
\def\bbC{\mathbb{C}}
\def\SU{\mathrm{SU}}
\def\SL{\mathrm{SL}}
\def\SO{\mathrm{SO}}
\def\AN{\mathrm{AN}}
\def\ISO{\mathrm{ISO}}
\def\tell{\tilde \ell}
\def\th{\tilde h}
\def\ty{\tilde y}
\def\tbe{\tilde \beta}
\def\tu{\tilde u}
\def\tla{\tilde \lambda}
\def\bu{\bar u}
\def\oy{\bar y}
\def\obe{\bar \beta}
\def\blam{\bar \lambda}
\def\tblam{\tilde \blam}
\def\tbu{\tilde \bu}
\def\toy{\tilde \oy}
\def\tobe{\tilde \obe}
\def\Dr{\Delta}
\def\dr{\rightarrow}
\def\Dl{\underline \Delta}
\def\rhd{\triangleright}
\def\lhd{\triangleleft}
\def\la{\langle}
\def\ra{\rangle}
\def\ot{\otimes}
\def\op{\oplus}
\def\ip{\lrcorner\,}
\newcommand\blackbowtie{\mathrel{\scalerel*{\blacktriangleright\joinrel\blacktriangleleft}{x}}}
\newcommand\bicrossr{\mathrel{\scalerel*{\blacktriangleright\joinrel\mathrel{\triangleleft}}{x}}}
\newcommand\bicrossl{\mathrel{\scalerel*{\mathrel{\triangleright}\joinrel\blacktriangleleft}{x}}}
\def\poi#1{\{ #1 \}}
\def\demi{\frac{1}{2}}
\def\mone{^{-1}}
\def\kmone{\kappa\mone}
\renewenvironment{abstract}{%
    \if@twocolumn
      \section*{\abstractname}%
    \else 
      \begin{center}%
        {\bfseries\sffamily\abstractname\vspace{\z@}}
      \end{center}%
      \quotation
    \fi}
    {\if@twocolumn\else\endquotation\fi}
\title{Polyhedron phase space using 2-groups: \\   $\kappa$-Poincar\'e as a  Poisson 2-group}
\begin{document}

\author[2]{\sffamily Florian Girelli\thanks{florian.girelli@uwaterloo.ca}}
\author[1]{\sffamily Matteo Laudonio\thanks{matteo.laudonio@u-bordeaux.fr}}
\author[2]{\sffamily Panagiotis Tsimiklis\thanks{ptsimiklis@uwaterloo.ca}}

\affil[1]{\small LABRI, Universit\'e de Bordeaux, 351 Cours de la Libération, 33400 Talence, France}
\affil[2]{\small Department of Applied Mathematics, University of Waterloo, 200 University Avenue West, Waterloo, Ontario, Canada, N2L 3G1}

\maketitle
\begin{abstract}
      We construct a phase space for a three dimensional cellular complex with decorations on edges and faces using crossed modules (strict 2-groups) equipped with a (non-trivial) Poisson structure. We do not use the most general crossed module, but only the ones where the target map ($t$-map) is trivial. As a particular case, we recover that  deformations of the Poincar\'e group can be exported to deformations of the Poincar\'e 2-group. The $\kappa$-deformation case provides a natural candidate for describing discrete  geometries with curved edge decorations (but still flat face decorations).  Our construction generalizes the classical phase space defined in \cite{Asante:2019lki} for  a 3d triangulation in terms of an un-deformed Poincar\'e 2-group.  
\end{abstract}

\maketitle
\tableofcontents

\section{Introduction}
Topological quantum field theories (TQFT) \cite{Atiyah:1989vu} have gathered a lot of interest for many different reasons. From a mathematical point of view they can be used to construct topological invariants \cite{Witten:1989mi, Crane:1994ji}. From a physical perspective, they can be used to construct quantum gravity models. 3d gravity is itself a topological theory \cite{Witten:1988hc}.  4d gravity can be seen as a constrained topological theory \cite{Plebanski:1977zz} so one can constrain the relevant TQFT to build a quantum gravity model \cite{Crane:1993vs, Barrett:1995mg} --- this is the spinfoam approach \cite{rovelli2004}. From the condensed matter perspective, topological order have been an active and fruitful research direction \cite{Wen:1989iv, Wen:2012hm}, with application in quantum information theory \cite{Kitaev1997}. 

TQFT's in 3d are naturally described in terms of monoidal (spherical) categories \cite{Barrett:1993ab}, which appear when dealing with representations of quantum groups or Hopf algebras \cite{Majid:1996kd}. The TQFT is encoded in a state sum, a triangulation/cellular complex decorated by representations of the (quantum) group such as the Turaev-Viro model \cite{TV}. The state sum can be seen as a transition amplitude between initial and final states which are given in terms of triangulation/cellular decomposition of 2d surfaces which bound the complex. These states have sometimes been called \textit{spin networks} in the gravity context. They were first introduced by Penrose \cite{Penrose71angularmomentum:} who intended to reconstruct space(-time) through colliding spinning tops with conserved angular momentum. The spin network states naturally encode  quantum states for the 2d triangulation/cellular complex. In particular an intertwiner encodes the quantum state of a (curved) polygon \cite{Dupuis:2013haa}. We can construct a classical picture behind these quantum states, ie a phase space for the 2d triangulation \cite{Bonzom:2014wva, Bonzom:2014bua}. This phase space can be deduced from the discretization of a BF action, possibly with a non-zero cosmological constant (which specifies then the quantum group deformation)\cite{Dupuis:2020ndx}.

We emphasize that the phase spaces  we consider are more general than the typical cotangent bundle phase space. Indeed, the classical version of a quantum group is a Lie group equipped with a non-trivial Poisson bracket. Quantum group symmetries are classically given by a group of symmetries equipped with non-trivial Poisson bracket. Poisson Lie groups encode  symmetries of phase spaces which have curvature in momentum space. The relevant phase space is then called a \textit{Heisenberg double} \cite{Alekseev_1994, SemenovTianShansky:1993ws, SemenovTianShansky:1993ws}.

Spin networks also naturally appear when quantizing 4d BF theory. In fact, by discretizing 4d BF theory (with no cosmological constant) we recover the same type of phase space we have obtained when dealing with a 2d triangulation. The reason  is simple a posteriori and is encoded in Minkowski's theorem \cite{alexandrov}.  This theorem states that given a (non-planar) polygon  one can construct a polyhedron where the edges of the polygon are interpreted as the area normals of the faces. Hence if we  describe a polyhedron in terms of its faces and have no information regarding the edges, it is essentially the same as a polygon. Bluntly, given a spin network, we actually cannot know whether we are  dealing with a 2d or 3d space due to this equivalence between polygon/polyhedron.    

Describing the quantum states of 3d geometry (for a 4d spacetime/manifold) in terms of spin networks can be viewed as problematic for several reasons. 
From a topological perspective, one expects that the proper way to capture topological features are  through 2-categories \cite{Baez:1995xq, MACKAAY1999288, MACKAAY2000353, Lurie:2009keu}, where both the edges \textit{and} the faces of the triangulation are decorated. These 2-categories should be encoding the representations of a quantum 2-group or Hopf 2-algebra.   While  a definition of such quantum strict 2-group has been proposed in \cite{Majid:2012gy},  there are not so many explicit examples, especially involving Lie groups, to the best of our knowledge. Furthermore, classifications of representations are lacking for most  Lie 2-groups, to the notable exception of the Poincar\'e 2-group \cite{Baez:2008hz}. There is also no generalization of the Peter-Weyl theorem -- which is instrumental in defining quantum states in terms of spin networks --  for general 2-groups.    Quantum 2-groups should have a classical version, given by Poisson 2-groups. Again, while they have been formally defined \cite{chen2013}, to the best of our knowledge there are not many explicit examples. A state sum model using 2-complexes has been built by Yetter \cite{Yetter:1992rz, Yetter:1993dh} using finite groups. Yetter's state sum model was later showed to be a discretization of the transition amplitude characterized by an action with 2-group symmetries called the BFCG action  when dealing with Lie groups \cite{Girelli:2007tt, Martins:2010ry}. The initial and final states associated with this transition amplitude are now given in terms of a categorification of the notion of spin networks, which we would coin \textit{ 2-spin networks}\footnote{They would possibly be  related to the notion of G-networks introduced in \cite{Asante:2019lki}, when dealing with the Poincar\'e 2-group.}. One expects that the classical picture behind these states to be given in terms of a phase space defined in terms of 2-groups. While it is tempting to say that it could be some type of 2-Heisenberg double, such object has not been defined to the best of our knowledge.

If spin networks cannot sufficiently capture topological features of 3d geometries, they may also be lacking as the main tool of a quantum gravity model.
In fact we can argue further that only having access to the face information and no decorations on edges is problematic. After all, edge decorations should be related to some discretization of the  metric degrees of freedom, which we would like to have. Again, putting decorations on both edges and faces naturally points to the concept of 2-group, which in turn naturally points to using a BFCG type action with  constraints to recover gravity and build the gravity model \cite{Mikovic:2011si}. With this mind, some phase space structure associated to a 3d triangulation was introduced in \cite{Asante:2019lki} to relate the discretization of a Poincar\'e 2-group \textit{BFCG} discretization to a state sum model based on 2-representations of the Poincar\'e 2-group. This phase space  is the classical version of 2-spin networks defined for the Poincar\'e 2-group and describes ``flat" triangulation, in the sense that the variables decorating the edges\footnote{In the following, we will use edges and faces or triangles to denote resp. 1d and 2d objects in the triangulation. Links and wedges (sometimes dual faces) will denote resp. 1d and 2d objects in the dual 2-complex.} are element in an abelian group. 

\medskip 

The key question we want to address here is whether we can generalize this construction to curved polyhedra, namely with edge decorations with value in a non-abelian group\footnote{The Minkowski theorem was generalized to the curved case \cite{Haggard:2015ima}, where decorations of the faces are specified by a non-abelian group. However, there is an hidden dependence on some edge variables with value in a non-abelian group. It would be interesting to see how this is consistent with a 2-group picture. We thank A. Riello for the clarification.}. This would imply in particular to have non-trivial Poisson brackets for the decorations associated to the faces  dual to the edges.  Hence, upon quantization this would amount to dealing with a quantum 2-group.

\medskip

We do not solve this problem in full generality, but we provide some new examples based on a specific class of 2-groups, namely \textit{strict} 2-groups\footnote{We will essentially work with crossed modules as strict 2-groups.} with trivial $t$-map, which we will call \textit{skeletal} 2-groups following \cite{Baez:2008hz}. skeletal 2-groups always have abelian group decorations on the faces.  The Poincar\'e 2-group or the (co-)tangent 2-group are natural examples of skeletal 2-groups \cite{Baez:2010ya}. Because the $t$-map is trivial, a skeletal 2-group behaves very much as a group, the main difference being in the geometrical interpretation of the different products of the (sub)groups. This is the technical insights that allows us to circumvent the general definition of a ``2-Heisenberg double" and deal with usual Heisenberg doubles, built from skeletal 2-groups. We can then use symplectic reduction to fuse different phase spaces to construct the general phase space of a 3d triangulation with both decoration on edges and faces. Note that  symplectic reduction is  a natural tool to glue regions containing gauge degrees of freedom, see for example \cite{Riello:2017iti, Riello:2021lfl}.  

An initially unexpected outcome of the construction is that one natural deformation of the Poincar\'e 2-group is actually specified by the $\kappa$-Poincar\'e deformation \cite{Lukierski:1992dt, Majid:1994cy}. Such deformation is a well known deformation of the Poincar\'e group, which generates a non-commutative Minkowski space-time and is used to study quantum gravity phenomenology\cite{AmelinoCamelia:1999pm}. Here it naturally arises when switching curvature on the edge decorations. Once again, due to the similarity between groups and skeletal 2-groups, it is only a matter of interpretation to see the (classical) $\kappa$-Poincar\'e group  as a Poincar\'e 2-group equipped with a non-trivial Poisson bracket. The Heisenberg double we will consider as our main example 
and  will use as our basic building block to build the triangulation phase space,
will be based on the $\kappa$-Poincar\'e (2-)group and its dual (called the $\kappa$-Poincar\'e algebra). 

\medskip

In Section \ref{Sec_MathematicalToolBox}, we provide all the mathematical ingredients we will need to define the phase space for a 3d triangulation. We recall the definitions of Poisson Lie groups, Heisenberg double and the concept of symplectic reduction when the momentum map is with value in a (non-)abelian group. We explain how this symplectic reduction allows to fuse/glue phase spaces. We move then to recall the notion of strict 2-groups and Poisson strict 2-groups. We provide some examples when the strict 2-group is simple, ie the $t$-map is trivial. We show then how the strict skeletal 2-groups can be embedded in a Heisenberg double and focus on the construction of the $\kappa$-Poincar\'e 2-group and its dual. This provides the main example of a non-trivial   Poisson 2-group which can be used to generate non-trivial decorations on the triangulation edges. 

In Section \ref{sec:polygon}, we recall how the notion of a Heisenberg double together with symplectic reduction can be used to build the phase space of a 2d triangulation. We then argue using Minkowski's theorem how this construction can be extended to the 3d case. 

{Section \ref{Sec_2-groupPhaseSpace}  is in some sense the main section of the paper. It details the construction of the phase space for a 3d triangulation in terms of Heisenberg doubles defined in terms of skeletal 2-groups. We provide the general gluings we need to do to construct the phase space of an arbitrary triangulation.  We show in particular how the phase space constructed in \cite{Asante:2019lki} is a special example of the construction.} 

In the concluding section, we discuss the many new directions that this work opens up.

\section{Mathematical tool box}
\label{Sec_MathematicalToolBox}

In this section, we review the notions of Poisson groups and 2-groups, as well as Heisenberg/Drinfeld doubles. We highlight that for a certain class of 2-groups, namely the ones for which the $t$-map is trivial, they can be seen as groups and hence embedded in the Heisenberg/Drinfeld doubles framework. We  show that the Poincar\'e 2-group can be equipped with a non-trivial Poisson bracket, which is the same as the ones deforming the standard Poincar\'e group into the $\kappa$-Poincar\'e group. Furthermore we discuss how  Majid's semi-dualization \cite{Majid:1996kd, Majid:2008iz}  can be used to define different 2-groups. We recall the symplectic reduction framework, in the case of group valued momentum maps, which will be the main tool we will use to build phase spaces for a triangulation. 

\subsection{Heisenberg double, symmetries and symplectic reduction}
\label{Sec_Heisenberg-Drinfeld}

\paragraph{Poisson Lie group.} 
\begin{definition}\cite{Chari:1994pz, Majid:1996kd}
Let $G$ be a Lie group with Poisson brackets $\poi{\cdot,\cdot}_G$. If the group multiplication $\mu: G \times G\to G$ is a Poisson map then $G$ is said to be a Poisson Lie group. Explicitly, for $g,h \in G$ and $f_1,f_2 \in C^\infty (G)$, the compatibility is
\begin{align}
    \poi {f_1, f_2}_G (\mu(h,g)) = \poi{ f_1\circ R_g, f_2 \circ R_g}_G(h)+ \poi{ f_1\circ L_h, f_2\circ L_h}_G(g), \label{Poisson Lie group}
\end{align}
where $L_h$ is the left translation by $h$ and $R_g$ is the right translation by $g$. 
\end{definition}

We can equivalently describe the Poisson brackets by the Poisson bivector $\pi$, given by $\poi {f_1,f_2}_G(g) = \pi_g(df_1\wedge df_2)$. By evaluating \eqref{Poisson Lie group} at $g=h=e$ (where $e$ is the identity of $G$) we find that the Poisson 
bivector vanishes at the identity, $\pi_e =0$. It follows that the  Poisson structure of a Poisson Lie group is not symplectic. 

\begin{definition}\cite{Chari:1994pz, Majid:1996kd}
The associated Lie algebra $\g$ of a Poisson Lie group $G$ is a Lie bi-algebra; it is the Lie algebra of $G$, with its Lie bracket and an additional structure, the co-cycle map $\delta_{\g}: \g \to \g \wedge \g$ derived from the Poisson bivector of $G$. The co-cycle map satisfies the co-Jacobi identity: cyclic permutations of $(\delta\otimes id)\circ \delta $ sum to zero. It also satisfies the co-cycle condition,
\begin{align}
    \delta [x,y] = ad_x\delta y -ad_y \delta x,\quad  \forall x, y \in \g. 
\end{align}
\end{definition}

\paragraph{Classical double.}
Consider the pair of Lie bi-algebras $\g$ and $\g^*$, such that there exists   a bilinear map $\la \,,\,\ra : \g \times \g^* \to \bbR$ as
\be
    \la e^i , e^*_j \ra = \delta^i_j ,
    \label{Lie_bi-linear_map}
\ee
where $e^i\in \g$ and $e_j^*\in \g^*$ are the generators of their respective Lie bi-algebras. 
Let the Lie brackets and co-cycles of the Lie bi-algebra $\g$ be
\be
    [e^i , e^j] = {c^{ij}}_k e^k  
    \,\, , \quad
    \delta_{\g} (e^i) = \demi {d^i}_{jk} e^j \wedge e^k .
\ee
Let the Lie brackets and co-cycles of $\g^*$ be defined by the conditions
\be
    \la e^i , [e^*_j , e^*_k] \ra = \la \delta_{\g} (e^i) , e^*_j \ot e^*_k \ra
    \,\, , \quad
    \la [e^i , e^j] , e^*_k \ra = \la e^i \ot e^j , \delta_{\g^*} (e^*_k) \ra .
    \label{Liedual:brackets_co-cycle}
\ee
The Lie bi-algebra $\g^*$ is thus the Lie bi-algebra with Lie brackets and co-cycles respectively dual to the co-cycles and Lie brackets of $\g$:
\be
    [e^*_i , e^*_j] = {d^k}_{ij} e^*_k
    \,\, , \quad
    \delta_{\g^*} (e^*_i) = \demi {c^{jk}}_i e^*_j \wedge e^*_k .
\ee
The co-adjoint actions between $\g$ and $\g^*$ are 
\be
    ad^*_{e^j} (e^*_i) = e^*_i \lhd e^j := \la e^*_i , [e^j , e^k] \ra e^*_k = {c^{jk}}_i e^*_k
    \,\, , \quad
    ad^*_{e^*_i} (e^j) = e^*_i \rhd e^j := - \la [e^*_i , e^*_k] , e^j \ra e^k = {d^j}_{ik} e^k .
    \label{Co-Adjoint_Action}
\ee
The adjoint action of any Lie algebra on itself is given by its Lie brackets. The bilinear map \eqref{Lie_bi-linear_map} encodes the fact that the adjoint action of a Lie algebra on itself, is dual to the co-adjoint action of that algebra on its dual:
\be
    \la [e^i, e^j] , e^*_k \ra = \la e^j , ad^*_{e^i} (e^*_k) \ra
    \,\, , \quad
    \la e^i , [e^*_j , e^*_k] \ra = \la ad^*_{e^*_j} (e^i) , e^*_k \ra .
\ee

\begin{definition}[Classical double.]
\label{Def_ClassicalDouble}
    \cite{Chari:1994pz, Majid:1996kd}
    The classical double of $\g$ is the double cross sum Lie bi-algebra $\d = \g^{cop} \bowtie \g^*$, equipped with with the element $r = e^*_i \ot e^i \, \in \, \g^* \ot \g$, called classical $r$-matrix. Here $\g^{cop}$ is the Lie bi-algebra with the same Lie brackets of $\g$ and opposite co-cycle: $\delta_{\g^{cop}} (e^i) = - \delta_{\g} (e^i)$. As a Lie algebra, $\d$ is defined by the brackets
    \be
        [e^i , e^j] = {c^{ij}}_k e^k
        \,\, , \quad
        [e^*_i , e^*_j] = {d^k}_{ij} e^*_k
        \,\, , \quad
        [e^*_i , e^j] = e^*_i \rhd e^j + e^*_i \lhd e^j = {d^j}_{ki} e^k + {c^{jk}}_i e^*_k ,
        \label{ClassicalDouble_LieBrackets}
    \ee
    where the cross brackets are induced by the co-adjoint actions \eqref{Co-Adjoint_Action}. As a Lie co-algebra, $\d$ is defined by the co-cycles
    \be
        \delta_{\d} (e^i) = - \delta_{\g} (e^i)
        \,\, , \quad
        \delta_{\d} (e^*_i) = \delta_{\g^*} (e^*_i) .
        \label{ClassicalDouble_Cocycles}
    \ee
    The classical $r$-matrix can be split into its symmetric and antisymmetric parts
    \be
        r_- = \demi (r - r^t) = \demi \sum_i e^*_i \wedge e^i
        \,\, , \quad
        r_+ = \demi (r + r^t) .
    \ee
    The co-cycles \eqref{ClassicalDouble_Cocycles} can equivalently be expressed through the co-boundary condition,
    \be
        \delta_{\d} (x) = [x \ot 1 + 1 \ot x , r_-] 
        , \quad \forall x \in \d , 
        \label{Co-boundaryEq}
    \ee
    making $\d$ a co-boundary Lie bi-algebra. 
\end{definition}

\subsubsection{Heisenberg and Drinfeld doubles}

In the following we consider the cases where the cocycle of $\d$ is given by a $r$-matrix.

\paragraph{Heisenberg double.} 

\begin{figure}
    \centering
       \begin{tikzpicture}
    
    \node[anchor=south west,inner sep=0] (image) at (0,0)
    {\includegraphics[width=0.3\textwidth]{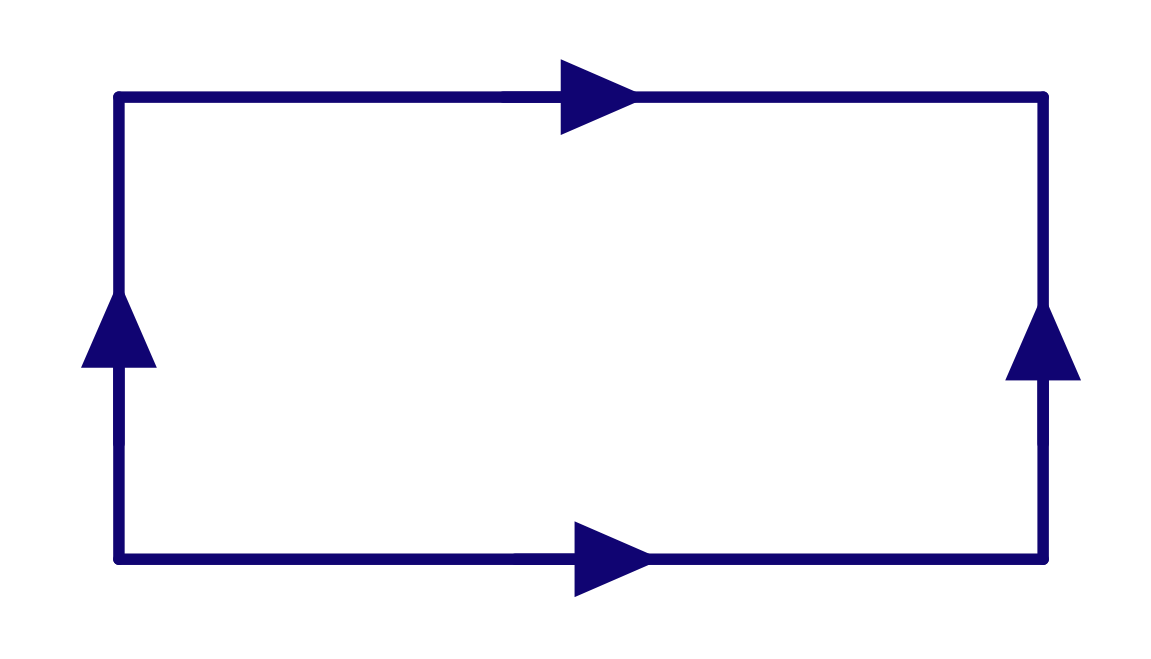}};
    \begin{scope}[x={(image.south east)},y={(image.north west)}]
    \node at (0,0.5) {$\ell$};
    \node at (0.5,0) {$\th$};
    \node at (0.5,1) {$h$};
    \node at (1,0.5) {$\tell$};
    \end{scope}
    \end{tikzpicture}
 
    \caption{The two decompositions of an element in the Heisenberg double are depicted by a ribbon diagram}
    \label{fig:rib}
\end{figure}
Upon exponentiation, the components, $\g$ and $\g^*$ of the classical double $\d$ generate respectively the group $G$ and $G^*$. Similarly, we define the (simply connected) Lie group generated by $\d$ to be  $\cD$. The group $\cD$ equipped with the Poisson bi-vector
\be
    \pi_+ (d) = - [d \ot d , r_-]_+, \quad d\in\cD,
\ee
is called the Heisenberg double of $\cD$ \cite{Chari:1994pz, Majid:1996kd, SemenovTianShansky:1993ws, Alekseev_1994}  and is denoted $\cD_+$ (or simply $\cD$ if there is no confusion  with the Drinfeld double --- to be defined shortly). $[\cdot,\cdot]_+$ stands for the anticommutator. The Heisenberg double can be factorized\footnote{This splitting will not be global in general. For our purpose it is sufficient to consider elements $d\in\cD$ close to the identity so that this splitting is true. We refer to \cite{Alekseev_1994} for the general case. } as $\cD_+ \cong G\bowtie G^* \cong G^* \bowtie G$. With $h,\th \in G$ and $\ell,\tell \in G^*$, the so-called \textit{ribbon relation} follows from these two splittings:
\be
    d = \ell h = \th \tell . \label{ribbon}
\ee
The ribbon relation can be written diagramatically as in Fig.\ \ref{fig:rib}. The ribbon decomposition allows us to define the mutual actions between $G$ and $G^*$:
\be
    \th = \ell \rhd h, \quad 
    \tell = \ell\lhd h, \quad 
    \ell = \th\rhd \tell, \quad 
    h = \th\lhd \tell. \label{actions from ribbons}
\ee
From the Poisson bivector $\pi_+$, we can derive the Poisson brackets of the Heisenberg double; these Poisson brackets are symplectic, thus even if $G$ and $G^*$ are Poisson Lie groups, the Heisenberg double is not. Being symplectic, $\cD_+$ is instead considered as a phase space with symplectic form \cite{Alekseev_1994, SemenovTianShansky:1993ws}
\be
    \Omega = \demi \big(\la \Dr \ell \wedge \Dr \th \ra + \la \Dl \tell \wedge \Dl h \ra \big)
    \label{SymplecticForm}
\ee 
where $\Dr u= \delta u u\mone$ and $\Dl u=u\mone \delta u$ are respectively the right and left Maurer Cartan forms.

\paragraph{Drinfeld double.}
The Lie group $\cD$ equipped with the Poisson bivector
\be
    \pi_-(d) = [d \ot d , r_-], \quad d\in \cD,
\ee
is a Poisson Lie group called \textit{Drinfeld double} \cite{Chari:1994pz, Majid:1996kd, SemenovTianShansky:1993ws, Alekseev_1994} and denoted  $\cD_-$. Here,  $[\cdot,\cdot]$ are the usual commutators. The co-cycles of the classical double $\d$ are the infinitesimal limit of the Poisson bivector $\pi_-$ associated to the Poisson brackets of $\cD_-$.

\medskip

The Lie group $\cD$ naturally acts on itself by the left or right multiplication. The action of the Drinfeld double $\cD_-$ on the Heisenberg double $\cD_+$ is  a Poisson map, i.e. the symplectic Poisson bracket is covariant under the action of $\cD_-$ \cite{SemenovTianShansky:1993ws}. We can focus in particular at the action of the subgroups $G$ and $G^*$ on $\cD_+$.

\paragraph{$G$ transformations on the left.} 
Let $\alpha \in \g$. The infinitesimal (left) action $\delta^L_\alpha$ of  $G \ni h'\sim 1 + \alpha$ on $ \cD_+\ni d$, induced by  the left group multiplication is 
\be 
    \label{rotsymleft} 
    h' d \sim (1 + \alpha) d 
    \quad \textrm{with} \quad 
    d = \ell h = \th \tell .
\ee
We deduce then the transformations for the sub-components \cite{Dupuis:2020ndx}
\be \label{rotsymleft1} 
    \delta^L_{\alpha} g = \alpha g
    \quad \to \quad
    \begin{array}{|l l}
        &
        \delta^L_{\alpha} \th = \alpha \th 
        \,\, , \quad 
        \delta^L_{\alpha} \tell = 0 , \\
        &
        \delta^L_{\alpha} \ell = \alpha \ell - \ell (\alpha \lhd \ell) 
        \,\, , \quad
        \delta^L_{\alpha} h = (\alpha \lhd\ell) h .
    \end{array}
\ee

\paragraph{$G^*$ transformations on the left.} 
A similar calculation can be performed for the infinitesimal (left) transformations $\delta^L_{\phi}$ of $G^* \ni \ell'\sim 1 + \phi$, with $\phi \in \g^*$
\be
    \ell' d \sim (1 + \phi) d 
    \quad \textrm{with} \quad
    d = \ell h = \th\tell ,
\ee
from which, the transformations for the sub-components are \cite{Dupuis:2020ndx}
\be \label{transsymleft1} 
    \delta^L_{\phi} d = \phi d
    \quad \to \quad
    \begin{array}{| l l }
        &
        \delta^L_{\phi} \ell = \phi \ell 
        \,\, , \quad 
        \delta^L_{\phi} h = 0 , \\
        &
        \delta^L_{\phi} \th = \phi \th - \th (\th\mone \rhd \phi) = \phi \th - \th (\phi \lhd \th) 
        \,\, , \quad
        \delta^L_{\phi} \tell = (\phi \lhd \th) \, \tell .
    \end{array}
\ee

\subsubsection{Momentum maps and symplectic reduction}
\label{Sec_SymplecticReduction}

In this section we review the concepts of momentum maps and symplectic reduction applied to our specific case of interest. We refer to  \cite{alekseev1998} for the general set up. The symplectic reduction is the essential tool to build up the phase spaces for the triangulation we will consider. To take an analogy, if Heisenberg doubles can be seen as bricks then the symplectic reduction can be interpreted as the cement. 
 
\paragraph{Momentum maps.} For what concerns us, momentum maps are functions on phase space which generate the infinitesimal symmetry transformations of interest using the symplectic Poisson bracket. We will be exclusively focused on the symmetry transformations given by the left or right translations on $\cD$, the Heisenberg double, given by the action of $G$, $G^*$ and possibly some of their subgroups. These actions are free and proper.  
 
In the simple example of a phase space $T^*\mathbb{R}\sim \R\times \R$ with coordinates $(x,p)$, the momentum $p$ generates the infinitesimal translations in $x$, while $x$ generates the infinitesimal translations in $p$. Hence these coordinates can be seen as momentum maps. More explicitly, if each type of infinitesimal translation is 
such that \begin{align}
&\delta_\epsilon x= \epsilon, \delta_\epsilon p=0, \quad 
\delta_\eta x= 0, \delta_\eta p=\eta,
\end{align}
where $\delta_{\epsilon,\eta}$ is understood as the vector field generating the transformation, and $\Omega=\delta x\wedge \delta p$,
then 
\be\label{ex1}
    \delta_{\epsilon} \ip \Omega = \epsilon \delta p= \la \epsilon\, ,\, \delta p\ra, \quad  \delta_{\eta} \ip \Omega =  - \eta \delta x=-\la \eta\, , \, \delta x \ra,
\ee
where we introduce the canonical pairing between $\R$ and $\R^*\sim \R$ and took  $\epsilon\in\R$ and $\eta\in\R^*$. Hence when the manifold of interest is given by a phase space, we see that some momentum maps are just given by the projection either on configuration or momentum coordinates. This concept will extend to the more general Heisenberg double we will consider. 

We denote by $\sigma$ the left Poisson action of  the  Poisson Lie group  $G$ on a phase space $\cP$, and $G^*$ the Poisson Lie group dual to $G$. We denote by $X^L$ the left invariant 1-form on $G^*$ which takes value $\alpha$ at the identity of $e$.  $\delta_X$ is the vector field generating the infinitesimal version of $\sigma$.  
\begin{definition}
The momentum map of the action $\sigma$ is the $C^\infty$ map $\mathcal{J}: \cP\rightarrow G^*$ if
\begin{align}
    \delta_X  \ip \Omega = \mathcal{J} ^* ({X^L}) .
\end{align}
    We say that the momentum map is equivariant if the momentum map commutes with the symmetry action. 
\end{definition}
In the simple example $T^*\R$, since $\R$ is abelian, left or right invariant 1-forms coincide.  In the more general  
Heisenberg double context, since configuration and momentum space are non-abelian groups, we have two types of translation available, on the left and on the right.

\begin{proposition}
We consider the group element $d=\ell h=\th\tell\in\cD$.  The  $G^*$ (resp. $G$) group element $\ell$ (resp. $\th$) generates the infinitesimal left $G$ transformation \eqref{rotsymleft1}  (resp.  the infinitesimal left $G^*$ transformation \eqref{transsymleft1})
\be
    \delta^L_{\alpha} \ip \Omega = \la \alpha \, , \, \Dr \ell \ra, \quad  \delta^L_{\phi} \ip \Omega = - \la \phi ,\, \Dr \th\, \ra, \quad \alpha\in \g, \, \phi\in\g^*. 
\ee
 $\ell:= \cJ^L_{G^*}(d)$ and $\th:=\cJ^L_{G}(d)$ are therefore  {momentum maps}.  In a similar manner, $h$ and $\tell$ are respectively the momentum map for the right transformations of $G^*$ and $G$.
\end{proposition}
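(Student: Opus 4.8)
The plan is to verify the momentum map condition $\delta_X \ip \Omega = \mathcal{J}^*(X^L)$ directly from the explicit symplectic form \eqref{SymplecticForm} and the explicit infinitesimal transformations \eqref{rotsymleft1} and \eqref{transsymleft1}, and then observe that the claimed maps are exactly the projections $d \mapsto \ell$, $d\mapsto \th$ read off the ribbon decomposition \eqref{ribbon}. First I would recall that $\Omega = \tfrac12(\la \Dr\ell \wedge \Dr\th\ra + \la \Dl\tell \wedge \Dl h\ra)$, and that contracting with the vector field $\delta^L_\alpha$ means replacing one slot of each wedge by the variation of the relevant field under \eqref{rotsymleft1}: $\delta^L_\alpha \th = \alpha\th$, $\delta^L_\alpha \tell = 0$, $\delta^L_\alpha \ell = \alpha\ell - \ell(\alpha\lhd\ell)$, $\delta^L_\alpha h = (\alpha\lhd\ell)h$. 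Since $\delta^L_\alpha\tell = 0$ the second term in $\Omega$ contributes only through $\delta^L_\alpha h$, and the first term contributes through both $\delta^L_\alpha\ell$ and $\delta^L_\alpha\th$; I would compute the induced variations of the Maurer-Cartan forms, e.g. $\delta^L_\alpha(\Dr\ell) = \delta(\delta^L_\alpha\ell\,\ell^{-1})$ and $\delta^L_\alpha(\Dr\th) = \delta(\delta^L_\alpha\th\,\th^{-1}) = \delta\alpha = 0$ treating $\alpha$ as constant.

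The key simplification is that $\delta^L_\alpha(\Dr\th)=0$, so in the first term of $\Omega$ only the $\Dr\ell$ slot gets hit, giving $\tfrac12\la \delta^L_\alpha(\Dr\ell) \wedge \Dr\th\ra$ minus $\tfrac12\la \Dr\ell \wedge (\text{something})\ra$ — here I must be careful with the antisymmetry of the wedge, so contracting a 2-form $a\wedge b$ with a vector gives $(\iota_X a) b - (\iota_X b) a$. Carrying this through, together with the analogous manipulation of the $\Dl\tell\wedge\Dl h$ term using $\delta^L_\alpha h = (\alpha\lhd\ell)h$ and $\delta^L_\alpha\tell=0$, I expect the cross terms involving $\alpha\lhd\ell$ to cancel between the two halves of $\Omega$, leaving exactly $\la\alpha, \Dr\ell\ra$. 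The right-invariant 1-form on $G^*$ taking value $\alpha$ at the identity pulls back under $\mathcal{J}^L_{G^*}: d\mapsto\ell$ to $\la\alpha,\Dr\ell\ra$, which is precisely $\mathcal{J}^{*}(X^L)$ with the appropriate handedness convention; this identifies $\ell$ as the momentum map. The computation for $\delta^L_\phi$ is entirely parallel using \eqref{transsymleft1}, with the roles of the two terms in $\Omega$ swapped and $\tell$ replaced by $h$, yielding $-\la\phi,\Dr\th\ra$ and identifying $\th$ as the momentum map for left $G^*$ translations; the sign is the expected one, matching the pattern in the toy example \eqref{ex1}.

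For the statement about $h$ and $\tell$ being momentum maps for the right translations, I would either repeat the computation with the right action $d\mapsto dg'$, $d\mapsto d\ell'$ (whose infinitesimal form can be obtained from \eqref{rotsymleft1}-\eqref{transsymleft1} by a left-right mirror of the ribbon, or derived afresh), or invoke the symmetry of the setup: the ribbon relation \eqref{ribbon} is symmetric under swapping the two factorizations, and $\Omega$ in \eqref{SymplecticForm} is manifestly symmetric under $(\ell,\th)\leftrightarrow(\tell,h)$ combined with $\Dr\leftrightarrow\Dl$, i.e. $g\mapsto g^{-1}$, which exchanges left and right translations. Under this involution the first part of the argument maps onto the claim for $h$ and $\tell$, so no new computation is strictly needed. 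The main obstacle I anticipate is bookkeeping: getting the Maurer-Cartan variations and the wedge-contraction signs consistent, and making sure the left/right-invariant-form conventions in the definition of the momentum map match the handedness of $\Dr$ versus $\Dl$ appearing in $\Omega$ — in particular checking that the cross terms proportional to $\alpha\lhd\ell$ (resp. $\phi\lhd\th$) really do cancel rather than combine, which is the one place where a sign error would silently break the result.
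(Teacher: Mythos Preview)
Your proposal is the natural direct verification and is essentially correct in outline. Note, however, that the paper does not actually prove this proposition: it simply states ``Proof of this proposition is given in \cite{Dupuis:2020ndx}.'' So there is no in-paper argument to compare against; your computation is presumably what that reference does.

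One small correction to your write-up: you say ``$\delta^L_\alpha(\Dr\th)=0$, so in the first term of $\Omega$ only the $\Dr\ell$ slot gets hit''. This conflates two different things. The interior product $\iota_X(a\wedge b) = (\iota_X a)\,b - a\,(\iota_X b)$ requires you to evaluate the \emph{forms} $\Dr\ell$ and $\Dr\th$ on the vector field $\delta^L_\alpha$, not to vary those forms. Concretely, $\iota_{\delta^L_\alpha}\Dr\th = (\delta^L_\alpha\th)\,\th^{-1} = \alpha$, which is certainly not zero; it is precisely this contribution, paired with $\Dr\ell$, that produces the desired $\la\alpha,\Dr\ell\ra$. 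The quantity $\delta^L_\alpha(\Dr\th)$ (a Lie derivative of the form) plays no role here. With that bookkeeping fixed, your expectation that the $(\alpha\lhd\ell)$ cross terms from the two halves of $\Omega$ cancel is correct --- this uses the ribbon relation $\ell h=\th\tell$ to relate $\Dr\th$, $\Dl h$, $\Dl\tell$ and the $\mathrm{Ad}$-invariance of the pairing --- and the symmetry argument for the right translations is valid.
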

Proof of this proposition is given in \cite{Dupuis:2020ndx}. Hence the momentum maps we will consider are given by the projections
\begin{align}
    \cJ_G^R : \cD\sim G^*\bowtie G \dr G, 
    \quad
    \cJ_{G^*}^R : \cD\sim G\bowtie G^* \dr G^*, \\ 
    \cJ_G^L : \cD\sim G\bowtie G^* \dr G, 
    \quad
    \cJ_{G^*}^L : \cD\sim G^*\bowtie G \dr G^*.
\end{align}
These momentum maps are clearly equivariant. For example, consider $\cJ_{G^*}^L$, (we recall that $G$ acts on $\cD$ by left or right multiplication),
\be
    \cJ_{G^*}^L (h' d)= 
    \cJ_{G^*}^L (h' \ell h )= \cJ_{G^*}^L \big((h'\rhd \ell )((h'\lhd\ell) h)   \big) = h'\rhd \ell= h'\rhd  \cJ_{G^*}^L (d), 
    \quad d\in\cD, \, h',h\in G, \, \ell\in G^*. 
\ee

\paragraph{Some properties of momentum maps.}
We can list other relevant properties of momentum maps, all discussed in \cite{alekseev1998}. The following statements, made for the action of $G$, also apply to the action of $G^*$. 
\begin{itemize}
\item If we can decompose the group $G$  (resp. $G^*$) into subgroups $G\sim G_1\bowtie G_2$ (resp.  $G^*\sim G^*_1\bowtie G^*_2$), then the projection of $\cD$ into $G_i$ also defines a momentum map and generates the infinitesimal right $G^*_i$ transformations .
\item We have seen that $\ell\in G^*$ generates the infinitesimal right $G$ translations through the symplectic form $\Omega$. Then  $\ell\mone$ is a momentum map infinitesimal left $G$ translations through the symplectic form $-\Omega$.
\item If we consider the phase space $\cD^{(n)}$ made of $n$ copies of $\cD$,   $\cD^{(n)}=\cD\times...\times \cD$ with symplectic form $\Omega^{(n)}=\sum_i \Omega_i$, then the global right infinitesimal $G$ translation is generated by the momentum map $\ell_1...\ell_n$.   
 \end{itemize}

\paragraph{Symplectic reduction.}
Symplectic reduction is the process of taking a symplectic space with some symmetries and reducing the symmetries (ie by considering functions on the symplectic space which are invariant under the symmetries) such that the resulting space is still symplectic. 
An invariant function can be characterized as a function that commutes with momentum map associated to the symmetries. This is just a different language to discuss Dirac's approach to constrained systems  \cite{dirac2001lectures}. Indeed the constraint corresponds to the requirement that the momentum map is constrained to take a specific value, for example the unit in  $G^*$.  
 
\begin{theorem} \cite{lu, alekseev1998}. \label{thm:main}
Assume $\sigma: G\times \cD\dr \cD$ is a Poisson action of a Poisson Lie group $G$ on the Poisson space $\cD$, with its associated equivariant momentum map $\cJ_{G^*}$. We consider an element $\ell_0\in G^*$ and $G_{\ell_0}$ its isotropy subgroup under the action\footnote{Since $G$ acts on $G^*$, $G_{\ell_0}$ is the set of elements which leave $\ell_0$ invariant.} of $G$ on $G^*$.
If $G$ acts freely and properly on $\cJ\mone(\ell_0)$, then the symplectic form $\Omega$ of $\cD$ is pulled back to the quotient space $\cJ\mone(g)/G_{\ell_0}$, which is denoted\footnote{If $\ell_0$ is the unit, then we write $\cD/\!/G$.} $\cD/\!/G_{\ell_0}$.
\end{theorem}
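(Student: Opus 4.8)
The statement is the non-abelian (Lu--Weinstein, Alekseev--Malkin) analogue of the Marsden--Weinstein symplectic reduction theorem, with a group-valued momentum map in place of the usual $\g^*$-valued one, and the space being reduced genuinely symplectic (it is the Heisenberg double $\cD_+$). The plan is to run the classical four-step scheme: (i) check the level set is a submanifold, (ii) check the isotropy group acts on it and the quotient is smooth, (iii) show the pulled-back form is basic, (iv) show the descended form is symplectic. First I would set $N:=\cJ^{-1}(\ell_0)$ and argue it is an embedded submanifold. Since $G$ acts freely on $N$, equivariance of $\cJ$ (the analogue of the displayed identity $\cJ_{G^*}^L(h'd)=h'\rhd\cJ_{G^*}^L(d)$) forces the image of $d\cJ_d$ to contain the tangent space to the $G$-orbit through $\ell_0$; combined with the momentum-map relation $\delta_X\ip\Omega=\cJ^*(X^L)$ and the nondegeneracy of $\Omega$ one checks there is no room for a larger corank, so $\ell_0$ is a regular value and $T_dN=\ker d\cJ_d$.

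Next, equivariance shows $G_{\ell_0}$ preserves $N$: if $\cJ(d)=\ell_0$ and $h\in G_{\ell_0}$ then $\cJ(h\cdot d)=h\rhd\cJ(d)=h\rhd\ell_0=\ell_0$. The restricted action of $G_{\ell_0}$ on $N$ inherits freeness and properness from that of $G$, so the quotient manifold theorem gives a smooth $N/G_{\ell_0}$ with submersion $p:N\to N/G_{\ell_0}$; write $i:N\hookrightarrow\cD$ for the inclusion. The heart of the argument is to show $i^*\Omega$ is basic, i.e.\ descends uniquely to a $2$-form $\omega_{\mathrm{red}}$ with $p^*\omega_{\mathrm{red}}=i^*\Omega$. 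Invariance under $G_{\ell_0}$ holds because $\Omega$ is preserved by the Poisson action restricted to the isotropy group. Horizontality is where the momentum map enters: the tangent space to a $G_{\ell_0}$-orbit in $N$ is spanned by the generating vector fields $\delta_X$ with $X$ in the Lie algebra of $G_{\ell_0}$, and for such $X$ one has on $N$ that $\delta_X\ip(i^*\Omega)=i^*\big(\cJ^*(X^L)\big)=(\cJ\circ i)^*(X^L)=0$, since $\cJ\circ i$ is the constant map $\ell_0$. Thus the orbit directions lie in $\ker(i^*\Omega)$ and $\omega_{\mathrm{red}}$ is well defined.

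Finally I would verify $\omega_{\mathrm{red}}$ is symplectic. Closedness is immediate: $p^*(d\omega_{\mathrm{red}})=d(i^*\Omega)=i^*(d\Omega)=0$ and $p^*$ is injective on forms. For nondegeneracy I would show $\ker(i^*\Omega)$ is \emph{exactly} the orbit tangent space. Using the momentum-map relation, $v\in\ker d\cJ_d$ iff $v$ is $\Omega$-orthogonal to the tangent space of the full $G$-orbit through $d$, whence $\ker(i^*\Omega)=T_dN\cap(T_dN)^{\perp_\Omega}=\ker d\cJ_d\cap T_d(G\cdot d)$; and by equivariance a generator $\delta_X$ lies in $\ker d\cJ_d$ precisely when $X$ infinitesimally fixes $\ell_0$, i.e.\ $X\in\mathrm{Lie}(G_{\ell_0})$. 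Hence $\ker(i^*\Omega)=T_d(G_{\ell_0}\cdot d)$, exactly what is quotiented out, so $\omega_{\mathrm{red}}$ is nondegenerate and the theorem follows.

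I expect the main obstacle to be the nondegeneracy step carried out honestly in the group-valued setting: the pairing between $\g$ and $TG^*$ is implemented through the left-invariant forms $X^L$ together with the ribbon/dressing actions rather than through a naive $\g^*$-valued moment map, so the identity $(\ker d\cJ_d)^{\perp_\Omega}=T_d(G\cdot d)$ requires carefully tracking these identifications and their compatibility with the dressing action that appears in equivariance. A secondary technical point is establishing cleanly that $\ell_0$ is a regular value from the freeness hypothesis alone; if one prefers, one can simply assume $\ell_0$ regular outright, which does not affect the applications to gluing triangulation phase spaces.
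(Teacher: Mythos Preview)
Your sketch follows the standard Lu--Weinstein route and is essentially correct, but the paper does not actually prove this theorem: immediately after the statement it simply writes ``The general proof of this theorem can be found for example in \cite{lu}'' and moves on to applications. So there is no in-paper proof to compare against; the theorem is quoted as a known tool.

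One point worth tightening in your argument: when you justify invariance of $i^*\Omega$ under $G_{\ell_0}$ by saying ``$\Omega$ is preserved by the Poisson action restricted to the isotropy group,'' that is not quite right in the Poisson--Lie setting. A Poisson action of a Poisson--Lie group with nontrivial cocycle does \emph{not} preserve $\Omega$ in general --- that is precisely what distinguishes it from a Hamiltonian action. What does hold is that $\mathcal{L}_{\delta_X}\Omega = d(\delta_X\ip\Omega) = d\big(\cJ^*(X^L)\big) = \cJ^*(dX^L)$, and this pulls back to zero along $i$ because $\cJ\circ i$ is constant. Hence $i^*\Omega$ is $G_{\ell_0}$-invariant even though $\Omega$ itself need not be. With that correction, your four-step argument goes through and matches what one finds in Lu's work.
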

The general proof of this theorem can be found for example in \cite{lu}. 

The actions we are considering, namely the left of right translations, are free and proper. The associated momentum maps are equivariant. We can therefore use the symplectic reduction. We are going to illustrate the most important cases we will use later to build the phase space for a triangulation.

\subsubsection{Relevant examples of symplectic reduction}\label{sec:main gluings}
The symplectic reduction will be used extensively to glue/fuse phase spaces together and still recover a phase space. There will be three main cases to consider. We emphasize that momentum and configuration variables are treated on a equal footing. Hence while we formulate the constraints in terms of ``momenta" in $G^*$, it would similarly apply  for ``configuration" variables in $G$.  

\paragraph{Gluing rectangular ribbons: identifying momenta pairwise.}

Let us consider a direct product of symplectic phase spaces $\cD\times \cD=\cD^{(2)}$,  with respective symplectic form and ribbon constraint
 \be
     \Omega_a = \demi \big( \la \Dr \ell_a \wedge \Dr \th_a \ra + \la \Dl \tell_a \wedge \Dl h_a \ra \big) 
      , \quad \ell_a h_a= \th_a\tell_a, \quad a = 1,2.
 \ee
 The momentum map generating the global (right) $G$ transformation on $\cD^{(2)}$ is $\tell_1\tell_2$ \cite{alekseev1998}. We consider the associated constraint, or identification,   
 \be
 \cC= \tell_1\tell_2=1 \Leftrightarrow  \tell_1 = \tell_2\mone.
 \ee
We can now construct the symplectic form for the new phase space when we have  performed the symplectic reduction. 
\begin{proposition}\label{prop:glue-ribbon}
The reduced phase space $\cD^{(2)}/\!/ G$ is isomorphic to $\cD$.
\end{proposition}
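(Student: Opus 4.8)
The plan is to exhibit an explicit isomorphism between the reduced phase space $\cD^{(2)}/\!/G$ and $\cD$, and check that it identifies the symplectic forms. First I would parametrize the constraint surface: on $\cJ^{-1}(1) = \{\tell_1\tell_2 = 1\}$ we have $\tell_2 = \tell_1^{-1}$, so a point of the constraint surface is determined by $(\ell_1, h_1, \tell_1)$ subject to $\ell_1 h_1 = \th_1\tell_1$ together with $(\ell_2, h_2)$ subject to $\ell_2 h_2 = \th_2\tell_1^{-1}$. The diagonal right $G$-action $d_a \mapsto d_a g$ acts on the pair $(h_1,h_2)$ (and drags $\tell_a$ along via the ribbon relations). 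The idea is that we can use this gauge freedom to fix, say, $h_1$ (or equivalently to ``absorb'' one copy's configuration data), leaving behind a single ribbon's worth of data — i.e. one copy of $\cD$. Concretely I would define the candidate map $\Phi: \cD^{(2)}/\!/G \to \cD$ by gluing the two ribbons along the identified edge $\tell_1 = \tell_2^{-1}$: reading the composite ribbon from the $\ell_1$ edge through to the $h_2$ edge, set $d := \ell_1 (h_1 h_2) $ or, more symmetrically, express the glued element via $\th := \th_1 \tilde\th$ for an appropriate $\tilde\th$ built from the second copy. The precise bookkeeping of which Maurer–Cartan factors survive is where care is needed.

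Next I would verify that $\Phi$ is well-defined on the quotient (invariant under the diagonal $G$-action) and bijective. Well-definedness follows because the glued element is manifestly built from $G$-invariant combinations of the ribbon data; bijectivity follows by constructing the inverse — given a ribbon $d = \ell h = \th\tell \in \cD$, split it into two ribbons by inserting an arbitrary intermediate $G$-element (this is the gauge orbit) and check this is a section. Then the main computation: pull back the symplectic form $\Omega$ on $\cD$ and show it equals the restriction of $\Omega^{(2)} = \Omega_1 + \Omega_2$ to the constraint surface, modulo the degenerate directions along the $G$-orbit. The key algebraic input is the behaviour of the Maurer–Cartan forms under the ribbon factorization and under the mutual actions \eqref{actions from ribbons}; one uses $\Dr(\ell_1\ell_2) = \Dr\ell_1 + \mathrm{Ad}_{\ell_1}\Dr\ell_2$ type identities and the constraint $\Dl\tell_1 = -\mathrm{Ad}_{\tell_1}^{-1}(\Dr\tell_1)$ (together with $\tell_2 = \tell_1^{-1}$, so $\Dr\tell_2 = -\mathrm{Ad}_{\tell_2}\Dr\tell_1$) to collapse the two $\Omega_a$ terms into a single $\Omega$.

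I expect the main obstacle to be the symplectic-form computation rather than the set-theoretic isomorphism: one must carefully track the cross-terms $\la \Dr\ell_a \wedge \Dr\th_a\ra$ and $\la \Dl\tell_a \wedge \Dl h_a\ra$ across the gluing, use the pairing-invariance of the various coadjoint actions in \eqref{Co-Adjoint_Action}, and show that all the ``mixed'' contributions either cancel by antisymmetry or reorganize into the Maurer–Cartan forms of the glued variables, with the leftover pieces vanishing precisely because they are proportional to the constraint $\tell_1\tell_2 = 1$ and its differential. A clean way to organize this is to work with the primitive $1$-form / Liouville-type potential for $\Omega$ (if available from \eqref{SymplecticForm} written as a total variation) so that the gluing computation reduces to an identity among $1$-forms rather than $2$-forms; alternatively one can invoke the abstract reduction Theorem~\ref{thm:main} to guarantee that the quotient is symplectic and then only needs to pin down its dimension and identify it with $\cD$ by the explicit map above. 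I would present both: the abstract argument for ``it is symplectic of the right dimension,'' and the explicit gluing map for ``it is $\cD$.''
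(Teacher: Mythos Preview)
Your overall strategy---parametrize the constraint surface, exhibit an explicit gluing map, and verify the symplectic forms match---is exactly the paper's ``long way.'' However, your candidate glued element $d = \ell_1(h_1 h_2)$ is wrong. With the constraint $\tell_1\tell_2 = 1$ the two ribbons are glued with \emph{opposite} orientations along the shared $\tell$-edge, so the $G$-sides do not simply concatenate. Combining $\th_1^{-1}\ell_1 h_1 = \tell_1$ and $\th_2^{-1}\ell_2 h_2 = \tell_2 = \tell_1^{-1}$ gives
\[
\ell_1\,(h_1\th_2^{-1}) \;=\; (\th_1 h_2^{-1})\,\ell_2^{-1},
\]
so the correct reduced variables are $h_{12}=h_1\th_2^{-1}$, $\th_{12}=\th_1 h_2^{-1}$, with $\ell_1$ and $\ell_2^{-1}$ playing the role of the new $G^*$-decorations. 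With these variables the computation of $\Omega_1+\Omega_2$ on the constraint surface collapses to $\tfrac12\bigl(\langle\Dr\ell_1\wedge\Dr\th_{12}\rangle + \langle\Dl\ell_2^{-1}\wedge\Dl h_{12}\rangle\bigr)$, which is precisely the Heisenberg-double form. Your guess $h_1h_2$ would not satisfy any ribbon relation with $\ell_1$ and you would find spurious leftover terms in the $2$-form calculation.

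You are also missing the paper's ``short way,'' which bypasses the Maurer--Cartan gymnastics entirely: once you observe that the constraint $\tell_1\tell_2=1$ turns the pair of ribbon equations into the \emph{single} ribbon equation $\ell_1 h_{12} = \th_{12}\ell_2^{-1}$, you have identified the reduced space as a Heisenberg double on the nose, and its symplectic form is determined. This structural observation---that gluing constraints on ribbons produce new ribbon constraints---is the organizing principle for all subsequent gluings in the paper, so it is worth internalizing rather than treating the symplectic computation as an opaque verification.
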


We are going to proceed to prove this proposition in two ways, a long  and a short one. We still include the long way as it is still instructive to see what happens.
\begin{proof}
When reducing it is convenient to introduce the new variables 
\be
     h_{12} = h_1 \th_{2}\mone,
     \quad
     \th_{12} = \th_1 h_{2}\mone.
     \label{symplectic_constraint:2d_gluing_triangles}
 \ee
From the ribbon constraint  and the definition of $h_{12}$, we have the identities
 \begin{align}
     &
    \Dr \ell = \Dr (\th \tell h\mone) =
     \Dr \th + \th \Dr \tell \th\mone - (\th\tell) \Dl h (\th\tell)\mone ,\\
    &
    \Dl h_{12} = - \Dr \th_{2} + h_{12}\mone \Dr h_1 h_{12}
    \,\, , \quad
    \Dr \th_{12} = \Dr \th_1 - \th_{12} \Dr h_{2} \th_{12}\mone .
 \end{align}
 Let us evaluate $\Omega_{12}$ on the constraint surface,  
 \begin{align}
     \Omega_{12} 
     & = \Omega_1 + \Omega_{2} =
     \demi \big( 
     \la \Dr \ell_1 \wedge \Dr \th_1 \ra + \la \Dl \tell_1 \wedge \Dl h_1 \ra 
     +
     \la \Dr \ell_{2} \wedge \Dr \th_{2} \ra + \la \Dl \tell_{2} \wedge \Dl h_{2} \ra
     \big) 
     \nonumber \\
     & = 
     \demi \big( 
     \la \Dr \ell_1 \wedge \Dr \th_{12} \ra + \la \Dl \ell_{2}\mone \wedge \Dl h_{12} \ra 
     \big)
     + \demi \big(
     \la (\Dl \tell_{2} + \th_{1}\mone \Dr \ell_1 \th_{1}) \wedge \Dl h_{2}\ra + 
     \la (\Dl \tell_1 + \th_{2}\mone \Dr \ell_{2} \th_{2}) \wedge \Dl h_1 \ra 
     \big) \nonumber\\
     & = 
     \demi \big( 
     \la \Dr \ell_1 \wedge \Dr \th_{12} \ra + \la \Dl \ell_{2}\mone \wedge \Dl h_{2} \ra 
     \big)
     \nonumber \\
     & \quad 
     + \demi \big(
     \la (\Dl \tell_{2} + (\Dl \th_1 + \Dr \tell_1 - \tell_1 \Dl h_1 \tell_1\mone)) \wedge \Dl h_{2}\ra 
     +
     \la (\Dl \tell_1 + (\Dl \th_{2} + \Dr \tell_{2} - \tell_{2} \Dl h_{2} \tell_{2}\mone)) \wedge \Dl h_1 \ra 
     \big) 
     \nonumber \\
     & = 
     \demi \big( 
     \la \Dr \ell_1 \wedge \Dr \th_{12} \ra + \la \Dl \ell_{2}\mone \wedge \Dl h_{12} \ra 
     \big)
     \nonumber \\
     & \quad 
     + \demi \big(
     \la \tell_{2}\mone \Dr (\tell_{2} \tell_1) \tell_{2} \wedge \Dl h_{2} \ra +
     \la \tell_1\mone \Dr (\tell_1 \tell_{2}) \tell_1 \wedge \Dl h_1 \ra
     \big) -
     \demi \big(
     \la (\tell_{2} \Dl h_{2} \tell_{2}\mone - \tell_1\mone \Dl h_{2} \tell_1) \wedge \Dl h_1 \ra
     \big)
     \nonumber \\
     & \quad 
     + \demi \big(
     \la \Dl \th_1 \wedge \Dl h_{2} \ra + \la \Dl \th_{2} \wedge \Dl h_1 \ra 
     \big) 
     \nonumber \\
    & \approx       
    \demi \big( 
    \la \Dr \ell_1 \wedge \Dr \th_{12} \ra + \la \Dl \ell_{2}\mone \wedge \Dl h_{12} \ra 
     \big)\label{resulting form}.
 \end{align}
where we evaluated the constraint, $\tell_1\tell_2=1$ and used that the bilinear form between terms in $\g$ are zero.  
In the constrained mechanics language, the constraints  \eqref {symplectic_constraint:2d_gluing_triangles} can be viewed as the second class constraint. We note that the constraint $\th_{12} = \th_1 h_{2}\mone$ is not an independent constraint so that we do have indeed the right number of constraint. 
\begin{eqnarray}
    \left.\begin{array}{r} 
    \th_1 = \ell_1\rhd h_1 \\ 
    h_2\mone = \tell_2\mone \rhd \th_2\mone
    \end{array}\right\} 
    \,{\longrightarrow}\, 
    \th_{12} & = & \th_1 h_{2}\mone= (\ell_1\rhd h_1)(\tell_2\mone \rhd \th_2\mone) \approx (\ell_1\rhd h_1)(\tell_1 \rhd \th_2\mone) 
    \nonumber \\
    & = &
    (\ell_1\rhd h_1)((\ell_1\lhd h_1) \rhd \th_2\mone) = \ell_1 \rhd (h_1\th_2\mone) \approx \ell_1 \rhd h_{12}.
\end{eqnarray}
It is interesting to note that the constraints 
$\tell_1\tell_2=1$ or $h_1\th_2\mone=h_{12}$ can be treated symmetrically. Indeed, we could have considered instead the constraint  $h_1\th_2\mone=h_{12}$ and dealt with $\tell_1\tell_2=1$ as the second class constraint. This would have led to the same reduced phase space. Indeed the symplectic reduction also works when the momentum map is not constrained to be the unit (see Theorem \ref{thm:main}). 
\be
\cD^{(2)}/\!/ G \cong \cD^{(2)}/\!/ G^*_{h_{12}}\cong\cD.
\ee

\medskip

There is a quicker way to derive the reduced symplectic structure. Indeed, we know that for each phase space we have a ribbon constraint.  
\be
\ell_1h_1=\th_1\tell_1, \quad \ell_2h_2=\th_2\tell_2.
\ee
If we impose the constraint $\tell_1\tell_2=1$, then we can combine these ribbon constraints into a single one.
\be
    \left.\begin{array}{c}
    \ell_1h_1=\th_1\tell_1\dr \th_1\mone \ell_1h_1=\tell_1 \\  
    \ell_2h_2=\th_2\tell_2\dr \th_2\mone \ell_2h_2=\tell_2
    \end{array}\right\} \stackrel{\tell_1\tell_2=1}{\longrightarrow} 
    \ell_1 (h_1h_2\mone)= (\th_1\th_2\mone)\ell_2\mone \,\,\, \Leftrightarrow \,\,\,
    \ell_1 \, h_{12}= \th_{12} \ell_2\mone .
\ee
Since we have a ribbon constraint, we can deduce then the associated symplectic form if this ribbon constraint defines a Heisenberg double. This is exactly the symplectic form we derived in \eqref{resulting form}. Hence combining ribbon constraints using a given constraint allows to identify the reduced Heisenberg double/phase space.
\end{proof}

The reduction has therefore in this specific case a simple geometrical interpretation. We consider two ribbons and the constraint  $\tell_1\tell_2=1$ allows to glue them into a new ribbon, with a long ribbon edge given by the other constraint $h_1\th_2\mone=h_{12}$. This is illustrated in Fig. \ref{pair of ribs}. 
\begin{figure}
    \begin{center}
    \begin{tikzpicture}
    
    \node[anchor=south west,inner sep=0] (image) at (0,0)
    {\includegraphics{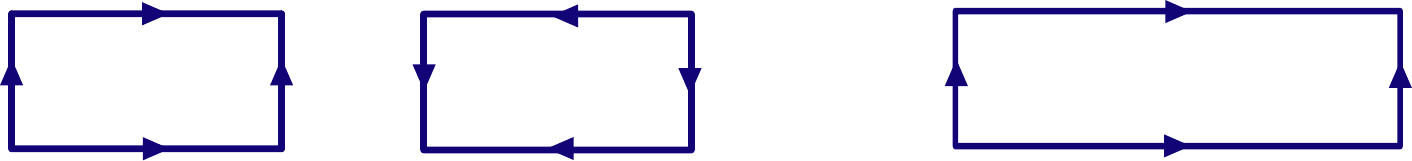}};
    \begin{scope}[x={(image.south east)},y={(image.north west)}]
    \node[] at (-0.02, 0.5) {$\ell_1$};
    \node[] at (0.22,0.5) {$\tell_1$};
    \node[] at (0.11,1.1) {$h_1$};
    \node[] at (0.11,-0.12) {$\th_1$};
    \node[] at (0.28,0.5) {$\tell_2$};
    \node[] at (0.4,1.1) {$\th_2$};
    \node[] at (0.4, -0.1) {$h_2$};
    \node[] at (0.51, 0.5) {$\ell_2$};
    \node[] at (0.52, 0.5) (glue1) {};
    \node[] at (0.65, 0.5) (glue2) {};
    \tikzstyle{arrow} = [thick,->,>=stealth]
    \draw[arrow] (glue1) edge (glue2);
    \node[] at (0.66, 0.5) {$\ell_1$};
    \node[] at (0.83,1.1) {$h_{12} = h_1\th_2\mone$};
    \node[] at (0.83,-0.12) {$\th_{12}=\th_1h_2\mone$};
    \node[] at (1.02,0.5) {$\ell_2\mone$};
    \end{scope}
    \end{tikzpicture}
    \caption{Each phase space is represented by a ribbon diagram. Two ribbons are glued along an edge to create a single ribbon. The geometric gluing is the same as demanding the constrain $\tell_1\tell_2=1$.}
    \label{pair of ribs}
    \end{center}
\end{figure}
\medskip

We note that this gluing is the classical counterpart of the fusion product \cite{alekseev1998}. Such gluing is associative, and if we were considering the other order for the constraint, we would still get an isomorphic phase space \cite{alekseev1998}. 

In the following, we will essentially combine the ribbon constraints to determine the reduced phase space since then the symplectic structure follows when we have a ribbon constraint.

\paragraph{Gluing octagonal ribbons: identifying some sub-components of momentum pairwise.}
Instead of identifying the full momentum variables, we can just identify a subset of them. Consider the case where $G\cong G_1\bowtie G_2$ and dually  $G^*\cong G^*_1\bowtie G^*_2$. The momentum map given by the projection of $\cD$ into $G_1$ generates the infinitesimal right $G^*_1$ transformations. Let us introduce some notations: 
\be
    h = u\lambda, \,\, \tilde h = \tilde u \tilde \lambda \quad \ell= \beta y, \,\, \tilde \ell= \tilde \beta\tilde y, \quad u,\tilde u\in G_1, \, \lambda, \tilde \lambda \in G_2, \, \beta, \tilde \beta\in G_1^*, \, y,  \tilde y\in G_2^*. 
\ee
In terms of geometric interpretation, we can deform the ribbon into an octagon, which keeps track of all the different group elements, see Fig. \ref{octagon}.
\begin{figure}
    \centering
    \includegraphics{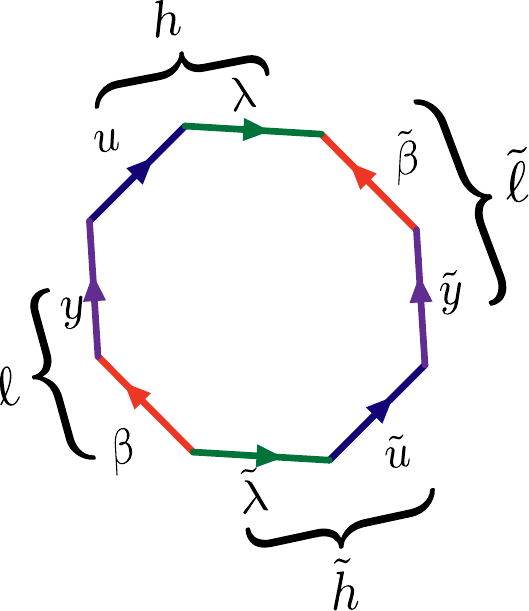}
    \caption{The new ribbon diagram representing the division of $G$ and $G^*$ into subgroups. }
    \label{octagon}
\end{figure}

\medskip 

Let us consider again $\cD^{(2)}= \cD\times\cD$. The global left infinitesimal transformation of $G_1^*$ on $\cD^{(2)}$ is given by the momentum map $u_1 u_2$.

\begin{proposition}\label{prop:glue-octo}
The symplectic reduction $\cD^{(2)}/\!/ G^*_1$ is isomorphic to the Heisenberg double $(G_1 \bowtie (G_2\bowtie G_2))\, \bowtie  \, (G^*_1 \bowtie (G^*_2\bowtie G^*_2))$. 
\end{proposition}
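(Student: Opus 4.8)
The plan is to follow the ``combine the ribbon constraints'' shortcut used in the second proof of Proposition~\ref{prop:glue-ribbon}: once the reduced constraint surface, modulo the residual gauge, is exhibited as a single ribbon relation $\hat\ell\,\hat h=\hat\th\,\hat{\tell}$ whose factors lie in a pair of mutually dual groups, the group structure and the symplectic form of the corresponding Heisenberg double are fixed by \eqref{ribbon}--\eqref{SymplecticForm}, and nothing more is needed.

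First I would fix notation. On $\cD^{(2)}=\cD\times\cD$ we have the two ribbon constraints $\ell_ah_a=\th_a\tell_a$ ($a=1,2$) together with the subgroup splittings $h_a=u_a\lambda_a$, $\th_a=\tilde u_a\tilde\lambda_a$, $\ell_a=\beta_ay_a$, $\tell_a=\tilde\beta_a\tilde y_a$, where $u_a,\tilde u_a\in G_1$, $\lambda_a,\tilde\lambda_a\in G_2$, $\beta_a,\tilde\beta_a\in G^*_1$, $y_a,\tilde y_a\in G^*_2$. The reduction is with respect to the $G^*_1$-action whose momentum map on $\cD^{(2)}$ is $u_1u_2$ (the remark preceding the proposition), so the constraint is $u_1u_2=1$; since $G^*_1$ fixes the identity of $G_1$ (from \eqref{actions from ribbons}, $\ell\rhd1=1$ whenever the $G$-argument is trivial), the isotropy subgroup is all of $G^*_1$ and, by Theorem~\ref{thm:main}, the reduced space $\cJ\mone(1)/G^*_1$ inherits a symplectic form.

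Next comes the heart of the argument: splicing the two ribbon relations. As in the previous proof, I would rewrite $\ell_ah_a=\th_a\tell_a$ so as to isolate the $G_1$-part $u_a$ of $h_a$ and the matching part of $\th_a$, then use $u_1=u_2\mone$ to merge the two relations into one. The left factor $\hat h$ should then carry a single $G_1$-factor (built from $u_1$, equivalently $u_2\mone$) and the two $G_2$-factors $\lambda_1,\lambda_2$, reordered by the mutual actions $\rhd,\lhd$ of \eqref{actions from ribbons} so that $\hat h\in G_1\bowtie(G_2\bowtie G_2)$; similarly $\hat\ell\in G^*_1\bowtie(G^*_2\bowtie G^*_2)$ is assembled from $\beta_1,\beta_2,y_1,y_2$, and $\hat\th,\hat{\tell}$ are the dual factors of the same element of $\cD$. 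One would need to check, exactly as with the pair $\tell_1\tell_2=1$ and $h_1\th_2\mone=h_{12}$ in Proposition~\ref{prop:glue-ribbon}, that the companion relation produced by the merge is not independent, so that the count is right: $u_1u_2=1$ identifies the two $G_1$ factors and the $G^*_1$-quotient deletes one $G^*_1$ factor, leaving one $G_1$, two $G_2$, one $G^*_1$ and two $G^*_2$ worth of data, as claimed.

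Finally, $\hat\ell\,\hat h=\hat\th\,\hat{\tell}$ is precisely the ribbon relation of the double $\big(G_1\bowtie(G_2\bowtie G_2)\big)\bowtie\big(G^*_1\bowtie(G^*_2\bowtie G^*_2)\big)$, so the reduced phase space carrying the pulled-back $\Omega^{(2)}$ is an open neighbourhood of the identity in that Heisenberg double, with $\Omega=\demi\big(\la\Dr\hat\ell\wedge\Dr\hat\th\ra+\la\Dl\hat{\tell}\wedge\Dl\hat h\ra\big)$. I expect the main obstacle to be the bookkeeping in the splicing step: commuting the subgroup elements past one another with \eqref{actions from ribbons} so that the merged relation genuinely factorizes as $\big(G^*_1\bowtie(G^*_2\bowtie G^*_2)\big)\cdot\big(G_1\bowtie(G_2\bowtie G_2)\big)$, and checking that these two iterated bicrossproducts are indeed mutually dual so that ``Heisenberg double'' makes sense. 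As a fallback one can take the long route, evaluating $\Omega_1+\Omega_2$ on $u_1u_2=1$ and changing variables as in \eqref{resulting form}; this is routine but lengthy, and the ribbon shortcut is designed precisely to avoid it.
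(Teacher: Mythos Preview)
Your proposal is correct and follows exactly the same strategy as the paper: combine the two ribbon constraints under $u_1u_2=1$ into a single twelve-sided ribbon relation, then read off the Heisenberg double structure (and hence the symplectic form) from the factorization. The paper's own proof is in fact much terser than yours---it simply invokes the diagrammatic gluing of Fig.~\ref{gluing octo} and appeals to the principle established in Proposition~\ref{prop:glue-ribbon} that ``once one has identified the ribbon constraint, the symplectic form follows''---so the bookkeeping concerns you flag (reordering via $\rhd,\lhd$, checking duality of the iterated bicrossproducts) are left implicit there as well.
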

Since we are identifying only a subspace of momentum space, the reduced phase space is larger than $\cD$. The symplectic reduction is nicely interpreted again as the gluing of (octagonal) ribbons, as illustrated in Fig. \ref{gluing octo}.  The new ribbon after fusion has twelve sides. 

\begin{proof}
To prove the proposition, we take this time the short way only, that is we only focus on the ribbon constraints. Fig. \ref{gluing octo} provides a diagrammatic proof of the new ribbon constraint.

As we alluded earlier, once one has identified the ribbon constraint, the symplectic form follows.  
\end{proof}

\begin{figure}
    \begin{center}
    \begin{tikzpicture}
     \node[anchor=south west,inner sep=0] (image) at (0,0)
    {\includegraphics[width=0.6\textwidth]{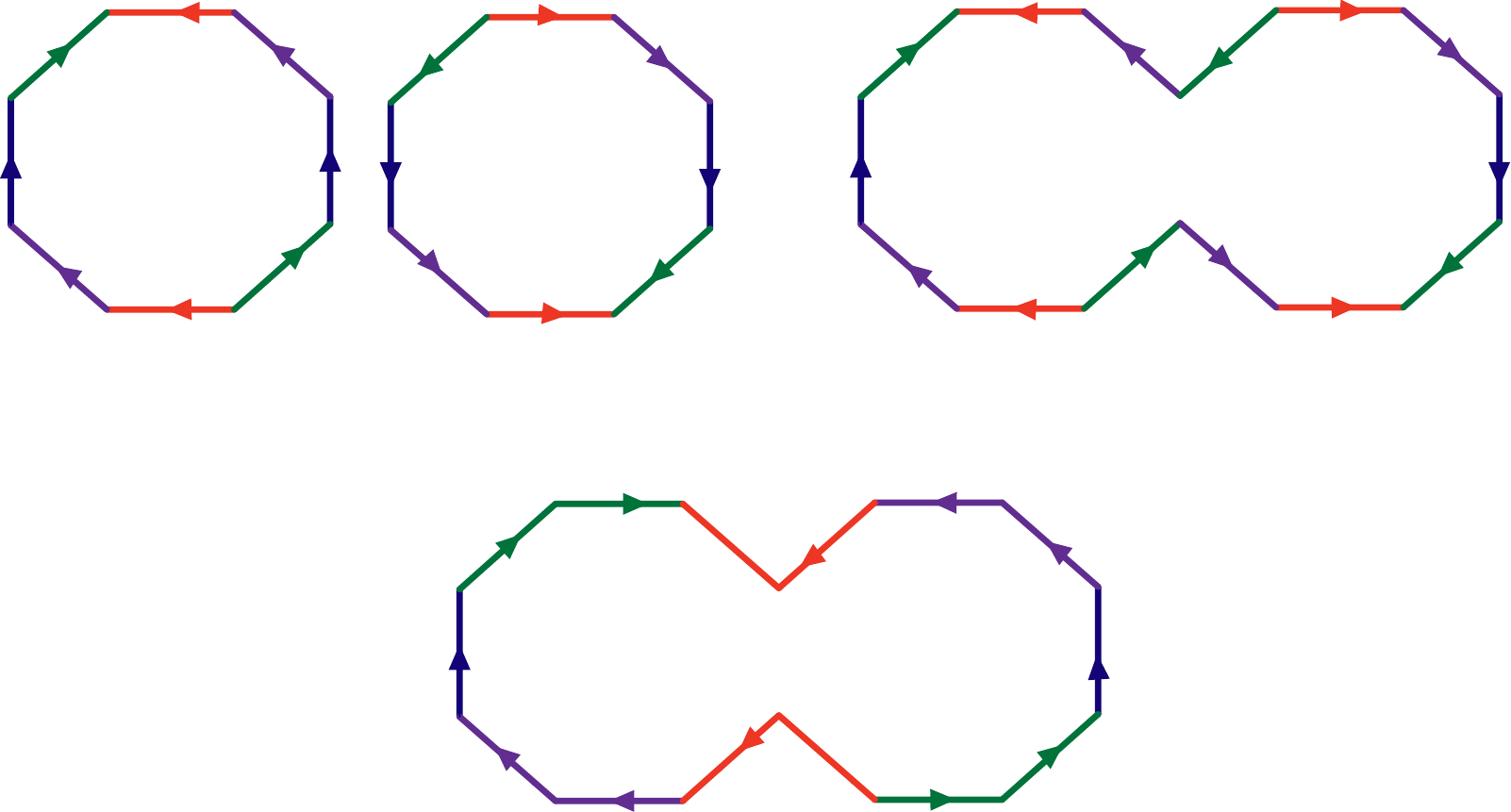}};
    \begin{scope}[x={(image.south east)},y={(image.north west)}]
    \tikzstyle{arrow} = [thick,->,>=stealth]
    \node at (0.47,0.8) (g1) {};
    \node at (0.57,0.8) (g2) {};
    \draw[arrow] (g1) edge (g2);
    \node at (0.75, 0.6) (g3){};
    \node at (0.7, 0.4) (g4){};
    \draw[arrow] (g3) edge (g4);
    \end{scope}
    \end{tikzpicture}
    \caption{We glue two octagons. We then re-arrange the sides belonging to the same groups. This rearranging, involving either actions, back actions and conjugations can induce non-trivial contributions, especially from the conjugations. The choice of ordering is a priori arbitrary and for our concerns, will depend on the choice of frame  we intend to express our variables in. }
    \label{gluing octo}
    \end{center}
\end{figure}

\paragraph{Closure constraints.}
The last example does not involve explicit parameterization of the reduced phase space. Given $n$ copies of $\cD$, we can look at the global $G$ transformation induced by the momentum map $\ell_1...\ell_n$ on $\cD^{(n)}=\cD\times...\times\cD$. The symplectic reduction $\cD^{(n)}/\!/G$ induced by the constraint $\ell_1...\ell_n=1$ is then diagrammatically represented in Fig. \ref{clos}.

\begin{figure}[H]
    \begin{center}
    \begin{tikzpicture}
     \node[anchor=south west,inner sep=0] (image) at (0,0)
    {\includegraphics[width=0.4\textwidth]{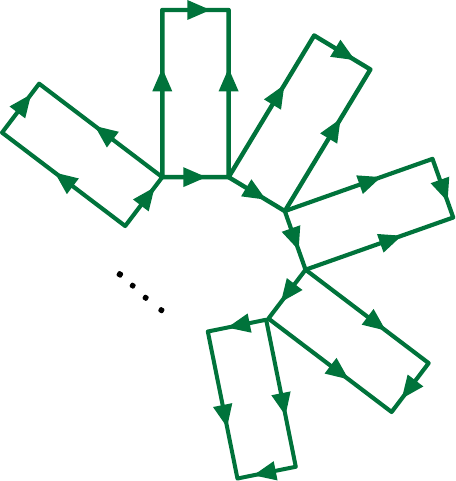}};
    \begin{scope}[x={(image.south east)},y={(image.north west)}]
    \tikzstyle{arrow} = [thick,->,>=stealth]
    \node at (0.45,0.6) {$\ell_1$};
    \node at (0.54,0.57) {$\ell_2$};
    \node at (0.6,0.5) {$\ell_3$};
    \node at (0.6,0.43) {$\ell_4$};
    \node at (0.5,0.35) {$\ell_5$};
    \node at (0.35, 0.56) {$\ell_n$};
    \end{scope}
    \end{tikzpicture}
        \caption{The closure constraint is represented in terms of the ribbon variables.}
    \label{clos}
    \end{center}
\end{figure}

\subsection{2-groups}
\label{Sec_2-groups}

\subsubsection{Definition and examples}
In this section we  review the definition and some examples of strict 2-groups which will be relevant to decorate our polyhedra.

\begin{definition} \cite{baez:2003, Baez:2010ya}
A strict 2-group,  also called a crossed module, is a pair of groups, $G$ and $H$ with a group action $\triangleright$ of $G$ on $H$ such that the map $h\mapsto g\triangleright h$ is a homomorphism for any $g \in G$. In addition a 2-group also includes a group homomorphism $t: H \to G$. The map $t$ is compatible with the action in the following sense:
\be\label{pfeifer}
    t(g\triangleright h)  = gt(h)g^{-1}, \quad
    t(h)\triangleright h'  = hh'h^{-1}, \quad g \in G, \quad  h,h'\in H
\ee
\end{definition}
Since the above definition is the only type of 2-group we consider, we will often omit the word ``strict". A crossed module is equipped with a pair of multiplication rules. We can represent an element of the 2-group as a surface with boundary. The two multiplications allow us to compose surfaces in a consistent way.

A diagrammatic representation of an element $(h,g)\in H\times G$ is given by a surface decorated by an element  $h\in H$. On the boundary of the surface, we specify a point $s$, called the source, and a target point. The boundary is divided into two paths originating from the source. One of the two paths is labelled by $g_1 \in G$ and is seen as the source of the surface decoration. The other path label is defined by the map $t$: $g_2=t(h)g_1$. The setup is shown in Fig. \ref{2groupelement}. 
\begin{figure}[H]
    \begin{center}
    \begin{tikzpicture}
    \node[anchor=south west,inner sep=0] (image) at (0,0)
    {\includegraphics[width=0.2\textwidth]{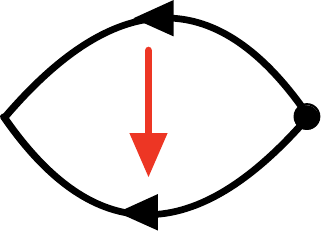}};
    \begin{scope}[x={(image.south east)},y={(image.north west)}]
      \node[] at (0.5, -0.1) {$g_2$};
      \node[] at (0.5, 1.1) {$g_1$};
      \node[] at (0.55, 0.5) {$h$};
      \node[] at (1.05, 0.5) {$s$};
    \end{scope}
    \end{tikzpicture}
    \caption{Graphical representation of $(h,g_1)$ in the 2-group given by $(G,H,t,\triangleright)$, with $t(h)=g_2 g_1\mone$.}
    \label{2groupelement}
\end{center}
\end{figure}
 
Now we consider the first of the two products on the crossed module, the horizontal multiplication. Let $(h_1,g_1), (h_2,g_2) \in H\times G$, then horizontal multiplication is given by
\begin{align}
    (h_1,g_1)\bullet (h_2,g_2) = (h_1(g_1\triangleright h_2),g_1g_2).
\end{align}
This multiplication has an unit given by $(1_H,1_G)$, where $1_H$ and $1_G$ are the identities in $H$ and $G$ respectively. For each $(h,g)\in H\times G$, there is a left and a right horizontal inverse. The left inverse is $(h,g)^{-1_{\bullet ,L}} = ((g\mone\triangleright h)\mone,g\mone)$ and the right inverse is $(h,g)^{-1_{\bullet,R}}= (g\mone \triangleright h\mone,g\mone)$. The geometrical picture for this product is shown in figure \ref{fig:horizontal}
\begin{figure}[H]
    \begin{center}
    \begin{tikzpicture}
    \node[anchor=south west,inner sep=0] (image) at (0,0)
    {\includegraphics[width=0.4\textwidth]{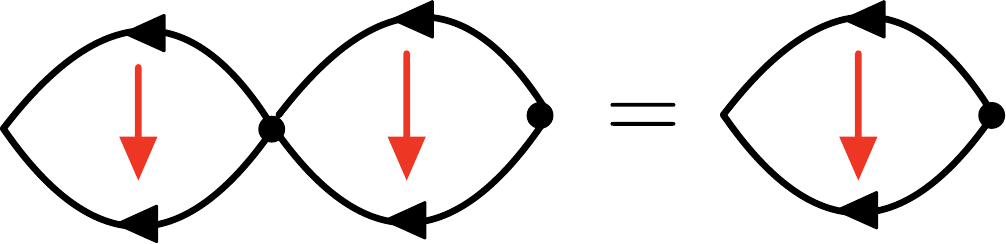}};
    \begin{scope}[x={(image.south east)},y={(image.north west)}]
     \node[] at (0.15, -0.1) {$g_1'$};
     \node[] at (0.18,0.5) {$h_1$};
     \node[] at (0.15, 1.01) {$g_1$};
     \node[] at (0.41,-0.1) {$g_2'$};
     \node[] at (0.41,1.05){$g_2$};
     \node[] at (0.45,0.5) {$h_2$};
     \node[] at (0.85, -0.1) {$g_1'g_2'$};
     \node[] at (0.85,1.1) {$g_1g_2$};
     \node[] at (0.9, 0.5) {$\th$};
    \end{scope}
    \end{tikzpicture}
    \caption{Horizontal product $(h_1,g_1)\bullet (h_2,g_2) = (\th, g_1g_2)$, where $\th = h_1(g_1\triangleright h_2$).}
    \label{fig:horizontal}
    
    \end{center}
\end{figure}

The second multiplication is the vertical multiplication. For $(h_1,g_1), (h_2,g_2) \in H\times G$ satisfying $t(h_1)=g_2g_1^{-1}$, we define
\begin{align}
    (h_2,g_2)\diamond (h_1,g_1) = (h_2h_1,g_1).
\end{align}
The diagrammatic depiction of the vertical multiplication is the composition of two surfaces sharing an edge. This is shown in figure \ref{fig:vertical}.
\begin{figure}[H]
    \begin{center}
    \begin{tikzpicture}
    \node[anchor=south west,inner sep=0] (image) at (0,0)
    {\includegraphics[width=0.4\textwidth]{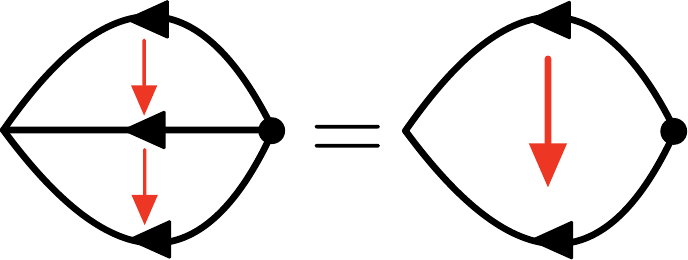}};
    \begin{scope}[x={(image.south east)},y={(image.north west)}]
    \node[] at (0.22,-0.07) {$g_3$};
    \node[] at (0.21,1.05) {$g_1$};
    \node[] at (0.27, 0.58) {$g_2$};
    \node[] at (0.17, 0.7) {$h_1$};
    \node[] at (0.17, 0.27) {$h_2$};
    \node[] at (0.8, -0.07) {$g_3$};
    \node[] at (0.8, 1.05) {$g_1$};
    \node[] at (0.73, 0.47) {$\th$};
     \end{scope}
    \end{tikzpicture}
    \caption{The diagrammatic depiction of the vertical composition $(h_2,g_2)\diamond (h_1,g_1) = (h_2h_1,g_1)$. The condition $t(h_1)=g_2g_1^{-1}$ is expressed graphically by requiring that the two surfaces being composed share an edge. }
    \label{fig:vertical}
    \end{center}
\end{figure}

When vertically composing surfaces, we must ensure that the source points of the surfaces match. It is therefore convenient to detail how one goes about changing the source point, a process called whiskering. Let the original source point be at $s$ on the boundary, and say that we would like to move the source to a new point $s'$ also on the boundary. If the path along the boundary from $s$ to $s'$ is decorated by $g'$, then we have the relation $(h,g_1)_s = (h',g_1g'^{-1})_{s'}$, where the subscripts indicate the base point. The new surface variable, $h'$ satisfies $t(h')=t(h)$, as we can see from figure \ref{fig:changesource}.

\begin{figure}[H]
    \begin{center}
    \begin{tikzpicture}
    \node[anchor=south west,inner sep=0] (image) at (0,0)
    {\includegraphics[width=0.4\textwidth]{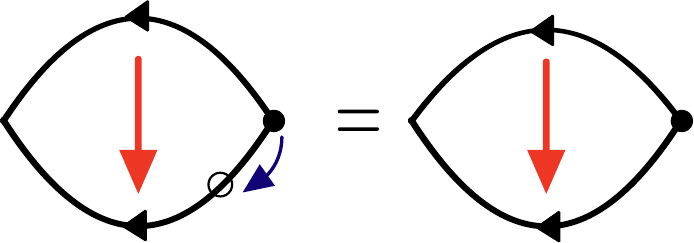}};
    \begin{scope}[x={(image.south east)},y={(image.north west)}]
      \node[] at (0.23,-0.05) {$g_2$};
      \node[] at (0.23,1.05) {$g_1$};
      \node[] at (0.16, 0.5) {$h$};
      \node[] at (0.33, 0.16) {$s'$};
      \node[] at (0.4, 0.6) {$s$};
      \node[] at (0.44, 0.37) {$g'$};
      \node[] at (0.8, -0.05) {$g_2g'^{-1}$};
      \node[] at (0.8, 1.05) {$g_1g'^{-1}$};
      \node[] at (0.75, 0.5) {$h'$};
      \node[] at (1.03,0.5) {$s'$};
    \end{scope}
    \end{tikzpicture}
    \caption{Diagrammatic representation of changing the source point of a surface from $s$ to $s'$, where the two sources are separated by $g'$.}
    \label{fig:changesource}
    \end{center}
\end{figure}

In a similar vein, we can change the target point, the point on the boundary where the two paths $g_1$ and $g_2$ meet. Let the original target point be denoted by $\tau$ and the new target point by $\tau'$, and say the path connecting $\tau$ to $\tau'$ is given by $g'$. Then the new crossed module element is given by $_\tau(h,g_1)\to _{\tau'}(g'\triangleright h, g'g_1)$. This is shown in figure \ref{fig:changetarget}. Since the map $t$ is calculated by taking the path around the loop starting at the target point, it is no surprise that the new surface variable is given by the action of $g'$. 
\begin{figure}[H]
    \begin{center}
    \begin{tikzpicture}
    \node[anchor=south west,inner sep=0] (image) at (0,0)
    {\includegraphics[width=0.5\textwidth]{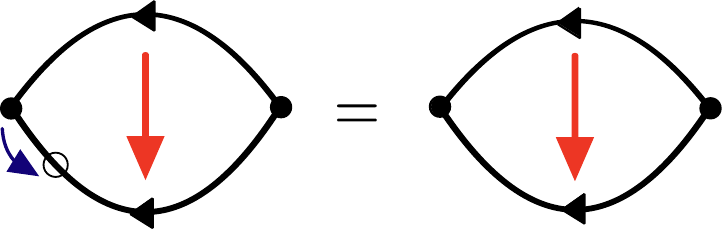}};
     \begin{scope}[x={(image.south east)},y={(image.north west)}]
     \node[] at (0,0.6) {$\tau$};
     \node[] at (-0.03, 0.3) {$g'$};
     \node[] at (0.08, 0.14) {$\tau'$};
     \node[] at (0.17,0.5) {$h$};
     \node[] at (0.2, -0.05) {$g_2$};
     \node[] at (0.2, 1.05) {$g_1$};
     \node[] at (0.6, 0.63) {$\tau'$};
     \node[] at (0.77, 0.5) {$h'$};
     \node[] at (0.8, -0.05) {$g'g_2$};
     \node[] at (0.8, 1.05) {$g'g_1$};
     \end{scope}
    \end{tikzpicture}
    \caption{Changing the target point by a path $g'$. We have that $h' = g'\triangleright h$. }
    \label{fig:changetarget}
    \end{center}
\end{figure}

In fact, the new target or source point need not be on the boundary of the surface. We can choose a new target or source which lies off the boundary altogether without changing the calculation. We call this kind of source/target changing \textit{whiskering} and is shown in Fig. \ref{fig:whiskering}.
\begin{figure}[H]
    \begin{center}
    \begin{tikzpicture}
    \node[anchor=south west,inner sep=0] (image) at (0,0)
    {\includegraphics[]{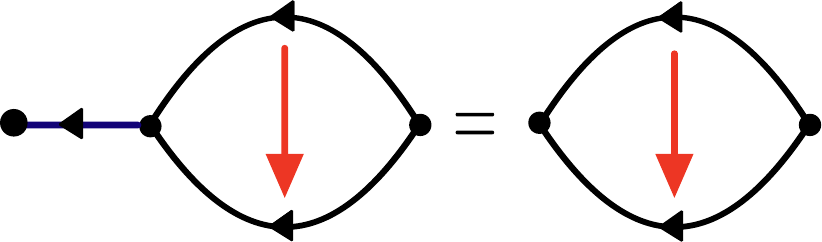}};
    \begin{scope}[x={(image.south east)},y={(image.north west)}]
    \node[] at (0,0.6) {$\tau'$};
    \node[] at (0.1,0.35) {$g'$};
    \node[] at (0.17, 0.57) {$\tau$};
    \node[] at (0.32,0.5) {$h$};
    \node[] at (0.35,-0.05) {$g_2$};
    \node[] at (0.35, 1.05) {$g_1$};
    \node[] at (0.64,0.57) {$\tau'$};
    \node[] at (0.82,-0.05) {$g'g_2$};
    \node[] at (0.82, 1.05) {$g'g_1$};
    \node[] at (0.78, 0.5) {$h'$};
    \end{scope}
    \end{tikzpicture}
    \caption{Whiskering on the target of the surface. Once again we have that $h' = g'\triangleright h$.}
    \label{fig:whiskering}
    \end{center}
\end{figure}

Let us focus on a special class of strict 2-groups, which will be the main objects we will work with. 
\begin{definition}
If a crossed module has $t(h)=
1$ for all $h\in H$, we say that it is a \textit{skeletal crossed module}.  
    \end{definition}
This restriction on the $t$-map has several implications.  

First the compatibility relations \eqref{pfeifer} imply that the group $H$ has to be abelian. Second  this constraint amounts to have a flat holonomy along the closed loop. 
\be
    g_2=t(h)g_1
    \,\, \leftrightarrow \,\,
    g_2g_1\mone=1.
\ee
Finally, in this case, we can relate the vertical multiplication to the horizontal one.  Indeed, since $g_1=g_2$ for each $g\in G$, the vertical composition $(h_1,g)\diamond (h_2,g) = (h_1h_2,g)$ simply corresponds to the multiplication of the group $H$. We can also write
\be
  (h_2,g)\diamond (h_1,g) = (h_2,1)\bullet (h_1,g).\label{horizontalisvertical}
\ee
In this sense, the horizontal product which can be seen as the product of the group $G\ltimes H$ can be used to specify the vertical product. Roughly said, skeletal crossed modules are not groups but are not so far away from being a group. We will use this extensively to associate a phase space structure to a pair of 2-groups.

\paragraph{Examples} Let us consider  some examples of skeletal crossed modules. \cite{Baez:2010ya}

\textit{Trivial 2-groups.} Consider the crossed module where $H={1}$ is the trivial group and $G$ any group. In this case, both the action and map $t$ are trivial. This 2-group is essentially a group. This is the usual framework for lattice gauge theory with $G$ decorating paths and no information on the faces (or trivial group elements on faces).

Alternatively, we may consider $G$ to be the trivial group. The compatibility conditions mean that $H$ must then be abelian. Once again, this 2-group is simply a group. Geometrically, in analogy with the lattice picture before, we have decorations on plaquettes, but not on paths.

\textit{Semi-direct product.} Let $\cG$ be a semi direct product group $\cG = H\rtimes G$ with $H$ abelian, so that for any $k \in \cG$ there is the unique factorization $k = hg$ with $h\in H$ and $g \in G$. Such a group can be identified as a skeletal crossed module with the action 
 given by conjugation. 
The identification is $hg \mapsto (h,g)$. Under this representation, multiplication in $\cG$ corresponds to horizontal multiplication:
\begin{align}
    (hg)(h'g') = h (gh'g\mone)gg' \mapsto (h (gh'g\mone),gg')= (h,g)\bullet (h',g'),
\end{align}
with the action being conjugation. It is clear that the left and right horizontal inverses are the same and is consistent with the mapping: $(hg)\mone =(g\mone h\mone g) g\mone \mapsto (g\mone h\mone g, g\mone)$.

The vertical multiplication can be seen as simply the multiplication in the subgroup $H$.

\textit{Poincar\'e and Euclidean 2-group.} The Poincar\'e (or Euclidean) group can be seen as examples of semi-direct product 2-groups, with $H$ the group of translations and $G$ the Lorentz (or rotation) group. 

\textit{Co-tangent 2-group.} Let $G$ be a Lie group and $H=\g^*$ be its dual Lie algebra, which we see as an abelian group. Then, $G$ acts on $\g^*$ via the co-adjoint action.  
Since the $t$-map is trivial, $\g^*\ltimes G$ is  a skeletal crossed module.

\paragraph{Lie 2-algebras} For our concerns, the pair of groups which make up a 2-group are most of the time Lie groups. As such, it is useful to define the infinitesimal version of crossed modules coming from Lie groups. 

\begin{definition}
\cite{Baez:2003fs, Baez:2010ya} A strict Lie 2-algebra is a pair of Lie algebras with a Lie algebra homomorphism $dt: \h \to \g$ and an action $\triangleright: \g \times \h \to \h$ satisfying 
\begin{align}
    [x_1,x_2]\triangleright y = x_1\triangleright (x_2\triangleright y) -x_2\triangleright(x_1\triangleright y)&&
    x\triangleright[y_1,y_2] = [x\triangleright y_1,y_2]+[y_1,x\triangleright y_2]
\end{align}
for all $x_i \in \g$ and $y_i \in \h$. The homomorphism and the action must be compatible in the sense that
\begin{align}
    dt(x\triangleright y) = [x,dt (y)]&&
    dt(y_1)\triangleright y_2 = [y_1,y_2].
\end{align}
A skeletal strict Lie 2-algebra is an infinitesimal crossed module $\g\ltimes \h$ with a trivial $dt$-map, so that $\h$ is abelian.    
\end{definition}

Given a Lie 2-group, one can construct a corresponding Lie 2-algebra \cite{Baez:2003fs}. The Lie algebras $\g$ and $\h$ are taken to be Lie algebras of the groups $G$ and $H$ respectively. The action and the Lie algebra homomorphism are the infinitesimal versions of the action and homomorphism of the 2-group.

\subsubsection{Poisson Lie 2-groups and 2-bi-algebras}
Just as a Poisson Lie group was a Lie group equipped with a Poisson bracket respecting group multiplication, we can define the same for crossed modules.
\begin{definition}\cite{chen2013}
A Poisson Lie 2-group is a Lie 2-group $\cG$ equipped with a Poisson bracket that is
compatible with both product laws.
\end{definition}
Explicitly, for any functions $f_1$ and $f_2$ on $\cG$ and $x, y \in \cG$,
\begin{align}
    \{f_1,f_2\}(x\bullet y) =  \{f_1\circ L_x^\bullet, f_2\circ L_x^\bullet \} (y) +  \{f_1\circ R_y^\bullet, f_2\circ R_y^\bullet \} (x)\label{horizcompat}\\
    \{f_1,f_2\}(x\diamond y) =  \{f_1\circ L_x^\diamond, f_2\circ L_x^\diamond \} (y) +  \{f_1\circ R_y^\diamond, f_2\circ R_y^\diamond \} (x)\label{verticompat}
\end{align}
where $L^\bullet_x$ ($R_y^\bullet$) is the left (right) horizontal multiplication by $x$ ($y$), similarly for $L^\diamond$.

\paragraph{Examples.}
\begin{itemize}
    \item If $\cG$ is any 2-group, equipping it with the trivial Poisson bracket $\{\cdot,\cdot\} = 0$ makes $\cG$ a Poisson Lie 2-group. 
    \item If $\cG$ is a trivial 2-group in the sense that either $G$ or $H$ is the trivial group, then if the other group is a Poisson Lie group, $\cG$ is a Poisson Lie 2-group. 
    \item skeletal crossed modules provide some easy examples of such Poisson 2-groups. Indeed, consider $H$ a Poisson Lie group and $\cG=H\rtimes G $ a Poisson Lie group such that the induced Poisson bracket on  $H$ makes it a Poisson Lie group. Then $\cG$ is can be viewed as a Poisson 2-group. Indeed, $H\rtimes G$ being a Poisson group means that the horizontal product is compatible with the Poisson structure. If $H$ is a Poisson Lie group, then the vertical product is also compatible with the Poisson structure.
    \item The co-tangent 2-group as a non-trivial Poisson 2-group. The semi-direct product  $G \ltimes \g^*$ can be seen as a the Drinfeld double associated to the Heisenberg double $G \ltimes \g^*$ which is nothing else than $T^*G$. Explicitly,  $p^i$ are the coordinate functions of $\g^*$ and $g^i{}_j$ are the matrix elements of $G$  then the following Poisson structure (the Drinfeld double Poisson brackets) gives a Poisson 2-group
\begin{align}
    &
    \{g^{i}{}_j,g^k{}_l\} = \{ g^i{}_j, p^k\} =0 
    \,\,,\quad \{p^i,p^j\} = f^{ij}{}_kp^k, 
\end{align}
where $f^{ij}{}_k$ is the structure constant of $\g$.
\end{itemize}

\medskip

The infinitesimal version of a Poisson Lie group is a Lie bi-algebra. Similarly, the infinitesimal version of a Poisson 2-group is a Lie 2-bi-algebra. We defined a bi-algebra as Lie algebra equipped with a a cocyle. We can equivalently define it as a pair of Lie algebra, $(\g,\g^*)$ in duality, where the Lie bracket of $\g^*$ is defined by the cocyle of $\g$. This is this type of definition used in \cite{CHEN201359}.   

\begin{definition}\cite{CHEN201359} A Lie 2-bi-algebra (aka a Lie bi-algebra crossed module) is a pair of infinitesimal crossed modules in duality, $(\g\ltimes \h,dt)$,  $(\h^*\rtimes\g^*,-dt^*)$, such that $(\g\ltimes \h, \h^*\ltimes \g^*)$ form a Lie bi-algebra.  
\end{definition}
In particular the pairs $(\g,\g^*)$ and $(\h,\h^*)$ are Lie bi-algebras \cite{CHEN201359}. 
 The following theorem will be quite useful when constructing explicit Lie 2-bi-algebras in Sec. \ref{sec:semi-cross}. 

\begin{theorem}\cite{CHEN201359}\label{thm:match-crossed} Assume that $(\g\ltimes\h, dt)$ and $(\h^*\ltimes \g^*, -dt^*)$ are infinitesimal crossed modules, then they form a Lie 2-bi-algebra if and only if $(\g,\h^*)$ form a matched pair of Lie algebras $\g\bowtie\h^*$, where the  action of $\g$ on $\h^*$ is the dual of the action of $\g$ on $\h$, and the action of $\h^*$ on $\g$ is dual to the action of $\h^*$ on $\g^*$. 
\end{theorem}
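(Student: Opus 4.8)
\textbf{Proof strategy for Theorem \ref{thm:match-crossed}.}

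The plan is to unpack the definition of a Lie 2-bi-algebra and translate it into statements about matched pairs. By the preceding definition, the data of a Lie 2-bi-algebra consists of the two infinitesimal crossed modules $(\g\ltimes\h,dt)$ and $(\h^*\ltimes\g^*,-dt^*)$ together with a Lie bi-algebra structure on the pair $(\g\ltimes\h,\,\h^*\ltimes\g^*)$. The first step is to write out explicitly what this latter Lie bi-algebra structure means: the bracket on $\d:=\g\ltimes\h$, the bracket on $\d^*:=\h^*\ltimes\g^*$, and the compatibility (co-cycle) condition relating them, exactly as in Definition \ref{Def_ClassicalDouble} but now with $\d$ having the prescribed semi-direct-sum structure. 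Since $\g\ltimes\h$ is a semi-direct sum, its bracket is determined by the brackets on $\g$ and $\h$ and the action $\triangleright:\g\times\h\to\h$; dually $\h^*\ltimes\g^*$ is determined by the brackets on $\h^*$ and $\g^*$ and the action $\triangleright:\h^*\times\g^*\to\g^*$. So the only ``free'' cross-brackets in the classical double $\d\bowtie\d^*$ are those between $\g$ and $\h^*$ (the brackets $[\g,\g^*]$, $[\h,\h^*]$, $[\h,\g^*]$ being already fixed by the semi-direct structures and the pairing, as noted after the definition, where $(\g,\g^*)$ and $(\h,\h^*)$ are themselves Lie bi-algebras).

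Next I would observe that the co-cycle conditions of the big Lie bi-algebra, restricted appropriately, force $(\g,\g^*)$ and $(\h,\h^*)$ to be Lie sub-bi-algebras in duality (this is the content of the remark ``In particular the pairs $(\g,\g^*)$ and $(\h,\h^*)$ are Lie bi-algebras''), and that the remaining content is precisely a compatibility between the $\g$-part and the $\h^*$-part. Concretely, the cross-bracket $[x,\xi]$ for $x\in\g$, $\xi\in\h^*$ must decompose as $x\triangleright\xi - \xi\triangleright x$ with $x\triangleright\xi\in\h^*$ and $\xi\triangleright x\in\g$; the claim is that $x\triangleright\xi$ is forced to be the dual of the crossed-module action $\g\curvearrowright\h$, and $\xi\triangleright x$ is forced to be the dual of the crossed-module action $\h^*\curvearrowright\g^*$. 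I would verify this by pairing the mixed Jacobi identities in $\d\bowtie\d^*$ against elements of $\g^*$ and $\h$ respectively and using the duality relations \eqref{Lie_bi-linear_map}--\eqref{Liedual:brackets_co-cycle}; this identifies the two halves of the cross-bracket with the claimed dual actions. Once the actions are pinned down, ``$(\g,\h^*)$ form a matched pair $\g\bowtie\h^*$'' is exactly the statement that these two actions satisfy the matched-pair (Jacobi) axioms, which are in turn equivalent — via the same pairing argument run backwards — to the co-Jacobi identity for $\d^*$ together with the co-cycle condition for $\d$, i.e. to $(\g\ltimes\h,\h^*\ltimes\g^*)$ being a Lie bi-algebra. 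Running the argument in both directions gives the ``if and only if''.

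The main obstacle I anticipate is purely bookkeeping: there are many simultaneous structures in play (two Lie-algebra brackets, two actions $\triangleright$, two more actions dual to them, two differentials $dt$ and $-dt^*$, and the big bracket on the 8-term-ish double), and the delicate part is checking that the compatibility of $dt$ with the actions (the crossed-module axioms) is consistent with — and in fact redundant given — the matched-pair axioms on $(\g,\h^*)$, so that no extra conditions sneak in. I would handle this by systematically listing the mixed Jacobi identities $[\,[\,\cdot,\cdot\,],\cdot\,]$ with arguments drawn from each of $\g,\h,\h^*,\g^*$, grouping them by which ``sector'' ($\g$, $\h$, $\h^*$, or $\g^*$) the output lands in, and matching each group against one of: the Lie bi-algebra axiom for $(\g,\g^*)$, the Lie bi-algebra axiom for $(\h,\h^*)$, the crossed-module compatibility of $dt$, or the matched-pair axiom for $\g\bowtie\h^*$. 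The expectation is that these four families exhaust the content with no leftover, which is what makes the equivalence clean; confirming that exhaustion (rather than computing any single identity) is where the real work lies, and I would lean on \cite{CHEN201359} for the organizing framework rather than rederiving it from scratch.
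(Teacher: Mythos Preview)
The paper does not supply its own proof of this theorem: it is quoted verbatim from \cite{CHEN201359} and stated without argument, so there is nothing in the paper to compare your proposal against. Your outline is a reasonable reconstruction of how such a proof would go --- reducing the big Lie bi-algebra condition on $(\g\ltimes\h,\h^*\ltimes\g^*)$ to the mixed-sector compatibility between $\g$ and $\h^*$ --- but note that you end by saying you would ``lean on \cite{CHEN201359} for the organizing framework rather than rederiving it from scratch'', which is exactly what the paper does; if you want an independent proof you will have to carry out the full case analysis of the mixed Jacobi identities yourself rather than deferring to the source being cited.
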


\subsection{Heisenberg double and skeletal crossed modules}
\label{Sec_Semidualization}
In this section, we discuss how we can construct Heisenberg doubles out of a pair of  skeletal crossed modules. The key idea is that if a skeletal crossed module still behaves as a group then it can be embedded into an Heisenberg double. The simplest example is  when the 2-group is trivial, namely just a group. For example, $T^*G\cong \g^*\rtimes G$ can be seen\footnote{Interestingly, $T^*G$ can also be seen as a 2-group. It would be interesting to determine whether  general  notion of 2-Heisenberg double, eg a 2-group built out of a pair of 2-groups equipped with a symplectic structure, can be defined in some way. } as built from a pair of trivial 2-groups, $G$ and $\g^*$. This is the typical structure that appears when  discretizing a BF theory \cite{peter}.  skeletal 2-groups provide an another example.

In particular, we will recover the Poisson version of the $\kappa$-Poincar\'e group and its dual, the $\kappa$-Poincar\'e algebra. Said differently, we will recover the Poincar\'e group equipped with a non-trivial Poisson structure which can then be interpreted as a non trivial Poisson 2-group.

\subsubsection{Semi-dualized Lie bi-algebras and skeletal Lie 2-bialgebras}\label{sec:semi-cross}

Consider the Lie bialgebras $\g_1$ and $\g_2$ with generators $e^i$ and $f_i$ respectively, with dual spaces $\g^*_1 $ and $\g^*_2$ generated by $e^*_i$ and $f^*_i$, equipped with a bi-linear map, defined by
\be
    \la e^i , e^*_j \ra = \la f^i , f^*_j \ra = \delta^i_j ,
    \label{Lie_bi-linear_map_pair}
\ee
and all the other pairings between generators vanish. 

We demand that $\g_1$ and $\g_2$ have trivial co-cycles\footnote{This condition might be released, we leave it for further investigations.}, so that the $\g_1$ and $\g_2$ bialgebra structures are respectively
\begin{align}
&    [e^i , e^j] = {c^{ij}}_k e^k
    \,\, , \quad
    \delta_{\g_1} (e^i) = 0,\\
 &   [f^i , f^j] = {d^{ij}}_k f^k
    \,\, , \quad
    \delta_{\g_2} (f^i) = 0 .
\end{align}

\paragraph{Semi-dualized Lie bi-algebras. } 
First we introduce the concept of matched pair of Lie algebras, we will focus on the cocycle sector after that.
\begin{definition}[Matched pair]     \cite{Majid:1996kd}
\label{Def_MatchedPair}
    Consider a pair of Lie algebras $\g_1$ and $\g_2$, they form a matched pair, noted $\g_1\bowtie \g_2$, if there exist actions
    \begin{align}
        \triangleleft :
        & \quad \mathfrak{g}_2 \ot \mathfrak{g}_1 \to \mathfrak{g}_2
        &
        \triangleright :
        & \quad \mathfrak{g}_2 \ot \mathfrak{g}_1 \to \mathfrak{g}_1 \nonumber \\
        & \quad f^i \triangleleft e^j = {\alpha^{ij}}_k f^k
        &
        & \quad f^i \triangleright e^j = {\beta^{ij}}_k e^k 
        \label{Lie_action-back_action}
    \end{align}
    that obey to the conditions
    \begin{align}
        [f^i , f^j] \rhd e^k & = 
        f^i \rhd (f^j \rhd e^k) - f^j \rhd (f^i \rhd e^k) 
        \,\, , \qquad
        f^i \lhd [e^j , e^k] = 
        (f^i \lhd e^j) \lhd e^k - (f^i \lhd e^k) \lhd e^j , 
        \label{MatchedPair_Condisions_1} \\
        f^i \rhd [e^j , e^k] & = 
        [f^i \rhd e^j , e^k] + [e^j , f^i \rhd e^k] +
        (f^i \lhd e^j) \rhd e^k - (f^i \lhd e^k) \rhd e^j , \\
        [f^i , f^j] \lhd e^k & = 
        [f^i \lhd e^k , f^j] + [f^i , f^j \lhd e^k] +
        f^i \lhd (f^j \rhd e^k) - f^j \lhd (f^i \rhd e^k) .
        \label{MatchedPair_Condisions_2}
    \end{align}
\end{definition}
Given a matched pair, we can consider the dual picture.

\begin{definition}[Dual of a matched pair]
\label{Def_DualMatchedPair}     \cite{Majid:1996kd}
    Consider a matched pair $(\g_1,\g_2)$ equipped with trivial co-cycles. The dual of such matched pair is noted $\g_1\blackbowtie\g_2$ and is a pair of Lie bi-algebras $\g_1^*$ and $\g_2^*$, with Lie brackets and co-cycles inferred by \eqref{Liedual:brackets_co-cycle} and \eqref{Lie_bi-linear_map_pair},
    \begin{align}
        &
        [e^*_i , e^*_j] = 0
        \,\, , \quad
        \delta_{\g_1^*} (e^*_i) = \demi {c^{jk}}_i e^*_j \wedge e^*_k , \\
        &
        [f^*_i , f^*_j] = 0
        \,\, , \quad
        \delta_{\g_2^*} (f^*_i) = \demi {d^{jk}}_i f_j^* \wedge f^*_k ,
    \end{align}
    equipped with the pair of co-actions
    \begin{align}
        \alpha :
        & \quad \mathfrak{g}_2^* \to \mathfrak{g}_2^* \ot \mathfrak{g}_1^*
        &
        \beta :
        & \quad \mathfrak{g}_1^* \to \mathfrak{g}_2^* \ot \mathfrak{g}_1^* \nonumber \\
        & \quad \alpha(f_i^*) = {\alpha^{jk}}_i f_j^* \ot e_k^*
        &
        & \quad \beta(e_i^*) = {\beta^{jk}}_i f_j^* \ot e_k^* 
        \label{Lie_action-back_co-action}
    \end{align}
    derived, by dualization of the actions \eqref{Lie_action-back_action}, as
    \be
        \la f^i \lhd e^j , f^*_k \ra = \la f^i \ot e^j , \alpha(f^*_k) \ra
        \,\,,\quad
        \la f^i \rhd e^j , e^*_k \ra = \la f^i \ot e^j , \beta(e^*_k) \ra .
    \ee    
\end{definition}
We emphasize that if actions are denoted by triangles of the type $\lhd$, or $\rhd$, co-actions are  denoted by black triangles of the type $\blacktriangleleft$, or $ \blacktriangleright$.

From a  matched pair $(\g_1,\g_2)$, we just  built its ``full" dual $(\g_1^*,\g_2^*)$.  Instead we can only dualize a given component of the matched pair, say $\g_2$ for example. We get in this case a bicross sum.

\begin{definition}[Bicross sum]
\label{Def_BicrossSum}   
    \cite{Majid:1996kd}
    Consider the matched pair $(\g_1,\g_2)$, there exist two bicross sum Lie bi-algebras dual to each other\footnote{The co-action is encoded by the black triangle.}: the Lie bi-algebra $\g_2^* \bicrossr \g_1$, equipped with an action of $\g_1$ on $\g_2^*$ and a co-action $\beta$ of $\g_2^*$ on $\g_1$,
    \begin{align}
        \rhd^* :
        & \quad \mathfrak{g}_1 \ot \mathfrak{g}_2^* \to \mathfrak{g}_2^*
        &
        \beta :
        & \quad \mathfrak{g}_1 \to \mathfrak{g}_1 \ot \mathfrak{g}_2^*
        \nonumber \\
        & \quad e^i \rhd^* f^*_j = {\alpha^{ki}}_j f^*_k
        &
        & \quad \beta(e^i) = {\beta^{ki}}_j e^j \ot f^*_k
    \end{align}
    and the Lie bi-algebra $\g_2 \bicrossl \g_1^*$, equipped with an action of $\g_2$ on $\g_1^*$ and a co-action of $\g_1^*$ on $\g_2$,
    \begin{align}
        \lhd^* :
        & \quad \mathfrak{g}_1^* \ot \mathfrak{g}_2 \to \mathfrak{g}_1^*
        &
        \beta :
        & \quad \mathfrak{g}_2 \to \mathfrak{g}_1^* \ot \mathfrak{g}_2
        \nonumber \\
        & \quad e^*_i \lhd^* f^j = {\beta^{jk}}_i e^*_k
        &
        & \quad \alpha(f^i) = {\alpha^{ij}}_k e^*_j \ot f^k
    \end{align}
    The actions of $\g_1$ on $\g_2^*$ and of $\g_2$ on $\g_1^*$ are derived by dualization of the actions in the matched pair \eqref{Lie_action-back_action}:
    \be
        \la e^i \rhd^* f^*_j , f^k \ra = \la f^*_j , f^k \lhd e^i \ra
        \,\, , \quad
        \la e^*_i \lhd^* f^j , e^k \ra = \la e^*_i , f^j \rhd e^k \ra .
        \label{Action_g1-g2*_g2g1*}
    \ee
    The co-actions of $\g_2^*$ on $\g_1$ and of $\g_1^*$ on $\g_2$ are derived by dualization of the actions $\lhd^*,\rhd^*$ in \eqref{Action_g1-g2*_g2g1*}:
    \be
        \la \beta(e^i) , e^*_j \ot f^k \ra = \la e^i , e^*_j \lhd^* f^k \ra
        \,\, , \quad
        \la \alpha(f^i) , e^j \ot f^*_k \ra = \la f^i , e^j \rhd^* f^*_k \ra .
    \ee
    The cross Lie brackets of $\g_2^* \bicrossr \g_1$ and $\g_2 \bicrossl \g_1^*$ are induced by the actions
    \be
        [e^i , f^*_j] = e^i \rhd^* f^*_j = {\alpha^{ki}}_j f^*_k 
        \,\, , \quad
        [e^*_i , f^j] = e^*_i \lhd^* f^j = {\beta^{jk}}_i e^*_k ,
    \ee
    and the co-cycle of $\g_1$ and $\g_2$ are corrected by the co-actions
    \be
        \begin{aligned}
            \delta_{\g_2^* \bicrossr \g_1} (e^i) & = 
            \delta_{\g_1}(e^i) + (1 - \tau) \circ \beta(e^i) = 
            {\beta^{ki}}_j e^j \wedge f^*_k , \\
            \delta_{\g_2 \bicrossl \g_1^*} (f^i) & = 
            \delta_{\g_2}(f^i) + (1 - \tau) \circ \alpha(f^i) = 
            {\alpha^{ij}}_k e^*_j \wedge f^k .
            \end{aligned} 
    \ee
    
\end{definition}
We can now construct two types of matched pairs. The first one consists in putting together the original (matched pair) $\g_1\bowtie \g_2$ and its dual. Doing this we recover the Lie algebra $\d$ which was used to generate the Heisenberg double associated to $\cD\cong T^*(G_1\bowtie G_2)$.
As such this is a pretty standard construction.

\begin{proposition}[Classical double behind $T^*(G_1\bowtie G_2)$]
\label{Prop_CotangentBundle}
    Given a matched pair $(\g_1,\g_2)$ endowed with trivial co-cycles, the bi-algebra 
    \be
        \d = (\g_1 \bowtie \g_2)^{cop} \bowtie (\g_1^* \blackbowtie \g_2^*) = (\g_1 \bowtie \g_2) \ltimes (\g_1^* \blackbowtie \g_2^*)
    \ee
    with the Lie brackets and co-cycles of $\g_1 \bowtie \g_2$ and $\g_1^* \blackbowtie \g_2^*$ being
    \begin{align}
        [e^i , e^j] & = {c^{ij}}_k e^k
            \,\, &&
            [f^i , f^j] = {d^{ij}}_k f^k \\
            [f^i , e^j] &= {\alpha^{ij}}_k f^k + {\beta^{ij}}_k e^k && [f^*_i , f^*_j] = [e^*_i , e^*_j]=[f^*_i , e^*_j]=  0\\
            \delta_{\g} (e^i) & = 0 
            &&
            \delta_{\g} (f^i) = 0 \\
            \delta_{\g^*} (e^*_i) & =
            \demi {c^{jk}}_i e^*_j \wedge e^*_k + {\beta^{jk}}_i f^*_j \wedge e^*_k &&
            \delta_{\g^*} (f^*_i)  =  
            \demi {d^{jk}}_i f^*_j \wedge f^*_k + {\alpha^{jk}}_i f^*_j \wedge e^*_k .
            \label{Co-cycles_ClassicalDouble_Cotangent}
    \end{align}
    is a classical double. 
    The action of $\g_1 \bowtie \g_2$ on $\g_1^* \blackbowtie \g_2^*$ is the co-adjoint action $ad^*: (\g_1^* \op \g_2^*) \ot (\g_1 \bowtie \g_2) \to (\g_1^* \op \g_2^*)$ defined in \eqref{Co-Adjoint_Action}, and induces the cross Lie brackets
    \be
        \begin{aligned}
            [f^*_i , f^j] & = {d^{jk}}_i f^*_k + {\alpha^{jk}}_i e^*_k
            \,\, , \quad
            [e^*_i , f^j] = {\beta^{jk}}_i e^*_k , \\
            [e^*_i , e^j] & = {c^{jk}}_i e^*_k - {\beta^{kj}}_i f^*_k
            \,\, , \quad
            [f^*_i , e^j] = - {\alpha^{kj}}_i f^*_k .
        \end{aligned}
    \ee
    The co-cycle of the Lie bi-algebra $\d$ can be expressed in terms of the $r$-matrix
    \be
        r = e^*_i \ot e^i + f^*_i \ot f^i .
        \label{Classical_r-matrix_Cotangent}
    \ee
\end{proposition}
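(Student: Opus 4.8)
The plan is to verify that the stated data---a Lie algebra with the listed brackets together with the listed co-cycles---actually assembles into a classical double in the precise sense of Definition \ref{Def_ClassicalDouble}, i.e. that it is the double cross sum $\g^{cop}\bowtie\g^*$ for $\g=\g_1\bowtie\g_2$, and that its co-cycle is the co-boundary of the canonical $r$-matrix \eqref{Classical_r-matrix_Cotangent}. So the first step is to identify the two halves: take $\g=\g_1\bowtie\g_2$ with the matched-pair brackets \eqref{Lie_action-back_action} and trivial co-cycle, and take its dual Lie bi-algebra, which by Definition \ref{Def_DualMatchedPair} is exactly $\g^*=\g_1^*\blackbowtie\g_2^*$ with the stated vanishing brackets and the dual co-cycles built from ${c^{jk}}_i$, ${d^{jk}}_i$ and (via the co-actions $\alpha,\beta$) the mixed terms ${\beta^{jk}}_i$, ${\alpha^{jk}}_i$. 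One must check that the pairing \eqref{Lie_bi-linear_map_pair} correctly pairs $\g$ with $\g^*$ and that the conditions \eqref{Liedual:brackets_co-cycle} reproduce precisely the co-cycles written in \eqref{Co-cycles_ClassicalDouble_Cotangent}; this is a direct dualization that is already packaged in Definitions \ref{Def_MatchedPair}--\ref{Def_BicrossSum}.

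Next I would compute the cross brackets of the double from the co-adjoint actions \eqref{Co-Adjoint_Action}, as dictated by \eqref{ClassicalDouble_LieBrackets}. Concretely, $ad^*$ of $e^i$ and $f^i$ on the generators $e^*_j,f^*_j$ is read off by dualizing the adjoint action of $\g=\g_1\bowtie\g_2$ on itself; because the $\g$-brackets mix $e$'s and $f$'s through ${\alpha^{ij}}_k$ and ${\beta^{ij}}_k$, the co-adjoint action mixes $e^*$'s and $f^*$'s, producing exactly the four cross-bracket formulas displayed in the proposition. I would verify each of the four lines by pairing against an arbitrary generator of $\g$ and using $\la [x,y],\xi\ra=\la y, ad^*_x\xi\ra$ together with \eqref{Lie_action-back_action}. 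Then the semidirect-product structure $\d=\g\ltimes\g^*$ (as opposed to a genuine bicrossed sum) follows because $\g^*$ has vanishing brackets, so it sits in $\d$ as an abelian ideal acted on by $\g$ via $ad^*$; this also justifies rewriting $(\g_1\bowtie\g_2)^{cop}\bowtie(\g_1^*\blackbowtie\g_2^*)$ as $(\g_1\bowtie\g_2)\ltimes(\g_1^*\blackbowtie\g_2^*)$.

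Finally I would check the $r$-matrix claim: with $r=e^*_i\ot e^i+f^*_i\ot f^i$ and $r_-=\tfrac12(e^*_i\wedge e^i+f^*_i\wedge f^i)$, one verifies the co-boundary identity \eqref{Co-boundaryEq}, $\delta_\d(x)=[x\ot 1+1\ot x,r_-]$ for all $x\in\d$, by testing on the generators $e^i,f^i,e^*_i,f^*_i$; for the $e^i,f^i$ this should give zero (consistent with $\delta_\g=0$, since the classical double has $\delta_\d=-\delta_\g$ on $\g$), and for $e^*_i,f^*_i$ it should reproduce the dual co-cycles in \eqref{Co-cycles_ClassicalDouble_Cotangent}. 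One also checks the classical Yang--Baxter-type consistency, i.e. that $r_+=\tfrac12(r+r^t)$ is $ad$-invariant, which here is essentially the statement that the pairing \eqref{Lie_bi-linear_map_pair} is invariant---a consequence of the matched-pair axioms \eqref{MatchedPair_Condisions_1}--\eqref{MatchedPair_Condisions_2}. The co-Jacobi identity for $\delta_\d$ then comes for free from the co-boundary form once CYBE-type invariance holds, so no separate check is needed.

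The main obstacle I anticipate is purely bookkeeping: keeping the index placement and the distinction between actions ($\lhd,\rhd$) and co-actions ($\blacktriangleleft,\blacktriangleright$) straight while dualizing, and confirming that the mixed terms ${\alpha^{jk}}_i f^*_j\wedge e^*_k$ and ${\beta^{jk}}_i f^*_j\wedge e^*_k$ appearing in $\delta_{\g^*}$ are exactly the ones generated both by the co-actions in Definition \ref{Def_BicrossSum} and by the co-boundary of $r_-$---i.e. that the two independent routes (matched-pair dualization vs. $r$-matrix co-boundary) genuinely agree. Once that compatibility is pinned down, the rest is an application of Definition \ref{Def_ClassicalDouble} with no real surprises; in fact one could shortcut most of the work by invoking the general fact that the double of any Lie bi-algebra is a co-boundary Lie bi-algebra with canonical $r$-matrix $e^*_i\ot e^i$ (here with the index set enlarged to include the $f$-sector), and then only needs to check that $\g_1^*\blackbowtie\g_2^*$ is indeed the bi-algebra dual of $\g_1\bowtie\g_2$, which is Definition \ref{Def_DualMatchedPair}.
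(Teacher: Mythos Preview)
Your proposal is correct and follows essentially the same approach as the paper's proof: identify $\g=\g_1\bowtie\g_2$ and its dual $\g^*=\g_1^*\blackbowtie\g_2^*$ via Definitions \ref{Def_MatchedPair}--\ref{Def_DualMatchedPair}, obtain the cross brackets from the co-adjoint action, and verify the co-cycle via the co-boundary of the canonical $r$-matrix. You add a few details the paper leaves implicit (the explicit reason for the semidirect form $\g\ltimes\g^*$, the generator-by-generator co-boundary check, and the $ad$-invariance of $r_+$), but the logical skeleton is the same.
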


\begin{proof}
    The matched pair $(\g_1,\g_2)$ described in Def. \ref{Def_MatchedPair}, defined on the vector space $\g_1 \op \g_2$ and equipped with trivial co-cycles, as a Lie bi-algebra is equivalent to a classical double (introduced in Def. \ref{Def_ClassicalDouble}) with cross Lie brackets induced by the actions of the matched pair \eqref{Lie_action-back_action}. \\
    Similarly, the dual of the matched pair is a Lie bi-algebra with trivial Lie brackets and co-cycles given by the co-cycles of $\g_1^*$ and $\g^*_2$, corrected by the co-actions \eqref{Lie_action-back_co-action}:
    \be
        \delta_{\g_1^* \blackbowtie \g_2^*} (e^*_i) =
        \delta_{\g_1^*} (e^*_i) + (1 - \tau) \circ \beta (e^*_i)
        \,\, , \quad
        \delta_{\g_1^* \blackbowtie \g_2^*} (f^*_i) =
        \delta_{\g_2^*} (f^*_i) + (1 - \tau) \circ \alpha (f^*_i) .
    \ee
    The Lie bi-algebra $\d$ is thus constructed as the cross sum of $\g_1 \bowtie \g_2$ and $\g_1^* \blackbowtie \g_2^*$, with action of the first on the second given by the co-adjoint actions defined in \eqref{Co-Adjoint_Action}. One can check that the co-adjoint action of $\g_1 \bowtie \g_2$ on $\g_1^* \blackbowtie \g_2^*$ satisfies the conditions \eqref{MatchedPair_Condisions_1}-\eqref{MatchedPair_Condisions_2} for $\d$ to be a matched pair; to carry out this computation it is enough to use the Jacobi identity for $\g_1$ and $\g_2$, plus the axioms \eqref{MatchedPair_Condisions_1}-\eqref{MatchedPair_Condisions_2} for the matched pair $(\g_1,\g_2)$. \\
    The co-cycles \eqref{Co-cycles_ClassicalDouble_Cotangent} can be derived through the co-boundary equation \eqref{Co-boundaryEq} using the classical $r$-matrix \eqref{Classical_r-matrix_Cotangent}. 
\end{proof}

The second match pair we can construct can be viewed as a semi-dualization of the previous option: we interchange the $\g_2$ and $\g_2^*$ sectors. We then construct a classical double out of a pair of bicross algebras. This is the structure we will use to deal with non-trivial Poisson 2-groups.  This  can generate curvature in momentum space. As we will see below, it is  the key structure relevant to study some deformation of special relativity \cite{Lukierski:1992dt, AmelinoCamelia:1999pm}.

\begin{proposition}[Semi-dualized case]
\label{Prop_Semidualization}
    Given a matched pair $(\g_1,\g_2)$ endowed with trivial co-cycles, the bi-algebra
    \be
        \b = (\g_2 \bicrossl \g_1^*)^{cop} \bowtie (\g_2^* \bicrossr \g_1)
    \ee
    with the Lie brackets and co-cycles of $(\g_2 \bicrossl \g_1^*)^{cop}$ and $\g_2^* \bicrossr \g_1$ being
    \begin{align}
        & 
        \begin{aligned}
            [f^i , f^j] & = {d^{ij}}_k f^k
            \,\, , \quad
            [e^*_i , e^*_j] = 0 , \\
            [e^*_i , f^j] & = {\beta^{jk}}_i e^*_k , \\
        \end{aligned}
        &
        \begin{aligned}
            [e^i , e^j] & = {c^{ij}}_k e^k 
            \,\, , \quad
            [f^*_i , f^*_j] = 0 , \\
            [e^i , f^*_j] & = {\alpha^{ki}}_j f^*_k ,
        \end{aligned} \\
        &
        \begin{aligned}
            \delta_{\g_2 \bicrossl \g_1^*} (f^i) & = -
            {\alpha^{ij}}_k e^*_j \wedge f^k , \\
            \delta_{\g_2 \bicrossl \g_1^*} (e^*_i) & = - 
            \demi {c^{jk}}_i e^*_j \wedge e^*_k ,
        \end{aligned}
        &
        \begin{aligned}
            \delta_{\g_2^* \bicrossr \g_1} (e^i) & = 
            {\beta^{ki}}_j e^j \wedge f^*_k , \\
            \delta_{\g_2^* \bicrossr \g_1} (f^*_i) & = 
            \demi {d^{jk}}_i f^*_j \wedge f^*_k .
        \end{aligned}
        \label{Co-cycles_ClassicalDouble_SemiDual}
    \end{align}
    is a classical double. 
    
    Since the Lie bi-algebras $\g_2 \bicrossl \g_1^*$ and $\g_2^* \bicrossr \g_1$ are dual each other, their mutual actions are encoded in \eqref{Lie_action-back_action} and in the respective co-adjoint actions, that induce the cross Lie brackets
    \be
        \begin{aligned}
            [f^*_i , f^j] & = {d^{jk}}_i f^*_k + {\alpha^{jk}}_i e^*_k
            \,\, , \quad
            [e^i , e^*_j] = {c^{ki}}_j e^*_k + {\beta^{ki}}_j f^*_k , \\
            [e^i , f^j] & = -{\alpha^{ji}}_k f^k - {\beta^{ji}}_k e^k
            \,\, , \quad
            [e^*_i , f^*_j] = 0
        \end{aligned}
        \label{CrossActions_Semidual}
    \ee
    The co-cycle of the Lie bi-algebra $\b$ can be expressed in terms of the $r$-matrix
    \be
        r = e^i \ot e^*_i + f^*_i \ot f^i .
        \label{Classical_r-matrix_SemiDual}
    \ee
\end{proposition}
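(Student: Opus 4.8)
The plan is to mirror the proof of Proposition \ref{Prop_CotangentBundle}, now read through the semi-dualization. The key observation is that, by Definition \ref{Def_ClassicalDouble}, it is enough to recognize $\b$ as the classical double of some Lie bi-algebra $\g$, i.e. as $\g^{cop}\bowtie\g^*$ with cross brackets given by the mutual co-adjoint actions \eqref{Co-Adjoint_Action} and the canonical $r$-matrix. The right choice here is $\g=\g_2\bicrossl\g_1^*$, whose dual is $\g^*=\g_2^*\bicrossr\g_1$: indeed Definition \ref{Def_BicrossSum} already records that these two bicross sums are Lie bi-algebras dual to each other with respect to the pairing \eqref{Lie_bi-linear_map_pair} (with $f^i\leftrightarrow f^*_i$ and $e^*_i\leftrightarrow e^i$). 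So the first step is just to unwind Definition \ref{Def_BicrossSum} to read off the Lie brackets and co-cycles of $\g_2\bicrossl\g_1^*$ and of $\g_2^*\bicrossr\g_1$ separately, which gives the left and right columns of \eqref{Co-cycles_ClassicalDouble_SemiDual} up to the $cop$ decoration.

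The second step is to apply the two operations that turn the pair $(\g,\g^*)$ into its classical double. Passing to $\g^{cop}$ flips the sign of the co-cycle of $\g_2\bicrossl\g_1^*$, which is exactly what produces the minus signs in front of ${\alpha^{ij}}_k e^*_j\wedge f^k$ and $\demi {c^{jk}}_i e^*_j\wedge e^*_k$ in \eqref{Co-cycles_ClassicalDouble_SemiDual}, while the co-cycles of $\g_2^*\bicrossr\g_1$ are carried over untouched. The cross brackets \eqref{CrossActions_Semidual} are then computed from \eqref{Co-Adjoint_Action}: the $(\g_2,\g_2^*)$ sector yields $[f^*_i,f^j]={d^{jk}}_i f^*_k+{\alpha^{jk}}_i e^*_k$, the $(\g_1,\g_1^*)$ sector yields $[e^i,e^*_j]={c^{ki}}_j e^*_k+{\beta^{ki}}_j f^*_k$, the mixed sector yields $[e^i,f^j]=-{\alpha^{ji}}_k f^k-{\beta^{ji}}_k e^k$, and $[e^*_i,f^*_j]=0$ since $\g_1^*$ and $\g_2^*$ have trivial bracket and trivial mutual co-action. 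Because the generators of $\g$ are $\{f^i,e^*_i\}$ and those of $\g^*$ are $\{f^*_i,e^i\}$, the canonical $r$-matrix $r=\sum(\text{dual generator})\ot(\text{generator})$ is precisely \eqref{Classical_r-matrix_SemiDual}, and one checks that the co-cycles \eqref{Co-cycles_ClassicalDouble_SemiDual} are reproduced by the co-boundary equation \eqref{Co-boundaryEq} with $r_-=\demi(r-r^t)$.

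It then remains, exactly as in Proposition \ref{Prop_CotangentBundle}, to confirm that this double cross sum is a genuine matched pair, i.e. that the co-adjoint actions obey \eqref{MatchedPair_Condisions_1}-\eqref{MatchedPair_Condisions_2}. Expanding everything against the pairing reduces this to the Jacobi identities of $\g_1$ and $\g_2$ together with the matched-pair axioms \eqref{MatchedPair_Condisions_1}-\eqref{MatchedPair_Condisions_2} for $(\g_1,\g_2)$; the one ingredient beyond the cotangent case is the compatibility, internal to each bicross sum, between the action $\rhd^*$ (resp. $\lhd^*$) and the co-action $\beta$ (resp. $\alpha$), which is itself a dualized restatement of those same axioms, so nothing new is needed.

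I expect the main obstacle to be purely computational bookkeeping rather than anything structural: because only the $\g_2$-leg has been dualized, the cross brackets \eqref{CrossActions_Semidual} are no longer symmetric between the two halves (contrast the ``diagonal'' $r$-matrix of the cotangent case in Proposition \ref{Prop_CotangentBundle}), so keeping track of which Jacobi identity and which matched-pair axiom feeds each instance of \eqref{MatchedPair_Condisions_1}-\eqref{MatchedPair_Condisions_2}, and getting every sign from the $cop$ and from the transpose $r^t$ correct, is where the care lies; the duality claim and the $r$-matrix verification are then routine once the cross brackets are in hand.
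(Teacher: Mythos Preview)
Your proposal is correct and follows essentially the same route as the paper: build the two bicross sums from Definition \ref{Def_BicrossSum}, observe they are dual, feed them into Definition \ref{Def_ClassicalDouble} (with the $cop$ sign flip), read off the cross brackets from the co-adjoint actions, and verify the matched-pair axioms \eqref{MatchedPair_Condisions_1}--\eqref{MatchedPair_Condisions_2} by reducing them to the Jacobi identities of $\g_1$, $\g_2$ together with the original matched-pair conditions. Your sketch is in fact more explicit than the paper's own proof about which sector contributes which cross bracket and why the $r$-matrix takes the form \eqref{Classical_r-matrix_SemiDual}.
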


\begin{proof}
    Given the matched pair $(\g_1,\g_2)$ equipped with trivial co-cycles, we first build the bicross sum Lie bi-algebras $\g_2 \bicrossl \g_1^*$ and $\g_2^* \bicrossr \g_1$ as explained in Def. \ref{Def_BicrossSum}. \\
    Since these two Lie bi-algebras are dual each other, we can use them to construct a classical double as in Def. \ref{Def_ClassicalDouble}, or equivalently a matched pair as in Def. \ref{Def_MatchedPair}, with mutual actions given by the actions of the matched pair $(\g_1,\g_2)$, eq. \eqref{Lie_action-back_action}, plus the co-adjoint actions of $\g_1$ on $\g_1^*$ and of $\g_2$ on $\g_2^*$. One can check that the mutual action between $\g_2 \bicrossl \g_1^*$ and $\g_2^* \bicrossr \g_1$ satisfy the properties \eqref{MatchedPair_Condisions_1}-\eqref{MatchedPair_Condisions_2}; to carry out this computation it is enough to use the Jacobi identity for $\g_1$ and $\g_2$, plus the axioms \eqref{MatchedPair_Condisions_1}-\eqref{MatchedPair_Condisions_2} for the matched pair $(\g_1,\g_2)$. \\
    The co-cycles \eqref{Co-cycles_ClassicalDouble_SemiDual} can be derived through the co-boundary equation \eqref{Co-boundaryEq} using the classical $r$-matrix \eqref{Classical_r-matrix_SemiDual}.
\end{proof} 

We emphasize for the following that the Lie algebras $\g_1^*$ and $\g_2^*$ are always abelian Lie algebras. We note that $\d$ and $\b$ are identical as Lie algebras,  but we have been identifying two different types of sub Lie algebras. Indeed, the list of Lie brackets is always the same for $\d$ or $\b$ and we have sliced  the total algebra in two different ways.    $\d$ and $\b$ are however different as Lie bi-algebras.

Let us show now how the semi-dualized case can be interpreted in terms of the 2-group framework. 

\medskip 

\paragraph{Lie 2-bialgebra from semi-dualization. }
According to Thm. \ref{thm:match-crossed}, $\g\ltimes\h$ and $\h^*\ltimes \g^*$ are both Lie 2-bi-algebras if $\g$ and $\h^*$ form a matched pair. If we set 
\be
\g\equiv \g_1, \, \h^*\equiv \g_2, \, \g^*\equiv \g_1^*, \, \h\equiv \g_2^*,
\ee
then the theorem states that $\g_1\ltimes \g_2^*$ and $\g_2\ltimes \g_1^*$ are both Lie 2-bi-algebras if $\g_1$ and $\g_2$ form a matched pair, $\g_1\bowtie\g_2$, which is exactly the choice we made. More exactly, we also need to have that  the  action of $\g\equiv\g_1$ on $\h^*\equiv \g_2$ is the dual of the action of $\g\equiv\g_1$ on $\h\equiv \g_2^*$, and the action of $\h^*\equiv \g_2$ on $\g\equiv\g_1 $ is dual to the action of $\h^*\equiv \g_2$ on $\g^*\equiv\g_1^* $. This is exactly what the semi-dualization implements.

Having this in mind, we can reconsider the semi-dualized bi-algebra of Prop.\ \ref{Prop_Semidualization}. The components, $(\g_2 \bicrossl \g_1^*)$ and $(\g_2^* \bicrossr \g_1)$ can be viewed individually as infinitesimal crossed modules, or $2$-Lie bi-algebras, with trivial $t$-map. Hence we can interpreted $\b$ as a classical double, defined in terms of \textit{simple} infinitesimal crossed modules. 

\medskip 
We can exponentiate the Lie ($2$-)bi-algebras to recover groups or skeletal crossed modules. Upon exponentiation, we define
\begin{align}
      \g_1 \ltimes \g_2^* \rightarrow \cG= G_1\ltimes G_2^*, \quad    \g_2 \ltimes \g_1^* \rightarrow \cG^*= G_2\ltimes G_1^*.
\end{align}
Hence $\b$ becomes the group\footnote{Once again, we will neglect the possible subtleties that might arise if such splitting cannot be done globally. } $\cB=(G_1\ltimes G_2^*)\bowtie (G_2\ltimes G_1^*)$, which can be equipped with a symplectic Poisson bracket or a Poisson bracket which makes $\cB$ a Poisson Lie group (Drinfeld double). Since both $G_2^*$ and $G_1^*$ are abelian groups, both $(G_1\ltimes G_2^*)$ and $(G_2\ltimes G_1^*)$ can be viewed as skeletal Poisson crossed modules \cite{chen2013}. 

\medskip

As we already alluded, the Poincar\'e case is a key example of bicross bi-algebra. We are going to illustrate the construction in this case, and as such recover the Poincar\'e 2-group equipped with a non-trivial Poisson bracket.

\subsubsection{Example:  Poincar\'e case}
We use the metric $\eta^{\mu\nu}$ as the Minkowski metric. 
We take the Lie bi-algebra $\g_2$ to be the $n$ dimensional Lie algebra $\an$ with  generators $P^{\mu}$ having the same dimension as the scale $\kappa\mone$, satisfying\footnote{We are using the time-like deformation but other deformations, such as space or light-like are also possible \cite{Ballesteros:1995mi, Lukierski:1996sb}. The Euclidean signature can also be considered in a similar manner. }
\be
    [P^{\mu} , P^{\nu}] = \kmone (\eta^{\mu 0} P^{\nu} - \eta^{\nu 0} P^{\mu})
    \,\, , \quad
    \delta_{\an} (P^{\mu}) = 0 ,
\ee
and the Lorentz Lie bi-algebra $\g_1$ to be $\so$ of dimension $\frac{n(n-1)}{2}$, generated by $J^{\mu\nu}$, obeying
\be
    [J^{\mu\nu} , J^{\rho\sigma}] =
    \eta^{\mu\rho} J^{\nu\sigma} + 
    \eta^{\nu\sigma} J^{\mu\rho} - 
    \eta^{\mu\sigma} J^{\nu\rho} - 
    \eta^{\nu\rho} J^{\mu\sigma} 
    \,\, , \quad
    \delta_{\so} (J^{\mu\nu}) = 0,
\ee
The Lie bi-algebras $(\an,\so)$, equipped with the mutual actions
\begin{align}
    \triangleleft :
    & \quad \mathfrak{so} \ot \mathfrak{an} \to \mathfrak{so}
    &
    \triangleright :
    & \quad \mathfrak{so} \ot \mathfrak{an} \to \mathfrak{an} \nonumber \\
    & \quad J^{\mu\nu} \triangleleft P^{\rho} = \kmone (\eta^{\nu 0} J^{\mu\rho} - \eta^{\mu 0} J^{\nu\rho})
    &
    & \quad J^{\mu\nu} \triangleright P^{\rho} = \eta^{\mu \rho} P^{\nu} - \eta^{\nu \rho} P^{\mu}
    \label{actions_so-an}
\end{align}
 form a matched pair, as in Def. \ref{Def_MatchedPair}. 

Consider the dual Lie bi-algebras $\g_1^* \equiv \an^*$  and $\g_2^* \equiv \so^*$ generated by $P^*_{\mu}$ and $J^*_{\mu\nu}$ respectively. The bi-linear pairing is defined as
\be
    \la P^{\mu} , P^*_{\nu} \ra =
    {\eta^{\mu}}_{\nu}
    \,\, , \quad
    \la J^{\mu\nu} , J^*_{\rho\sigma} \ra = 
    {\eta^{\mu}}_{\rho} {\eta^{\nu}}_{\sigma} - {\eta^{\nu}}_{\rho} {\eta^{\mu}}_{\sigma} .
\ee
Using \eqref{Liedual:brackets_co-cycle}, we derive the $\an^*$ and $\so^*$ Lie bi-algebra structures:
\begin{align}
    & [P^*_{\mu} , P^*_{\nu}] = 0 
    \,\, , \quad
    \delta_{\an^*} (P^*_{\mu}) = \kmone P^*_{\mu} \wedge P^*_0 
\\
&    [J^*_{\mu\nu} , J^*_{\rho\sigma}] = 0
    \,\, , \quad
    \delta_{\so^*} (J^*_{\mu\nu}) = J^*_{\mu\rho} \wedge {J^*_{\nu}}^{\rho} ,
\end{align}
Finally, by dualization of the actions \eqref{actions_so-an}, we derive the co-actions
\begin{align}
    \alpha :
    & \quad \so^* \to \so^* \ot \an^*
    &
    \beta :
    & \quad \an^* \to \so^* \ot \an^* \nonumber \\
    & \quad \alpha(J^*_{\mu\nu}) = \kmone(J^*_{0 \mu} \ot P^*_{\nu} - J^*_{0 \nu} \ot P^*_{\mu})
    &
    & \quad \beta(P^*_{\mu}) = - {J^*_{\mu}}^{\rho} \ot P^*_{\rho}
    \label{co-actions_so-an}
\end{align}

\paragraph{Classical double of $\an \bowtie \so$.}
Following Prop. \ref{Prop_CotangentBundle}, from the matched pair $(\an,\so)$ with trivial co-cycles, we can construct the cotangent bundle of $\an \bowtie \so$ as a classical double
\be
    \d = (\an \bowtie \so) \ltimes (\an^* \blackbowtie \so^*) ,
\ee
equipped with the classical $r$-matrix
\be
    r = P^*_{\mu} \ot P^{\mu} + \demi J^*_{\mu\nu} \ot J^{\mu\nu} .
\ee

\paragraph{Semi-dualized case.}
According to Prop. \ref{Prop_Semidualization}, given the matched pair $(\an,\so)$ with trivial co-cycles, we can construct the classical double
\be
    \b = (\so \bicrossl \an^*)^{cop} \bowtie (\so^* \bicrossr \an) ,
\ee
equipped with the classical $r$-matrix
\be
    r = P^{\mu} \ot P^*_{\mu} + \demi J^*_{\mu\nu} \ot J^{\mu\nu} .
\ee
In particular, the Lie bi-algebras $\so \bicrossl \an^*$ and $\so^* \bicrossl \an$ are given by
\begin{align}
    &
    \begin{aligned}
        [P^*_{\mu} , P^*_{\nu}] & = 
        0 , \\
        [J^{\mu\nu} , J^{\rho\sigma}] & =
        \eta^{\mu\rho} J^{\nu\sigma} +
        \eta^{\nu\sigma} J^{\mu\rho} -
        \eta^{\mu\sigma} J^{\nu\rho} -
        \eta^{\nu\rho} J^{\mu\sigma} , \\
        [P^*_{\rho} , J^{\mu\nu}] & =
        {\eta^{\nu}}_{\rho} {P^*}^{\mu} - {\eta^{\mu}}_{\rho} {P^*}^{\nu} ,
    \end{aligned}
    &
    \begin{aligned}
        [P^{\mu} , P^{\nu}] & = 
        \kmone (\eta^{\mu 0} P^{\nu} - \eta^{\nu 0} P^{\mu}) , \\
        [J^*_{\mu\nu} , J^*_{\rho\sigma}] & =
        0 , \\
        [P^{\rho} , J^*_{\mu\nu}] & =
        \kmone ({\eta^{\rho}}_{\nu} J^*_{0 \mu} - {\eta^{\rho}}_{\mu} J^*_{0 \nu}) ,
    \end{aligned} \\
    &
    \begin{aligned}
        &
        \delta_{\so \bicrossl \an^*} (P^*_{\mu}) = 
        \kmone P^*_{\mu} \wedge P^*_0 , \\
        &
        \delta_{\so \bicrossl \an^*} (J^{\mu\nu}) = 
        \kmone P^*_{\rho} \wedge (\eta^{\nu 0} J^{\mu \rho} - \eta^{\mu 0} J^{\nu \rho}) ,
    \end{aligned}
    &
    \begin{aligned}
        &
        \delta_{\so^* \bicrossr \an} (P^{\mu}) =
        P^{\rho} \wedge {{J^*}^{\mu}}_{\rho} , \\
        &
        \delta_{\so^* \bicrossr \an} (J^*_{\mu\nu}) = 
        J^*_{\mu\rho} \wedge {J^*_{\nu}}^{\rho}.
    \end{aligned}
\end{align}
The cross Lie brackets, induced by the actions of the matched pair and by the co-adjoint actions, are
\be
    \begin{aligned}
        [P^{\rho} , J^{\mu\nu}] & = 
        \kmone (\eta^{\mu 0} J^{\nu\rho} - \eta^{\nu 0} J^{\mu\rho}) +
        (\eta^{\nu \rho} P^{\mu} - \eta^{\mu \rho} P^{\nu}) , \\
        [J^*_{\mu\nu} , P^*_{\rho}] & = 
        0 , \\
        [J^*_{\rho\sigma} , J^{\mu\nu}] & =
        \big(
        {\eta^{\nu}}_{\rho} {{J^*}^{\mu}}_{\sigma} +
        {\eta^{\mu}}_{\sigma} {{J^*}^{\nu}}_{\rho} -
        {\eta^{\nu}}_{\sigma} {{J^*}^{\mu}}_{\rho} -
        {\eta^{\mu}}_{\rho} {{J^*}^{\nu}}_{\sigma}
        \big)
        \nonumber \\
        & \quad +
        \kmone \big(
        \eta^{\nu 0} ({\eta^{\mu}}_{\rho} P^*_{\sigma} - {\eta^{\mu}}_{\sigma} P^*_{\rho}) -
        \eta^{\mu 0} ({\eta^{\nu}}_{\rho} P^*_{\sigma} - {\eta^{\nu}}_{\sigma} P^*_{\rho})
        \big) 
        , \\
        [P^{\mu} , P^*_{\nu}] & =
        - \kmone \big(
        \eta^{\mu 0} P^*_{\nu} + {\eta^{\mu}}_{\nu} P^*_0
        \big) 
        +
        {{J^*}^{\mu}}_{\nu} 
        .
    \end{aligned}
\ee
As a Lie algebra, $\so \bicrossl \an^*\cong \so\ltimes \mathbb{R}^n$ is the Poincar\'e/Euclidian Lie algebra. So upon exponentiation it would give the Poincar\'e/Euclidian Lie group. Since its co-cycle is non trivial, we would get the Poincar\'e group equipped with a non-trivial Poisson bracket. This is in fact the classical version of the so-called $\kappa$-Poincar\'e group \cite{Zakrzewski_1994}. 

Since it is only the geometrical interpretation that distinguishes the Poincar\'e group from the Poincar\'e 2-group, this construction allows to consider the Poincar\'e 2-group equipped with a non-trivial Poisson structure, giving rise upon quantization to a quantum strict 2-groups \cite{Majid:2012gy}.

\medskip

Similarly, the Lie bi-algebra $\so^* \bicrossl \an$ upon exponentiation gives rise to a group $\AN\ltimes\SO^*\cong \AN\ltimes \mathbb{R}^{{(n^2-n)}/{2}}$ equipped with a non-trivial Poisson structure. Upon quantization this  gives rise to the $\kappa$-Poincar\'e algebra \cite{Zakrzewski_1994, Kosinski}. Once again, this is a semi-direct product of groups, with $\SO^*$ being an abelian group. Hence this can also be interpreted as a skeletal crossed module, equipped with a non-trivial Poisson bracket. 

\medskip

Upon exponentiation, $\b$ becomes the Lie group noted $\cB$, which is isomorphic to $\cD$ as a Lie group, since as Lie algebras, $\b$ and $\d$ are the same.

The Poisson Lie group associated with $\b$ is  the Drinfeld double of the Poincar\'e group; the quantization of such Drinfeld double gives rise to the quantum double built as the double cross product of the $\kappa$-Poincar\'e group and the $\kappa$-Poincar\'e algebra. 
The exponentiation of $\b$, equipped with a symplectic Poisson structure, is the Heisenberg double of the Poincar\'e group $\ISO$, with a non-trivial Poisson brackets, with a non-abelian dual group $\ISO^*\cong \AN\ltimes\SO^*$.  

\medskip

Finally, as we emphasized many times now, the difference between the Poincar\'e group and Poincar\'e 2-group is mainly in the geometrical interpretation of the products. The construction we have presented can therefore be re-interpreted in terms of skeletal crossed modules and we have a Drinfeld double or Heisenberg double defined in terms of a pair of skeletal crossed module, with non-trivial Poisson structures. This will be the main object to construct the phase space for a three dimensional triangulation with decorations both on faces and edges.

\section{Polygon and polyhedron phase space}
\label{Sec_PolyPhaseSpace}\label{sec:polygon}

In this section we describe the phase space of a two dimensional discrete geometry. For simplicity, we focus on a triangulation, but our considerations extend directly to an arbitrary polygonal decomposition. We will focus on the flat case to set up the framework and then move to the case of a triangulation with homogeneously curved triangles. 

We will then recall how a polygon can be used to specify a polyhedron, thanks to the Minkowski theorem. This allows to obtain directly a phase space for a polyhedron.

\subsection{Polygon phase space}

First, we review the seminal work by Kapovich and Millson, restricted to the case of a triangle \cite{kapovich1996}. In the following we will deal the Euclidian picture. Discussion of a Lorentzian picture can be found in \cite{Livine:2018vxi}. 

Consider the $\bbR^3$ vector space in the $\su(2)$ basis: $\ell \equiv \vec \ell \cdot \vec \sigma$, where $\sigma_I$, $I=1,2,3$ are the Pauli matrices. The Killing form on $\su(2)$ allows to encode the scalar product:
\be
    \la \ell , \mathbf{y} \ra = \vec \ell \cdot \vec y 
    \quad \textrm{with} \quad
    \la \sigma_I, \sigma_J\ra =\delta_{IJ} .
\ee
Consider a set of three vectors $\ell_i$ in the Euclidian plane $\bbR^3$, which close, that is 
\be
     \cC=\ell_1 +\ell_2 +\ell_3 = 0. 
    \label{constraint0}
\ee  
This constraint defines a triangle. We interpret the vectors as edges of the triangle, with edge length $l_i\equiv|\ell_i|$. The vectors $\ell_i$ are equivalent to defining geodesics in $\bbR^3$. We note that the triangle can be embedded in $\bbR^3$ \textit{up to  global rotations and translations} in $\bbR^3$. 

We now introduce a phase space structure for such triangle. In the Hamiltonian formalism, a constraint is usually associated to a symmetry invariance \cite{dirac2001lectures}. We would like to pick the Poisson structure such that the geometric constraint $\cC$ becomes a momentum  map generating the symmetry freedom of the triangle embedding  in $\bbR^3$. In particular, we want that this constraint implements the rotational symmetry. 

As argued by Kapovich and Millson \cite{kapovich1996}, the following Poisson structure 
\be
    \label{poisson0}
    \poi{\ell^I_i , \ell^J_j} = \delta_{ij} {\epsilon^{IJ}}_K \ell^K_i
    \,\, , \quad 
    \poi{l_i , \ell^J_j} = 0,
\ee
is precisely such that $\cC$ generates the infinitesimal rotations in space. Hence $\cC$ can be viewed as the relevant momentum map. Note that since we have three edges, each with three components, the dimensions of this space is odd, hence it cannot have a symplectic Poisson bracket. Instead,  Kapovich and Millson \cite{kapovich1996} restricted the vectors to their direction only, so that their norm, the edge length, is fixed. As such the vector $\ell_i$ is an element of the sphere $\cS^2$ and it is well known that the Poisson bracket \eqref{poisson0} is symplectic on such sphere. Hence the final phase space $\cP^{KM}$ for a triangle with fixed edge lengths is given by  the symplectic reduction
\be
    \cP^{KM} = (S^2 \times S^2 \times S^2) /\!/ \SO(3).
\ee
From the gravity perspective, it would be natural to introduce the \textit{length} as a degree of freedom. However by doing so we do not  get a phase space in general, since the reduced space will have dimension $3N-6$, with $N$ being the total number of edges. 

To avoid this issue, the simplest way  is to add extra variables. The interest of doing so is twofold. 
\begin{itemize}
    \item We can introduce more geometric information which would  allow us to extend the phase space to a full 2d cellular decomposition for example. 
    \item We can have a realization of the rotational and \textit{translational} symmetries in terms of our phase space variables.
\end{itemize}
We can consider two different extensions which are actually related. The first consists of adding some holonomy information, and we extend $\bbR^3$ for each edge to $T^*\SO(3)\cong \SO(3)\ltimes \so^*(3)\cong \SO(3)\ltimes \mathbb{R}^3$. The chosen holonomy provides a way to transport vectors on a path dual to the edge. In this sense the holonomy contains some non-local information, which  allow us to discuss about adjacent triangles/polygons sharing the edge dual to the holonomy. 

The second extension consists of extending $\bbR^3$ to $\bbR^4\sim \bbC^2$, where the extra degree of liberty is a phase. The new phase space is therefore parametrized in terms of a pair of complex numbers $(z^-,z^+) \in \bbC^2$, usually called spinor variables (since they will transform as spinor under the infinitesimal rotations). This is a more local picture, working for each triangle/polygon and it is very convenient to describe the algebra of observables for a given polygon \cite{Girelli:2005ii, Freidel:2009ck, Dupuis:2013lka}. This is the root of the \textit{spinor approach} to loop quantum gravity \cite{Livine:2011zz, Livine:2011gp}. 

Let us focus on the first option.

\paragraph{Flat triangle phase space.}
Label the edges of the triangle by $1,2,3$ and associate to each of them a vector $\ell_i \in \bbR^3$. Denote the center of the triangle by $c$ and consider holonomies $h_{ci}\in \SO(3)$ joining the center to a point in the edge $i$. Such path is called \textit{(half) link}.

We take for phase space $T^*\SO(3)\sim \SO(3)\ltimes \R^3$, with symplectic Poisson bracket generated by the coordinate functions $\ell^I$ of $\so^*(3)\cong\bbR^3$ and $h \in\SO(3)$:
\be
    \poi{\ell^I , \ell^J} = \epsilon^{IJ}{}_K \ell^K
    \,\, , \quad 
    \poi{h^I{}_J , h^K{}_L} = 0
    \,\, , \quad 
    \poi{\ell^I , h^J{}_K} = (h \sigma^I)^J{}_K .
\ee 
We note that in the $\ell$ sector (``momentum sector") we have used the same Poisson bracket as  used by Kapovich and Millson. 

To construct the phase space for the full triangle, we consider three of these phase spaces and impose the constraint that the edges close. 
\be
 \ell_1 + \ell_2 + \ell_3 = 0.
\ee
Physically, the triangle phase space is nothing other than three spinning tops constrained to have a total angular momentum to be zero. The fact that geometry can be recovered from spinning tops with angular momentum conservation laws, was key to inception of the so-called spin networks by Penrose \cite{Penrose71angularmomentum:} to describe a quantum space.  

To generalize the construction to the curved case in particular, it is useful to remember that $T^*\SO(3)=\cD_0$ can be seen as a Heisenberg double.
The group element $\ell_1 + \ell_2 + \ell_3\in\bbR^3$ can be interpreted as the momentum map generating the global (left) rigid rotations denoted  $\SO^t(3)$. Hence we can define a new  phase space\footnote{We will note in the following $\cD^{(n)}\equiv \cD\times ... \times \cD= \cD^{\times n}$. } for the triangle $t$ as 
\be
    \cP_t^0 = ( T^*\SO(3) )^{\times 3} /\!/ \SO^t(3)
    \cong 
    \cD_0^{(3)}/\!/ \SO^t(3).
\ee

\paragraph{Curved triangle phase space.}
We move on to the curved case, focusing on the triangle with negative curvature.   We refer to \cite{treloar} for the case of positive curvature. The triangle with negative curvature has been treated in  \cite{Bonzom:2014wva, Charles:2016xzi}. Let us review the construction which is relevant to us \cite{Charles:2016xzi}.

The triangle edges are given in terms of geodesics in the hyperboloid. Since the hyperboloid can be seen as the quotient $\SL(2,\bbC)/\SU(2)\cong \AN(2)$.  Such geodesics can be characterized by elements/holonomies in $\AN(2)$ which we will again denote by $\ell$. 

Our strategy to construct a phase space is based on using Heisenberg doubles. Hence we deform $\bbR^3$ into the non-abelian group $\AN(2)$. The fact that the triangle is closed is then recovered in the closure constraint
\be
    \label{cons-ell}
    \ell_1 \ell_2 \ell_3 = 1 .
\ee
The new phase space associated to each edge is given by a deformation of the standard cotangent bundle into a Heisenberg double with curvature both in momentum and configuration space.  
\be
    \cD_0=T^*\SO(3)\cong  \SO(3)\ltimes \bbR^3 
    \,\,\,\dr \,\,\,
    \cD=\SL(2,\bbC) \cong \SU(2)\bowtie \AN_2.
\ee

By definition, $ \ell_1 \ell_2 \ell_3$ is still the momentum map generating the global (left) rotations $\SU^t(2)$. Hence we can again define the symplectic reduction to define the phase space for a curved triangle.  
\be
    \cP = \SL(2,\bbC)^{\times 3} /\!/ \SU(2)\cong \cD^{(3)} /\!/ \SU^t(2).
\ee

 \medskip 

We can then extend the construction even more. We can consider for every pair of link/edge a Heisenberg double
 $\cD = G \bowtie G^* = G^* \bowtie G$, endowed with the ribbon factorization \eqref{ribbon}, with actions given by \eqref{actions from ribbons} and a symplectic form \eqref{SymplecticForm}.
By convention, $G$ decorates the link, while $G^*$ decorate the edge and provide the information to reconstruct the geometry of the polygon. If $G^*$ is abelian, we will say that the polygon is flat.

The phase space of the polygon with $n$ edges is given by the symplectic reduction \be
    \cP = \cD^{(n)} /\!/ G^t ,
\ee
where the constraint $G^*\ni\ell_1..\ell_n=1$ is the momentum map generating the global left $G$ transformations, denoted $G^t$.

We note that once we generalize the construction to have possibly non-abelian groups both in configuration and momentum spaces, the construction becomes symmetric. In particular, we can implement a global translational symmetry instead of a global rotational symmetry by symplectic reduction over $G^*$ which could be either of $\R^3$ or $\AN_2$ for example. The momentum map implementing the symplectic reduction is the flatness constraint $g_1...g_n=1$, $g_i\in \SU(2)$ \cite{Delcamp:2018sef}.

\paragraph{Reference frames for 2d discrete geometry decorated by Heisenberg doubles.}
To glue triangles or polygons together to construct the phase space for a 2d cellular complex, it is useful to define the notion of a reference frame, to make sure we identify quantities associated to different triangles consistently. 

Let us consider an edge $e = (v_1,v_2)$ connecting vertices $v_1$ and $v_2$ and its dual, called a \textit{link}, $l = (v_1,v_2)^*=(c_1,c_2)$, which connects 
nodes $c_1$ and $c_2$.

As mentioned earlier, we decorate the links with $G$ elements and the edges with $G^*$ elements  so that a Heisenberg double $\cD$ decorates the (edge, link) pair.

\medskip 

\textit{Representations. }
Both the edge and the link carry an orientation with one vertex being the source or ``initial vertex" of the edge and the other being the target or ``final vertex", and similarly for the nodes of a link. We can use  representations of $\cD$ to induce representations $(\varphi, \rho)$ of  $G$ and $G^*$ respectively. 

\begin{itemize}
    \item On the initial node (source) of the link, $c_1$, we consider a vector space $V_{c_1}$ which carries a representation $\rho^{c_1}$ of the group $G^*$ coming from the right decomposition $\cD= G^*\bowtie G$.
    $$\ell \dr\rho^{c_1}(\ell) \equiv \ell^{c_1}
    $$

    \item On the final node (target) of the link, $c_2$,  we consider a vector space $V_{c_2}$ which carries a representation $\rho^{c_2}$ of the group $G^*$ coming from the left decomposition $\cD= G\bowtie G^*$.   
    $$\tell \dr \rho^{c_2}(\tell) \equiv \tell^{c_2}
    $$
    
    \item On the initial vertex (source) of the edge, $v_1$, we consider a vector space $W_{v_1}$ which carries a representation $\varphi^{v_1}$ of the group $G$ coming from the right decomposition $\cD= G \bowtie G^*$.
    $$\th \dr \varphi^{v_1}(\th) \equiv \th^{v_1}
    $$
    
    \item On the final vertex (target) of the edge, $v_2$, we consider a vector space $W_{v_2}$ which carries a representation $\varphi^{v_2}$ of the group $G$ coming from the left decomposition $\cD= G^* \bowtie G$.
    $$h \dr  \varphi^{v_2}(h) \equiv h^{v_2}
    $$
\end{itemize}

\medskip

\textit{Action and parallel transport. }
The left or right mutual actions between $G$ and $G^*$ can be interpreted as parallel transport.  For example, the relation 
\be 
    \ell = \th \rhd \tell= \th \tell h\mone 
    \,\,\, \Leftrightarrow \,\,\,
    \rho^{c_1}(\ell)= \varphi^{v_1}(\th)\rhd \rho^{c_2}(\tell)
\ee 
encodes how the edge variable  is transported from the representation at $c_2$ to the representation at $c_1$, through the representation of $h$  at $v_1$. 

The left action of $G$ on $G^*$ (resp. $G^*$ on $G$) transports the representation of the edge (resp. link) from the final node $c_2$ (resp. final vertex $v_2$) to the initial node $c_1$ (resp. initial vertex $v_1$) of the link (resp. edge); the right action of $G$ on $G^*$ (resp. $G^*$ on $G$) transports the representation of the edge (resp. link) from $c_1$ to $c_2$ (resp. from $v_1$ to $v_2$). We summarize the role of left and right actions in Table \ref{Table_2dActions}.
\begin{table}
\centering
    \begin{tabular}{ | r | | c | c || }
    \cline{2-3}
    \multicolumn{1}{c|}{}
    & 
    \textit{\textbf{$G$ on $G^*$}}
    &
    \textit{\textbf{$G^*$ on $G$}} \\
    \hline
    \textit{\textbf{Left action $\rhd$}}
    &
    $c_2 \to c_1$ & $v_2 \to v_1$ \\
    \hline
    \textit{\textbf{Right action $\lhd$}}
    &
    $c_1 \to c_2$ & $v_1 \to v_2$ \\    
    \hline
    \end{tabular}
    \caption{Geometric interpretation of the left and right mutual actions between $G$ and $G^*$. Geometrically, the left action of a link (resp. edge) on the edge (resp. link) transports the representation of the edge (resp. link) from the node $c_2$ to $c_1$ (resp. from the vertex $v_2$ to $v_1$); the right action of a link (resp. edge) on the edge (resp. link) transports the representation from $c_1$ to $c_2$ (resp. from $v_1$ to $v_2$).}
    \label{Table_2dActions}
\end{table}
Moreover, note that the left action of $\th$ (resp. $\ell$) is the inverse of the right action of $h$ (resp. $\tell$):
\be
    \begin{aligned}
        &
        \ell = \th \rhd \tell = \th \rhd (\ell \lhd h) , \\
        &
        \tell = \ell \lhd h = (\th \rhd \tell) \lhd h , \\
    \end{aligned}
    \qquad \qquad
    \begin{aligned}
        &
        h = \th \lhd \tell = (\ell \rhd h) \lhd \tell , \\
        &
        \th = \ell \rhd h = \ell \rhd (\th \lhd \tell) .
    \end{aligned}
\ee
If we change the orientation of the link, $c_2\dr c_1$, which implies a change of orientation for the edge $v_2\dr v_1$, we use the inverse map for the full Heisenberg double.
\be
    d = \ell h = \th\tell 
    \,\,\,\dr\,\,\,
    d\mone = \tell\mone \th\mone = h\mone \ell\mone.
\ee
This agrees with the interpretation of left and right actions given in Table \ref{Table_2dActions}. For example, the relation 
\be
    \tell\mone = h\mone \rhd \ell\mone
    \,\,\, \Leftrightarrow \,\,\,
    \rho^{c_2}(\tell\mone) = \varphi^{v_2}(h\mone) \rhd \rho^{c_1}(\ell\mone) 
\ee
encodes how the the inverse edge variable is transported from the representation at $c_1$ (final node of $h\mone$ to the representation at $c_2$ (initial node of $h\mone$), through the representation of the inverse link at $v_2$.

\medskip

Specifying these different representations is also at the root of the spinor approach \cite{Livine:2011zz, Livine:2011gp}. Indeed in this case, one deals explicitly with the spinor representations and can reconstruct the full geometric data from the spinor data. \

This identification of representation is important to understand how to glue/fuse terms. Indeed we need to identify the gluing constraint for terms living in the same representation.

\paragraph{Phase space for a $2$ dimensional triangulation.}
We intend now to construct the phase space for a triangulation. The main idea is to glue/fuse triangles along common edges and build from there the total phase space. Since an edge is shared by a pair of triangles, the gluing is done pairwise.

Consider two triangles dual to the nodes $c_1$ and $c_2$. The triangle $c_1^*$ has an edge denoted $e$ and the triangle $c_2^*$  has an edge denoted $e'$.
We denote by $c_e$ (resp.\ $e_{e'}$)the intersection of edge $e$ (resp.\ $e'$) and its dual half link, so that the half link dual to the edge $e$ (resp.\ $e'$) is denoted $(c_1c_e)$ (resp. $(c_2c_{e'})$). We denote by $\cD_{c_1,e}\cong G\bowtie G^*$ (resp. $\cD_{c_2,e'} \cong G\bowtie G^*$) the Heisenberg double associated to  the edge $e$ and half link  $(c_1c_{e})$ (resp. to  the edge $e'$ and half link  $(c_2c_{e'})$). 

We therefore have the pair of ribbon constraints 
\be
    \cD_{c_1,e}\ni 
    \ell^{c_1} h^{v_2} = \th^{v_1} \tell^{c_e}
    \,\,, \quad
    \cD_{c_2,e'}\ni
    \ell^{c_2} h^{v'_2} = \th^{v'_1} \tell^{c_{e'}} .
\ee

To glue $c_1^*$ and $c_2^*$ we must identify $e=(v_1v_2)$ and $e'=(v'_1v'_2)$. To make the identification, we must set the corresponding edge decorations equal in some common representation. This is done by identifying the nodes $c_e$ and $c_{e'}$. This means that the orientation of the edges being identified are opposite,
\be\label{edge-identification}
    e=-e'
    \,\,\,\to\,\,\, v_1=v'_2 \,,\,\, v_2=v'_1 .
\ee
As such, at the node $c_{e'}$, we have a representation $\rho^{c_{e'}}(\tell_{{v_2'v_1}'}\mone)$ and at $c_e$ we have the representation $\rho^{c_{e}}(\tell_{{v_1v_2}})$.
The identification condition is therefore that 
\begin{align}
    \tell_{v_1'v_2'}^{c_{e'}} = \tell_{v_2v_1}^{c_{e'}} =\tell_{v_1v_2}^{c_e}\\
    \tell^{c_e} = (\tell^{c_{e'}})\mone.
\end{align}

We recover the constraint that we used to glue ribbons in Proposition \ref{prop:glue-ribbon}. This is the momentum generating the left global $G$ transformations and we can perform the symplectic reduction.   
\be
    (\cD_{c_1,e}\times \cD_{c_2,e'}) /\!/ G\cong \cD_{c_1c_2,e}=\cD_{l,e}, \quad l=(c_1c_2). 
\ee
In particular, putting the two ribbon constraints together with the gluing constraint, we identify the reduced variables,
\begin{align}
    & 
    \left.\begin{array}{c} 
    (\th^{v_1})\mone  \ell^{c_1} h^{v_2} = \tell^{c_e} , 
    \\ 
    (h^{v'_2})\mone (\ell^{c_2})\mone \th^{v'_1} = (\tell^{c_{e'}})\mone
    \end{array}\right\}
    \,\, \stackrel{v_1=v'_2, \, v_2=v'_1}{\longrightarrow} \,\, 
    \ell^{c_1} h^{v_2} (\th^{v_2})\mone
    =
    \th^{v_1} (h^{v_1})\mone (\ell^{c_2})\mone
    \\
    & \qquad
    \Leftrightarrow \,\,\, 
    \ell^{c_1} h^{v_2}_l = \th^{v_1}_l (\ell^{c_2})\mone
    \,\,,\quad 
    h^{v_2}_l := h^{v_2} (\th^{v_2})\mone
    \,, \,\,\,
    \th^{v_1}_l := \th^{v_1} (h^{v_1})\mone .
\end{align}
The phase space obtained after gluing is simply the Heisenberg double with $G$ decorations on the (full) link $l = (c_1c_2)$ and $G^*$ decorations on the edge $e = (v_1v_2)$. See Fig. \ref{fig:rib+triangle}. Note that a similar construction was done at the \textit{quantum} level using fusion product in \cite{Delcamp:2018sef}.  Note also that,  unsurprisingly, we have recovered the variables obtained after discretization of 3d gravity with a non-zero cosmological constant \cite{Dupuis:2020ndx}. These variables are also called triangle operators in Kitaev's model \cite{Kitaev1997}. 
\medskip\\
By repeating this gluing we can recover the full phase space $\cP^{tot}$ for the 2d triangulation $\cT$ with dual $\cT^*$.
\be
    \cP^{tot} = \big(\bigtimes_{l\in \cT^*} \cD_{l,e}\big) /\!/ \big(\bigtimes_{t\in \cT} G^{t}\big),
\ee
where $\cD_{l,e}$ are the Heisenberg doubles for each full link of the dual complex of the triangulation and $G^t$ is the global $G$ transformation induced by the closure constraint of each triangle $t$ of the triangulation. 
\begin{figure}
    \begin{center}
    \begin{tikzpicture}
    
    \node[anchor=south west,inner sep=0] (image) at (0,0)
    {\includegraphics{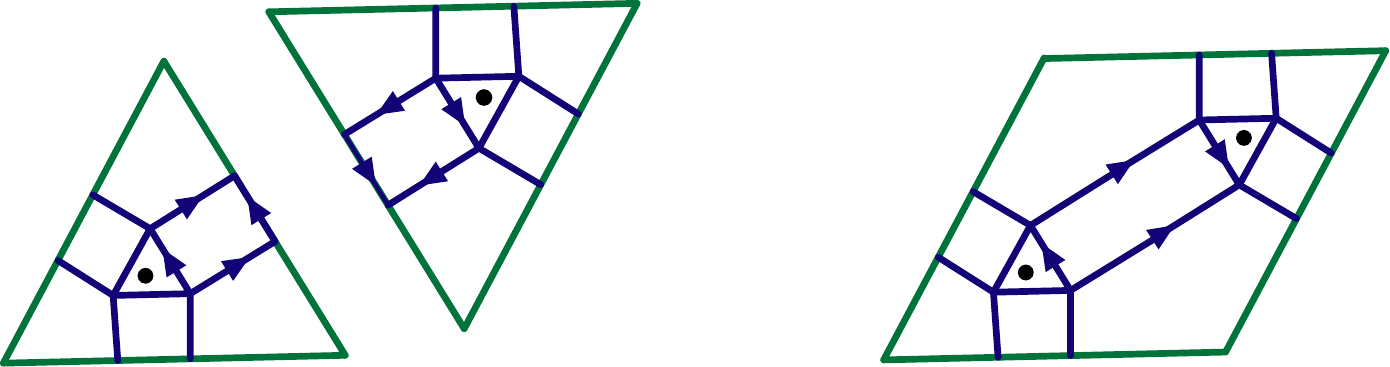}};
    \begin{scope}[x={(image.south east)},y={(image.north west)}]
    \node at (0.14, 0.85) {$v_2$}; 
    \node[] at (0.14,0.52) {$h_1$};
    \node[] at (0.18,0.21) {$\tilde h_1$};
    \node[] at (0.14,0.3) {$\ell_1$};
    \node[] at (0.21,0.45) {$\tell_1$};
    \node[] at (0.27, 0.8) {$\th_2$};
    \node[] at (0.25,0.52) {$\tell_2$};
    \node[] at (0.33,0.47) {$h_2$};
    \node[] at (0.31,0.67) {$\ell_2$};  
    \node[] at (0.45,0.5) (glue1) {};
    \node[] at (0.65,0.5) (glue2) {};
    \tikzstyle{arrow} = [thick,->,>=stealth]
    \draw[arrow] (glue1) edge (glue2);
    \node[] at (0.78,0.63) {$h_{12}$};
    \node[] at (0.85,0.28) {$\th_{12}$};
    \node[] at (0.78, 0.33) {$\ell_1$};
    \node[] at (0.86,0.57) {$\ell_2$};
    \node at (0.25, -0.05) {$v_1$};
    \node at (0.32,0.1) {$v_2'$};
    \node at (0.18,1) {$v_1'$};
    \end{scope}

    \end{tikzpicture}
    \caption{The phase space of a triangle is given by a set of three ribbons. The phase space of two triangles is obtained by gluing ribbons associated to the shared edge.}
    \label{fig:rib+triangle}
    \end{center}
\end{figure}

Dynamics can then be implemented by imposing some constraint on the closed $G$ holonomies. From the usual gravity picture, this would be  typically the flatness constraint. If we swap what we call $G$  and $G^*$, dynamics would be implemented by implementing that closed $G^*$ holonomies should be flat, this would be implementing the flatness constraint.

\subsection{Polyhedron phase space using groups}
\label{Sec_PolyhedronPhaseSpace}

Minkowski proved in 1897 that given a set of vectors $\ell_i\in \bbR^3$ which close to form a (non-planar) polygon, 
\be\label{polygon}
  \sum_i \ell_i=0
\ee
then we can  uniquely (up to translations and  rotations) reconstruct a polyhedron, provided we interpret these vectors $\ell_i$ as the normals of the (flat) faces of the polyhedron, and their magnitude to be the area of the relevant face \cite{alexandrov}. 
Given such set of vectors, we can use the Lasserre reconstruction program --- revisited in \cite{Bianchi_2011}, \cite{Sellaroli:2017wwc} --- to reconstruct the polyhedron.  

For our concern, it means that the polygon data is sufficient to reconstruct a polyhedron. Hence, all the phase space structures we discussed in the previous section can be also used to describe a polyhedron or a 3d triangulation (or a 3d cellular decomposition). 

This equivalence between polyhedron and polygon reappears at the the quantum level in the ambiguity of interpreting an intertwiner. If one does not specify the dimension at first, we can interpret it either a quantum state of a polygon or of a polyhedron. This extends for the full spin network, it can encode the quantum state of a 2d triangulation or of a 3d triangulation.   

Let us sketch how the construction of the phase space works for a  3d triangulation $\cT$ and its dual $\cT^*$.

We focus first on a link and its dual, given by the triangle. We consider the Heisenberg double $\cD\cong G\bowtie G^*$, with both $G$ and $G^*$ three dimensional Lie groups. $G$ elements still decorate the links in $\cT^*$, which are dual to the triangles/faces in the   triangulation; hence, $G^*$ elements decorate the faces. There are no decorations on the edges of the triangle meaning that $G^*$ must be abelian in order to have a consistent way of composing face decorations, due to the Eckmann-Hilton argument, see Fig \ref{eckmann}. The Heisenberg double must therefore be of the type $\cD\cong T^*G$ (if we deal with a \textit{Lie group} $G$). 
\begin{figure}
    \begin{center}
    \begin{tikzpicture}
    
    \node[anchor=south west,inner sep=0] (image) at (0,0)
    {\includegraphics{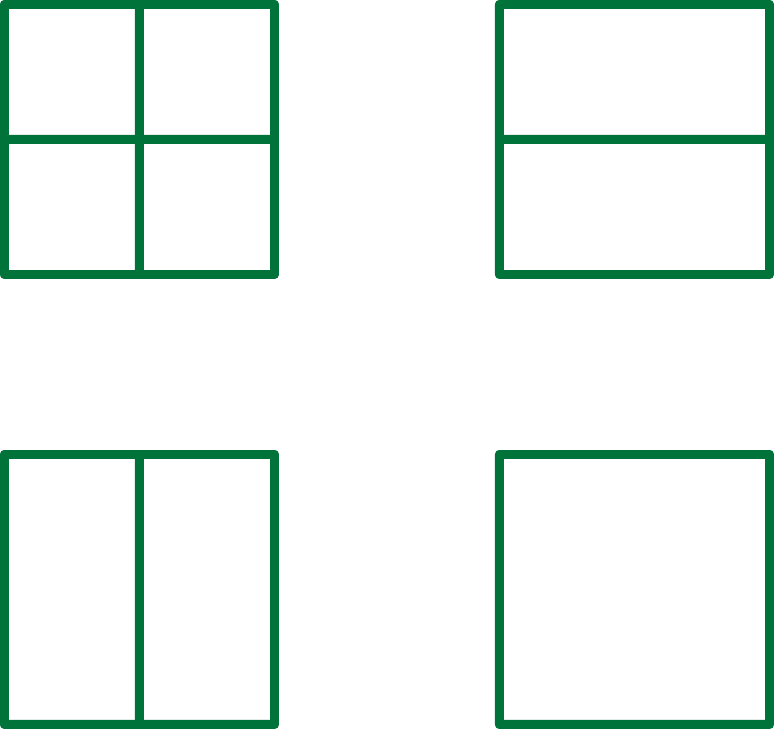}};
    \begin{scope}[x={(image.south east)},y={(image.north west)}]
    \node at (0.1,0.9) {$h_1$};
    \node at (0.26,0.9) {$h_2$};
    \node at (0.1,0.73) {$h_3$};
    \node at (0.26,0.73) {$h_4$};
    
    \node at (0.82,0.9) {$h_1 h_2$};
    \node at (0.82,0.73) {$h_3 h_4$};
    
    \node at (0.1, 0.2) {$h_1 h_3$};
    \node at (0.26,0.2) {$h_2 h_4$};
    
    \node at (0.82,0.26) {$h_1h_3h_2h_4$};
    \node at (0.82,0.2) {$=$};
    \node at (0.82, 0.14) {$h_1h_2h_3h_4 $};

    \node at (0.18,0.6) (glue1) {};
    \node at (0.18,0.4) (glue2) {};
    \node at (0.82,0.6) (glue3) {};
    \node at (0.82,0.4) (glue4) {};
    
    \node at (0.4, 0.815) (glue5) {};
    \node at (0.6, 0.815) (glue6) {};
    \node at (0.4, 0.2) (glue7) {};
    \node at (0.6, 0.2) (glue8) {};
    
    \tikzstyle{arrow} = [thick,->,>=stealth]
    \draw[arrow] (glue1) edge (glue2);
    \draw[arrow] (glue3) edge (glue4);
    \draw[arrow] (glue5) edge (glue6);
    \draw[arrow] (glue7) edge (glue8);
    \end{scope}

    \end{tikzpicture}
    \caption{The Eckmann-Hilton argument shows that in order to compose faces so that the above diagram commutes, the face decorations must commute. This can be avoided by decorating both edges and faces by 2-group elements. }
    \label{eckmann}
    \end{center}
\end{figure}
In the 3d case, the magnitude of the vector in $\ell\in G^*$ encodes the area instead of  a length in 2d, and the direction of the vector itself, instead of being the  edge direction, will indicate the (outgoing) direction normal to the face.

Given four links rooted at the common root $c$, we interpret the constraint \eqref{polygon} as the closure of the tetrahedron. 
From the phase space perspective, the closure of the polyhedron encoded in the constraint \eqref{polygon} is still a constraint on the momentum map generating $G$ left global transformations, which we denote $G^\tau$. The phase space of a tetrahedron $\tau$ can then be recovered from the symplectic reduction 
\be
    \cP = (T^*G)^{\times 4} /\!/ G^\tau.
\ee
Physically, the tetrahedron phase space is nothing else than four spinning tops constrained to have a total angular momentum to be zero. The difference between the phase space of a quadrilateral (ie a polygon) is simply in the way we geometrically interpret the angular momentum $\ell$. In the polygon case, it is associated to the edge, while in the tetrahedron case, it is associated to the normal of the face.  

To get the full phase space of a 3d triangulation in this context, we glue tetrahedra by identifying the face information. The mathematical description is exactly the same as for gluing triangles. This is again the ambiguity between 2d and 3d showing. 

\medskip  

One might argue that this ambiguity between 2d and 3d shows that such description of the phase space of a 3d triangulation is not the best one. We are likely missing some information for the 3d description. Furthermore, for a gravity point of view it is a bit of a draw-back to not have access to some edge decoration, which should encode some information about the metric. 
Finally, to actually define the notion of curved polyhedron, we need to have some data decorating its edges, just like we defined a curved triangle by decorating its edges with a non-abelian group \footnote{\cite{Haggard:2015ima, Riello:2017iti} generalized the Minkowski theorem with non-abelian group decorations apparently solely on the faces. However, it is likely that some edge dependence was actually hidden \cite{Riello}.} 

All these arguments point to the need to decorate the edges of the triangulation.

In order to decorate both the faces and the edges, we require compatibility rules such as those given by the 2-group structure in Sec.\ \ref{Sec_2-groups}. 
2-groups have been argued to be the natural generalization of the notion of (quantum) groups to discuss topological invariant of a 4d manifold (hence in terms of states describing 3d triangulations). Just as quantum groups are the natural object to deal  with (homogeneously) curved 3d manifold, one might expect that quantum 2-groups will be the right structure to deal with (homogeneously)  curved 3d manifold. 
Finally, it was argued in \cite{Mikovic:2011si, Asante:2019lki} that a maybe more adequate construction of spin foam model for 4d gravity could be done using 2-group structure. 

As a summary, it seems natural to construct the notion of phase space for a 3d triangulation using 2-groups. At this time, we will  still be  focusing on abelian decoration on the faces, that is we focus on \textit{simple} 2-groups. As a consequence, we will not put curvature on the triangles \textit{yet}. Nevertheless, we expect that our set up could be generalized to allow curved triangles.

\section{Polyhedron phase space from skeletal 2-groups}
\label{Sec_2-groupPhaseSpace}

\subsection{General construction}

In this section we explain how to build the phase space of a $3$ dimensional (curved) cellular decomposition as the Heisenberg double $\cB = \cG \bowtie \cG^*\cong \cG^*\bowtie \cG$, given by the double cross product of two skeletal crossed modules, $\cG$ and $\cG^*$, as explained in Sec.\ \ref{Sec_2-groups}. 

According to the $2$-group picture introduced in Sec.\ \ref{Sec_2-groups}, we can interpret each crossed module as decorating 1d and 2d objects, as part of the cellular decomposition $\mathcal{T}$ and its dual $\mathcal{T}^*$.  The skeletal crossed module $\cG^*$ decorates $\mathcal{T}$ and $\cG$ decorates $\mathcal{T}^*$.

Just like for the 2d case, it is convenient to fix the reference frame conventions in order to glue/fuse structures in a consistent manner.

\subsubsection{Reference frames for 3d geometries decorated by skeletal crossed modules.}
\label{Sec_3dRefFrames}

We consider an (oriented) edge $e=(v_1v_2)$, a face $t_e$ 
which contains the edge $v_1$ and $v_2$ in its boundary. Dually  we deal with a node $c_1$, and the node $c_t$ which is a point somewhere on the interior of the face $t_e$. The wedge $w_l$ is a dual face which contains $l = (c_1c_t)$, called a (half) link, in its boundary. This set up is shown in Fig.\ \ref{double-wedge}.   

\begin{figure}[H]
\begin{center}
\begin{tikzpicture}
    \node[anchor=south west,inner sep=0] (image) at (0,0) {\includegraphics[width=0.3\textwidth]{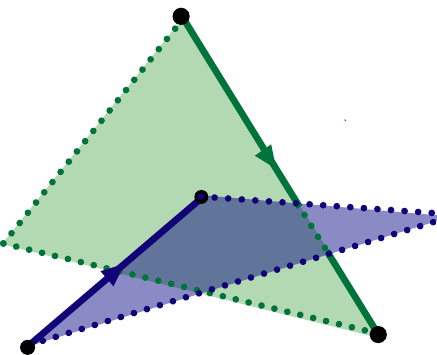}};
    \begin{scope}[x={(image.south east)},y={(image.north west)}]
        \node[] at (0, 0)     (c1)   {$c_1$};
        \node[] at (0.5,0.5)  (ct)   {$c_t$};
        \node[] at (0.95,0.0) (v1)   {$v_1$};
        \node[] at (0.48,1)   (v2)   {$v_2$}; 
        \node[] at (0.65,0.6) (e)    {$e$};
        \node[] at (0.25,0.3) (l)    {$l$};
        \node[] at (0.35,0.65)(te)   {$t_e$};
        \node[] at (0.55,0.35)(wl)   {$w_l$};
    \end{scope}
    
\end{tikzpicture}
\caption{The building block we will glue to reconstruct a full 3d triangulation.  The (half) link and wedge will be decorated by $\cG$ elements, while the edge and face will be decorated by elements in $\cG^*$.
    }
    \label{double-wedge}
\end{center}
\end{figure}

The half link $l = (c_1c_t)$ is decorated with elements in $G_1$, while the wedge $w_l$ is decorated with elements in the abelian group $G_2^*$. 
The edge $e=(v_1v_2)$ is decorated with elements in $G_2$, while the face $t_e$ is decorated with elements in the abelian group $G_1^*$. 
\medskip \\
\textit{2-representations. }
In the 2d case, the  source (resp. target) of a group element/holonomy carries a representation of the dual group.
Since $\cG$ and $\cG^*$ are skeletal crossed modules, we can use a similar picture, using 2-vector spaces and 2-representations \cite{BAEZ1997125, Baez:2008hz}. For our purpose, it is not useful to get the full details on the nature of a 2-representation since we use it more as a book keeping device to keep track of which structures we are fusing. It is enough to say that nodes and vertices (sources and targets of links and edges) carry 2-representations of the dual skeletal crossed module.
The left and right decompositions of $\cB$ into $\cG$ and $\cG^*$ are distinguished using a tilde ($\tilde{\phantom{a}}$).
\begin{itemize}
    \item The source of the link (node $c_1$) carries a 2-representation of the skeletal crossed module $\cG^*$, $\rho^{c_1}(\ell)=\ell^{c_1}$, coming from the right decomposition $\cB = \cG^*\bowtie \cG\ni \ell h$. 
    
    \item The target of the link (node $c_t$) carries a 2-representation of the skeletal crossed module $\cG^*$, $\rho^{c_t}(\tell)=\tell^{c_t}$, coming from the left decomposition $\cB= \cG\bowtie \cG^*\ni \th\tell$.
    
    \item The source of the edge (vertex $v_1$) carries a 2-representation of the skeletal crossed module $\cG$, $\varphi^{v_1}(\th)=\th^{v_1}$, coming from the left decomposition $\cB = \cG \bowtie \cG^* \ni  \th \tell$.   
    
    \item The target of the edge (vertex $v_2$) carries a 2-representation of the skeletal crossed module $\cG$, $\varphi^{v_2}(h)=h^{v_2}$, coming from the right decomposition $\cB = \cG^* \bowtie \cG \ni \ell h$.
\end{itemize}
These 2-representations can be seen as the generalization of the triangular operators appearing in the Kitaev model \cite{Kitaev1997}. 

The skeletal crossed modules $\cG$ and $\cG^*$, in turn, can be split in the left or right decompositions. Different decompositions of $\cG$ (resp. $\cG^*$) are decoration of the same pair edge/face (resp. link/wedge), with the face (resp. wedge) rooted either in the source or in the target of the edge (resp. link).
The different decompositions of $\cG$ and $\cG^*$ are distinguished by use of an overline ($\overline{\phantom{a}}$). In 2-group theory, changing the decomposition of a  crossed module can be viewed as whiskering, see Fig.\  \ref{fig:changesource}. 
\begin{equation*}
    \cB = \cG^* \bowtie \cG = \cG \bowtie \cG^*
    \quad \to \quad
    \ell h = \th \tell
\end{equation*}
\begin{align*}
    &
    \cG \ni 
    h = 
    \begin{cases}
        &
        u y \in G_1 \ltimes G_2^* \\
        &
        \oy u \in G_2^* \rtimes G_1
    \end{cases}
    &
    \cG \ni 
    \th = 
    \begin{cases}
        &
        \tu \ty \in G_1 \ltimes G_2^* \\
        &
        \toy \tu \in G_2^* \rtimes G_1
    \end{cases} \\
    &
    \cG^* \ni
    \ell =
    \begin{cases}
        &
        \beta \lambda \in G_1^* \rtimes G_2 \\
        &
        \lambda \obe \in G_2 \ltimes G_1^*
    \end{cases}
    &
    \cG^* \ni
    \tell =
    \begin{cases}
        &
        \tbe \tla \in G_1^* \rtimes G_2 \\
        &
        \tla \tobe \in G_2 \ltimes G_1^*
    \end{cases}
\end{align*}
Since there is no back action of  $G^*_1$ on $G_2$ (resp. of $G^*_2$ on $G_1$), we note that $u=\bu$, $\tu=\tbu$, $\lambda=\blam$ and $\tla=\tblam$.

\begin{itemize}
    \item Take the skeletal crossed module $\cG$ that decorates a pair link/wedge.
    
    \begin{itemize}
        \item The right decomposition $\cG = G_2^* \rtimes G_1$ implies that $G_2^*$ elements (decoration of wedges) are rooted at the source of the link (node $c_1$).
        \item The left decomposition $\cG = G_1 \ltimes G_2^*$ implies that $G_2^*$ elements (decoration of wedges) are rooted at the target of the link (node $c_t$). 
    \end{itemize}
    
    \item Take the skeletal crossed module $\cG^*$ that decorates a pair edge/face.
    
    \begin{itemize}
        \item The right decomposition $\cG = G_1^* \ltimes G_2$ implies that $G_1^*$ elements (decoration of faces) are rooted at the source of the edge (vertex $v_1$). 
        \item The left decomposition $\cG = G_2 \ltimes G_1^*$ implies that $G_1^*$ elements (decoration of faces) are rooted at the target of the edge (vertex $v_2$). 
    \end{itemize}
\end{itemize}
Therefore, the choice of decoration of an edge (resp. link) is specified by a single node (resp. vertex) as in the 2d case; while decorations of faces and wedges are specified by a pair of node and vertex.  
In Table \ref{Table_3dRefFrames} we list the decorations ofor the pair edge/face, as well as the pair link/wedge, according to the choice of representation and root.
\begin{table}
\centering
    \begin{tabular}{ | l | | c | c | }
    \cline{2-3}
    \multicolumn{1}{c|}{} 
    & \textit{ \textbf{Source of $l$: $c_1$}} & \textit{\textbf{Target of $l$: $c_t$}} \\
    \hline \hline
    \textit{\textbf{Source of $e$: $v_1$}} &
    $\ell^{c_1} = \beta^{c_1}_{v_1}\, \lambda^{c_1} \,\,,\,\, \th^{v_1} = \toy^{v_1}_{c_1}\, \tu^{v_1} $ &
    $\tell^{c_t} = \tbe^{c_t}_{v_1}\, \tla^{c_t} \,\,,\,\, \th^{v_1}= \, \tu^{v_1} \ty_{c_t}^{v_1} $ \\
    \hline
    \textit{\textbf{Target of $e$: $v_2$}} &
    $\ell^{c_1} = \lambda^{c_1}  \,\obe^{c_1}_{v_2}  \,\,,\,\, h^{v_2}  = \oy_{c_1}^{v_2}  \,  u^{v_2} $ &
    $\tell^{c_t} = \tla^{c_t} \,  \tobe^{c_t}_{v_2} \,\,,\,\, h^{v_2} = u^{v_2} \, y_{c_t}^{v_2} $ \\
    \hline \hline
    \end{tabular}
    \caption{List of the group elements used to decorate edges, links, faces and wedges. Decorations of links (resp. edges) are labelled by a single index, related to vertex (resp. node) where the link (resp. edge) is represented. Decorations of wedges (resp. faces) are labelled by an index that specifies the vertex (resp. node) where it is represented, and a subscript that specifies the node (resp. vertex) where it is rooted.}
    \label{Table_3dRefFrames}
\end{table} 
\medskip \\
\textit{Action and parallel transport. }
In analogy with the geometric interpretation given in the $2$ dimensional case, here the left action of $\cG$ on $\cG^*$ (resp. $\cG^*$ on $\cG$) transports the decoration of the edge and face (resp. link and wedge) from the target $c_t$ to the source $c_1$ of the link (resp. from the target $v_2$ to the source $v_1$ of the edge), and similarly the right action of $\cG$ on $\cG^*$ (resp. $\cG^*$ on $\cG$) transports the decoration of edge and face (resp. link and wedge) from $c_1$ to $c_t$ (resp. from $v_1$ to $v_2$).

Since we deal with parallel transport between vertices and nodes (sources and targets of edges and links), the action responsible for the parallel transport between nodes is that of the group $G_1 \subset \cG$ on $\cG^*$ (rather than the action of $\cG$ on $\cG^*$) and the action responsible for the parallel transport between vertices is that of the group $G_2 \subset \cG^*$ on $\cG$ (rather than the action of $\cG^*$ on $\cG$). 

Moreover, the left action of $G_1$ on $G_2^*$ (resp.\ $G_2$ on $G_1^*$) transports the root of the wedge (resp.\ face) from the target of the link $c_t$ to its source $c_1$ (resp.\ from the target of the edge $v_2$ to its source $v_1$) and the right action transports the root of the the wedge (resp.\ face) from the source $c_1$ to the target $c_t$ (resp.\ from the source $v_1$ to the target $v_2$). 

We summarize the geometric role of left and right actions in Table \ref{Table_3dActions}. Note that the geometric interpretation given in Table \ref{Table_3dActions} agrees with the convention in Table \ref{Table_3dRefFrames}. 
\begin{table}
\centering
    \begin{tabular}{ | r | | c | c || }
    \cline{2-3}
    \multicolumn{1}{c|}{}
    & 
    \textit{\textbf{$G_1$ on $\left\{\begin{array}{l}G_2^*\\ \cG^*
    \end{array}\right.$}}
    &
    \textit{\textbf{$G_2$ on $\left\{\begin{array}{l} G_1^*\\ \cG
    \end{array}\right.$}} \\
    \hline
    \textit{\textbf{Left action $\rhd$}}
    &
    $c_2 \to c_1$ & $v_2 \to v_1$ \\
    \hline
    \textit{\textbf{Right action $\lhd$}}
    &
    $c_1 \to c_2$ & $v_1 \to v_2$ \\    
    \hline
    \end{tabular}
    \caption{
    Geometrically, the left action of $G_1$ (resp. $G_2$) on the other groups, transports the decoration of the objects from the node $c_2$ to $c_1$ (resp. from the vertex $v_2$ to $v_1$); the right action of $G_1$ (resp. $G_2$) on the other groups, transports the decoration from $c_1$ to $c_2$ (resp. from $v_1$ to $v_2$).}
    \label{Table_3dActions}
\end{table}
\medskip \\
\textit{Gluing. }
According to the notation in Table \ref{Table_3dRefFrames}, decorations of links and edges carry a single index while decorations of faces and wedges carry two indices. 

To fuse phase spaces (Heisenberg doubles), we impose a geometric constraint, identifying some objects in the two phase spaces. In order to properly impose these conditions, the decorations of such objects need to be represented and rooted in the same reference frame, therefore we also demand that all the indices of such decorations match.

\medskip

When $\cB$ is decorating the pair ((edge, face); ((half-)link, wedge)), we will call it the   \textit{atomic phase space}. We will glue/fuse such atomic phase spaces to construct the phase space associated to the full cellular complex.  

\medskip

The atomic phase space is encoded in the octagonal ribbon constraint, as in Fig.\ \ref{octagon}. We note that there are actually many possible constraints, depending on the choice of representation of each group. 
\be\label{ribbon-octo}
    \ell h = \th \tell
    \,\, \Leftrightarrow \,\,
    \left\{\begin{array}{c}
        (\beta \,\lambda) \, (y \, u) = (\ty \, \tu) \, (\tbe \, \tla) , \\
        (\beta \,\lambda) \, (y \, u) = (\tu \, \toy) \, (\tbe \,\tla) , \\
        \vdots
    \end{array}
    \right.
\ee

As in the $2$ dimensional case the symplectic structure was given in terms of a symplectic form, here the symplectic structure for the atomic phase space is induced by the form
\begin{align}
    \Omega & = 
    \demi \big( \la \Dr \ell \wedge \Dr \th \ra + \la \Dl \tell \wedge \Dl h \ra \big) =
    \demi \big( \la \Dr (\beta \lambda) \wedge \Dr (\toy \tu) \ra + \la \Dl (\tla \tobe) \wedge \Dl (u y) \ra \big)
    \nonumber \\
    & =
    \demi \big( \la (\delta \beta + \Dr \lambda + [\beta , \Dr \lambda
    ]) \wedge (\delta \toy + \Dr \tu + [\toy , \Dr \tu]) \ra +
    \la (\Dl \tla + \delta \tobe - [\tobe , \Dl \tla]) \wedge (\Dr u + \delta y - [y , \Dl u]) \ra \big)
    \nonumber \\
    & =
    \demi \big( 
    \la (\delta \beta + [\beta , \Dr \lambda
    ]) \wedge \Dr \tu \ra +
    \la \Dr \lambda \wedge (\delta \toy + [\toy , \Dr \tu]) \ra +
    \la (\delta \tobe - [\tobe , \Dl \tla]) \wedge \Dr u \ra +
    \la \Dl \tla \wedge (\delta y - [y , \Dl u]) \ra
    \big) .
\end{align}
Let us now define the ways to glue/fuse the different geometric objects.

\subsubsection{Gluing rules}
\label{Sec_GluingRules}

In the previous subsection, we characterized  how  the Heisenberg double of a skeletal $2$-group, what we called the atomic phase space, decorates the single building block of our phase space, composed by edge, face, link and wedge, see Fig. \ref{double-wedge}. 

To build the full phase space for a polyhedron or a 3d cellular complex, we will glue these atomic phase spaces to build the phase space for the structure of interest. 
The gluing can be decomposed according to four types of gluing/fusion. We recall from the previous discussion that the gluing comes from the identification of the variables dual to the ones we want to glue. The gluing operations, achieved through symplectic reduction, are associative and commutative. 

\begin{itemize}

    \item \textit{Link gluing.}  Links are decorated by $G_1$ elements, so this gluing is performed by  identifying elements in $G_1^*$. We perform a symplectic reduction with respect to the $G_1$ symmetry.
    
    \item \textit{Face gluing.} Faces are decorated by $G_1^*$ elements, so this gluing is performed by  identifying elements in $G_1$. Note that this amounts to perform a vertical product from the 2-group picture. We perform a symplectic reduction with respect to the $G_1^*$ symmetry. 
    
    \item \textit{Wedge gluing.} Wedges are decorated by $G_2^*$ elements, so this gluing is performed by  identifying elements in $G_2$. Note that this amounts to perform a vertical product from the 2-group picture. We perform a symplectic reduction with respect to the $G_2^*$ symmetry. 
    
    \item \textit{Edge gluing}. Edges are decorated by $G_2$ elements, so this gluing is performed by  identifying elements in $G_2^*$. We perform a symplectic reduction with respect to the $G_2$ symmetry.
 
\end{itemize}

Upon performing the symplectic reduction, we can construct the new ribbon constraints to determine the symplectic structure in terms of the fused variables. We emphasize however that when we keep repeating the different gluings, the ribbon equations will be mostly useful for expressing all the variables in the right frame and root. This will be useful for example to define the closure constraints where we must ensure that the different variables are expressed in the right frame (and properly rooted).

\medskip

Let us illustrate explicitly these gluings.  
We consider the  atomic phase space $\cB_{c_1,e}$ associated to the edge $e=(v_1v_2)$, face $t_{e}$ and (half) link $l=(c_1c_t)$, wedge $ w_{l}$  (resp. $\cB_{c_2,e'}$ associated to the edge $e'=(v'_1v'_2)$, face $ t'_{e'}$ and half link $l'=(c_2c_{t'})$, wedge $ w_{l'}$), as in Fig. \ref{double-wedge}. 

For the sake of clarity, it is useful to introduce the following notation. Consider the cross Lie bracket \eqref{CrossActions_Semidual}. Following the Def. \ref{Def_ClassicalDouble} of a classical double, the Lie bracket between elements of $\g_1$ and $\g_1^*$ (resp. $\g_2$ and $\g_2^*$) are constructed as the mutual co-adjoint actions between $\g_2 \ltimes \g_1^*$ and $\g_2^* \rtimes \g_1$; however, these brackets are not stable in $\g_1$ and $\g_1^*$ (resp. $\g_2$ and $\g_2^*$). 
At the group level, the exponentiation of such Lie brackets leads to the group conjugations
\be\label{conjug}
    \begin{aligned}
        u \beta u\mone
        = \beta' y'
        & \,\,\, \Leftrightarrow \,\,
        \begin{cases}
            & \beta' =  (u \beta u\mone)|_{G_1^*} , \\
            & y' = (u \beta u\mone)|_{G_2^*} ,
        \end{cases}
        \\
        \lambda y \lambda\mone 
        = y'' \beta'' 
        & \,\,\, \Leftrightarrow \,\,
        \begin{cases}
            & \beta'' = (\lambda y \lambda\mone)|_{G_1^*} , \\
            & y'' = (\lambda y \lambda\mone)|_{G_2^*} .
        \end{cases}
    \end{aligned}
\ee
Therefore, in order to express the conjugations $u \beta u\mone$ and $\lambda y \lambda\mone$, we will use their projections into the groups $G_1^*$ and $G_2^*$.

\begin{figure}[H]
\begin{center}
\begin{tikzpicture}
    \node[anchor=south west,inner sep=0] (image) at (0,0) {\includegraphics[width=0.7\textwidth]{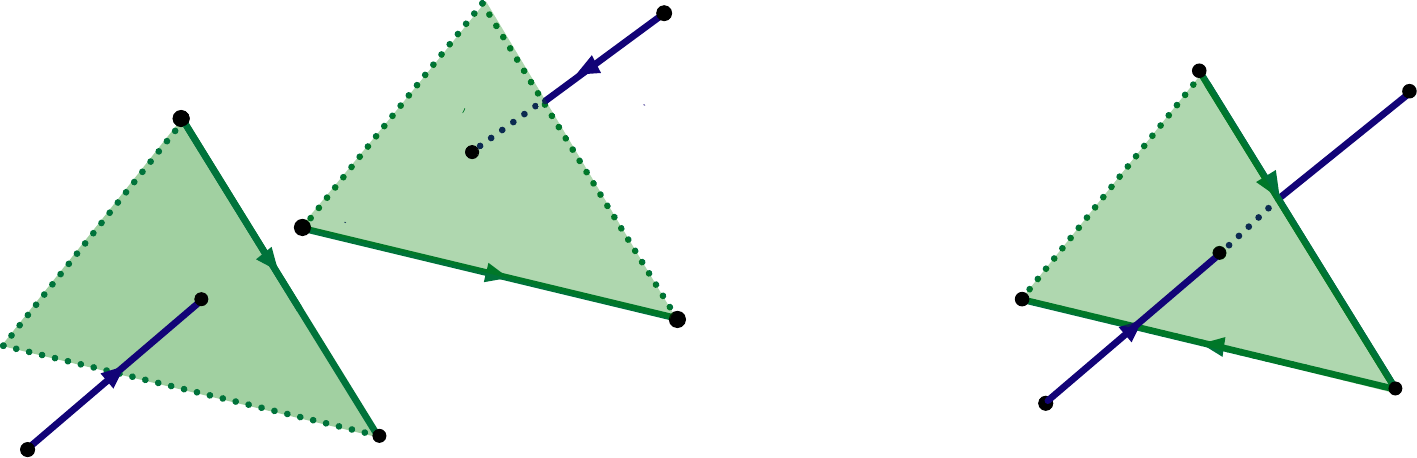}};
    \begin{scope}[x={(image.south east)},y={(image.north west)}]
       \node[] at (0,0.00)     (c1)  {$c_1$};
       \node[] at (0.12, 0.35) (ct)  {$c_t$};
       \node[] at (0.3,0.05)   (v2)  {$v_2$};
       \node[] at (0.33,0.6)  (ct') {$c_{t'}$};
       \node[] at (0.5, 0.95) (c2)  {$c_2$};
       \node[] at (0.5, 0.35)  (v2') {$v_2'$};
       \node[] at (0.73,0.07)  (c1n) {$c_1$};
       \node[] at (0.97,0.87)  (c2n) {$c_2$};
       \tikzstyle{arrow} = [thick,->,>=stealth]
       \node[] at (0.5,0.5) (glue1) {};
       \node[] at (0.7,0.5) (glue2) {};
       \draw[arrow] (glue1) edge (glue2);
       \node[] at (0.99, 0.1) (v)  {$v_2 = v_2'$};
       \node[] at (0.35, 0.35) (e')  {$e'$};
       \node[] at (0.22, 0.35) (e)  {$e$};
       \node[] at (0.81,0.47) (ctct') {$c_t = c_{t'}$};
    \end{scope}
    
\end{tikzpicture}
\caption{We fuse two links by identifying the dual faces $t_e$ and $t_{e'}$. 
We require that the points $c_t$ and $c_{t'}$ in the respective face to match. As we discussed before, face decorations are rooted at a point. Matching the face decorations imposes that they have the same root. Picking $v_2$ and $v_2'$ as respectively root of $t_e$ and $t_{e'}$, we therefore must have $v_2 = v_2' $.   
   }
    \label{link-fused}
\end{center}
\end{figure}

\paragraph{Link gluing.}
The gluing of two links, as illustrated in Fig. \ref{link-fused}, is done by identifying the face data dual to each link.
\medskip\\
The geometric identification consists in setting\footnote{Faces are identified with opposite outgoing directions.}
\be
     t_e = - t_{e'} .
\ee
We enforce this geometric constraint as a condition on $G_1^*$ elements.
Let us represent the decorations of the faces $t_e$ and $t_{e'}$ at the nodes $c_t$ and $c_{t'}$. We root them respectively at the vertices $v_2$ and $v_2'$. 
\be
    \begin{aligned}
     \cB_{l,e}:
     \quad   &
        \cG^* = G_2 \ltimes G_1^* \ni 
        \tell^{c_t} \equiv \tell_1 = \tla_1 \, \tobe_1 \equiv \tla^{c_t} \, \tobe^{c_t}_{v_2} , \\
    \cB_{l',e'}: 
    \quad    & 
        \cG^* = G_2 \ltimes G_1^* \ni 
        \tell^{c_{t'}} \equiv \tell_2 = \tla_2 \, \tobe_2 \equiv \tla^{c_{t'}} \, \tobe^{c_{t'}}_{v_2'} , 
    \end{aligned}
    \quad \,
    \begin{aligned}
        &
        \cG = G_1 \ltimes G_2^* \ni 
        h^{v_2} \equiv h_1 = u_1 \, y_1 \equiv u^{v_2} \, y^{v_2}_{c_t} , \\
        &
        \cG = G_1 \ltimes G_2^* \ni 
        h^{v_2'} \equiv h_2 = u_2 \, y_2 \equiv u^{v_2'}\, y^{v_2'}_{c_{t'}}.
    \end{aligned}
\ee
In order to express the face decorations in the same reference frame, their indices have to match, hence we further demand that
\be
    c_t = c_{t'} \,\,,\quad v_2 = v_2' .
\ee
The condition $c_t = c_{t'}$ provides the gluing the half links.  
The proper geometric condition for the face decorations is encoded in the identification 
\be
    \tobe^{c_t}_{v_2} = \big(\tobe^{c_{t'}}_{v_2'}\big)^{-1} 
    \,\, \Leftrightarrow \,\,
    \tobe_1 =  \tobe_2 \mone.
    \label{Link_GluingConstraint}
\ee
Using the above constraint and the Heisenberg ribbon equation, we obtain the extended ribbon equation
\begin{align}
    &
    i=1,2, \quad
    \ell_i h_i = \th_i \tell_i = \th_i (\tla_i \tobe_i) 
    \quad \Leftrightarrow \quad 
    \tobe_i = \th_i\mone \tla_i\mone \ell_i h_i
    \quad \stackrel{\tobe_1 =  \tobe_2 \mone}{\Rightarrow} \quad
    \th_1\mone\tla_1\mone \ell_1 h_1 =  h_2 \mone \ell_2\mone \tla_2 \th_2
    \nonumber \\
    &
    \nonumber \\
    &
    \begin{aligned}
        \quad \Leftrightarrow \qquad
        \tla_1\mone (\tu_1 \ty_1)\mone (\lambda_1 \obe_1) (\oy_1 u_1)
        & =
        (\oy_2 u_2)\mone (\lambda_2 \obe_2)\mone (\tu_2 \ty_2) \tla_2
        \\
        (\lambda_1 \obe_1) (\oy_1 u_1) \tla_2\mone (\tu_2 \ty_2)\mone
        & =
        (\tu_1 \ty_1) \tla_1 (\oy_2 u_2)\mone (\lambda_2 \obe_2)\mone
        \\
        \lambda_1 \, \obe_1 \, \ty_1 (u_1 \rhd \tla_2\mone) (u_1 \lhd \tla_2\mone) \ty_2\mone \, \tu_2\mone
        & =
        \tu_1 \, \oy_1 (\tla_1 \rhd u_2\mone) (\tla_1 \lhd u_2\mone) \oy_2\mone \, \obe_2\mone \, \lambda_2\mone
        \\
        \lambda_1 \, \obe_1 (u_1 \rhd \tla_2\mone) y' \, \beta' ((u_1 \lhd \tla_2\mone) \rhd \ty_2\mone) (u_1 \lhd \tla_2\mone) \tu_2\mone
        & =
        \tu_1 (\tla_1 \rhd u_2\mone) (\ty_1 \lhd (\tla_1 \rhd u_2\mone)) y'' \, \beta'' (\tla_1 \lhd u_2\mone) \obe_2\mone \, \lambda_2\mone ,
    \end{aligned}
    \nonumber \\
    & \nonumber \\
    &
    \begin{aligned}
        \qquad
        &
        \big(\lambda_1 (u_1 \rhd \tla_2\mone)\big)
        \big(\obe_1 \lhd (u_1 \rhd \tla_2\mone) \beta' \big)
        \big(y' ((u_1 \lhd \tla_2\mone) \rhd \ty_2\mone)\big) 
        \big((u_1 \lhd \tla_2\mone) \tu_2\mone\big)
        \\
        & \qquad \qquad \qquad \qquad  =
        \big(\tu_1 (\tla_1 \rhd u_2\mone)\big)
        \big((\ty_1 \lhd (\tla_1 \rhd u_2\mone)) y''\big)
        \big(\beta'' (\tla_1 \lhd u_2\mone) \rhd \obe_2\mone\big)
        \big((\tla_1 \lhd u_2\mone) \lambda_2\mone\big) ,
    \end{aligned}
    \label{ExtendedRibEq_LinkGluing}
\end{align}
with $(u_1 \rhd \tla_2\mone)\mone \, \oy_1 \, (u_1 \rhd \tla_2\mone) = y' \beta'$ and $(\tla_1 \lhd u_2\mone) \, \oy_2 \, (\tla_1 \lhd u_2\mone)\mone = y'' \beta''$. The link fusion is therefore obtained through the symplectic reduction
\be
    (\cB_{c_1,e} \times \cB_{c_2,e'}) /\!/ G_1 
    \label{PhaseSpace_LinkGluing}
\ee
with momentum map $\tobe^{c_t}_{v_2} \, \tobe^{c_{t'}}_{v_2'} = \tobe^{c_t}_{v_2} + \tobe^{c_{t'}}_{v_2'} \in G_1^*$ and fused link variable $\tu^{v_1}_l = \tu^{v_1} (\tla^{c_t} \rhd (u^{v_2})\mone) \in G_1$.

\begin{figure}[H]
    \begin{center}   
    \begin{tikzpicture}
    \node[anchor=south west,inner sep=0] (image) at (0,0)
    {\includegraphics[width=0.7\textwidth]{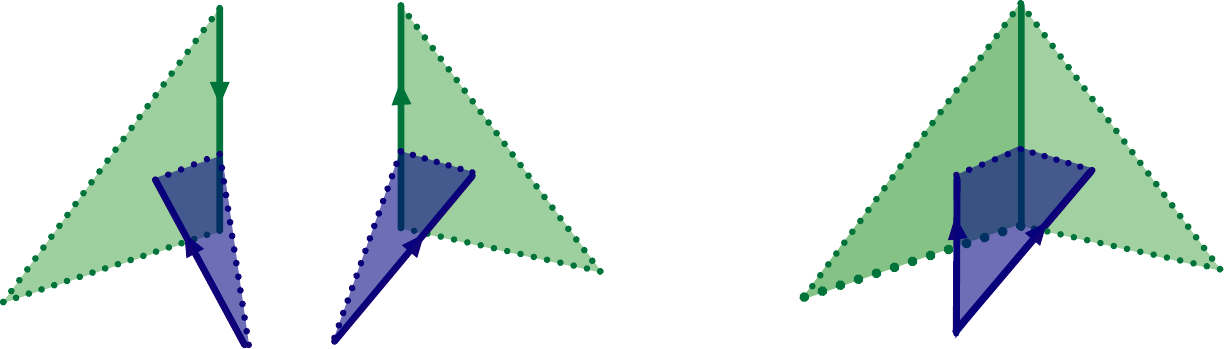}};
    \begin{scope}[x={(image.south east)},y={(image.north west)}]
       \node[] at (0.22,0) (c1) {$c_1$};
       \node[] at (0.3,0) (c2) {$c_2$};
       \node[] at (0.2,0.95) (v1) {$v_1$};
       \node[] at (0.3, 0.95) (V2') {$v_2'$};
       \node[] at (0.2, 0.7) (e) {$e$};
       \node[] at (0.3, 0.7) (e') {$e'$};
       \node[] at (0.11, 0.5)(ct) {$c_t$};
       \node[] at (0.41, 0.5) (ct'){$c_{t'}$};
       \node[] at (0.8,0) (c2c1) {$c_1=c_2$};
       \node[] at (0.5,0.5) (glue1) {};
       \node[] at (0.65,0.5) (glue2) {};
       \tikzstyle{arrow} = [thick,->,>=stealth]
       \draw[arrow] (glue1) edge (glue2);
       \node[] at (0.76, 0.5) {$c_t$};
       \node[] at (0.91,0.5) {$c_{t'}$};
       \node[] at (0.9,1) {$v_1= v_2'$};
    \end{scope}
    \end{tikzpicture}
    \caption{We fuse two wedges by identifying the edges $e$ and ${e'}$, so that they have opposite orientation.  To fuse the wedges, we require that the two wedges are rooted at the same point, so we identify the nodes $c_1$ and $c_2$. 
    }
    \label{wedge-fused}
    \end{center}
\end{figure}

\paragraph{Wedge gluing Part 1.}
The gluing of two wedges, as illustrated in Fig. \ref{wedge-fused}, is done by identifying the edge data, dual to each wedge.
\medskip\\
The geometric identification consists in setting\footnote{In this example the edges are identified with opposite orientations.}
\be 
    e = - e' 
    \,\,\,\leftrightarrow\,\,\,
    v_1 = v_2' \,,\,\, v_2 = v_1'.
\ee
We enforce the geometric constraint as a condition on $G_2$ elements. 
Let us represent the decorations of the edges $e$ and $e'$ at the nodes $c_1$ and $c_2$.
\be
    \begin{aligned}
        \cB_{l , e} :
        \quad &
        \cG^* = G_2 \ltimes G_1^* \ni 
        \ell^{c_1} \equiv \ell_1 = \lambda_1 \, \obe_1 \equiv \lambda^{c_1} \, \obe^{c_1}_{v_2} , \\
        \cB_{l' , e'} :
        \quad &
        \cG^* = G_1^* \rtimes G_2 \ni 
        \ell^{c_2} \equiv \ell_2 = \beta_2 \, \lambda_2 \equiv \beta^{c_2}_{v_1'} \,\lambda^{c_2} ,
    \end{aligned}
    \quad \,
    \begin{aligned}
        &
        \cG = G_2^* \rtimes G_1 \ni 
        h^{v_2} \equiv h_1 = \oy_1 \, u_1 \equiv \oy^{v_2}_{c_1} \, u^{v_2} , \\
        &
        \cG = G_2^* \rtimes G_1 \ni 
        \th^{v_1'} \equiv \th_2 = \toy_2 \, \tu_2 \equiv \toy^{v_1'}_{c_2} \, \tu^{v_1'} .
    \end{aligned}
\ee
In order to express the edge decorations in the same reference frame, their indices have to match, hence we further demand that
\be
    c_1 = c_2 .
\ee
The proper geometric condition is encoded in the identification
\be
    \lambda^{c_1} = (\lambda^{c_2})\mone 
    \,\,\Leftrightarrow\,\,
    \lambda_1 = \lambda_2\mone .    
    \label{Wedge_GluingConstraint_Tetra}
\ee
Using the above constraint and the Heisenberg ribbon equation, we obtain the extended ribbon equation
\begin{align}
    &
    \begin{cases}
        &
        \th_1 \tell_1 = \ell_1 h_1 = (\lambda_1 \obe_1) h_1 
        \quad \Leftrightarrow \quad 
        \lambda_1 = \th_1 \tell_1 h_1\mone \obe_1\mone \\
        &
        \th_2 \tell_2 = \ell_2 h_2 = (\beta_2 \lambda_2) h_2 
        \quad \Leftrightarrow \quad 
        \lambda_2 = \beta_2\mone \th_2 \tell_2 h_2\mone \\
    \end{cases}
    \quad \stackrel{\lambda_1 = \lambda_2\mone}{\Rightarrow} \quad
    \th_1 \tell_1 h_1\mone \obe_1\mone = h_2\mone \tell_2\mone \th_2\mone \beta_2
    \nonumber \\
    &
    \nonumber \\
    & \qquad 
    \begin{aligned}
        \quad \Leftrightarrow \qquad
        (\tu_1 \ty_1) (\tbe_1 \tla_1) (\oy_1 u_1)\mone \obe_1\mone
        & =
        \obe_2 (\oy_2 u_2) (\tbe_2 \tla_2)\mone (\tu_2 \ty_2)\mone
        \\
        (\oy_1 u_1)\mone \obe_1\mone (\tu_2 \ty_2) (\tbe_2 \tla_2)
        & =
        (\tbe_1 \tla_1)\mone (\tu_1 \ty_1)\mone \obe_2 (\oy_2 u_2)
        \\
        u_1\mone \, \oy_1\mone \, \tu_2 \, y' \, \beta' \, \ty_2 \, \tbe_2 \, \tla_2
        & =
        \tla_1\mone \, \tbe_1\mone \, \ty_1\mone \, y'' \, \beta'' \, \tu_1\mone \, \oy_2 \, u_2
        \\
        (u_1\mone \tu_2)
        \big((\oy_1\mone \lhd \tu_2) y' \ty_2\big) 
        (\beta' \tbe_2)
        \tla_2
        & =
        \tla_1\mone
        (\tbe_1\mone \beta'') 
        \big(\ty_1\mone y'' (\tu_1\mone \rhd \oy_2)\big)
        (\tu_1\mone u_2) ,
    \end{aligned}
    \label{ExtendedRibEq_Wedgeluing_Tetra}
\end{align}
with $\tu_2\mone \, \obe_1 \, \tu_2 = (y' \beta')\mone$ and $\tu_1\mone \, \obe_2 \tu_1 = y'' \beta''$.
The wedge gluing is therefore obtained through the symplectic reduction 
\be
    (\cB_{c_1,e} \times \cB_{c_2,e'})/\!/ G_2^*
    \label{PhaseSpace_WedgeGluing_Tetra}
\ee
with momentum map $\lambda^{c_1} \lambda^{c_2} \in G_2$ and fused wedge variable $(\ty_{ww'})^{v_2}_{c_t'} = - \oy^{v_2}_{c_1} \lhd \tu^{v_1'} +\ty^{v_1'}_{c_{t'}} - \big( (\tu^{v_1'})\mone \, \obe^{c_1}_{v_2} \, \tu^{v_1'} \big)\big|_{G_2^*} \in G_2^*$.

\paragraph{Wedge gluing Part 2.}
The gluing of wedges 
is done by identifying the edge data dual to each wedge. It is mostly relevant for building the holonomy around an edge.
\medskip \\
The geometric identification consists in setting\footnote{In this example the edges are identified with the same orientation.}
\be 
    e = e' 
    \,\,\,\leftrightarrow\,\,\,
    v_1 = v_1' \,,\,\, v_2 = v_2'.
\ee
We enforce the geometric constraint as a condition on $G_2$ elements. 
Let us represent the decorations of the edges $e$ and $e'$ at the nodes $c_t$ and $c_2$.
\be
    \begin{aligned}
        \cB_{l , e} :
        \quad &
        \cG^* = G_2 \ltimes G_1^* \ni 
        \tell^{c_t} \equiv \tell_1 = \tbe_1 \, \tla_1 \equiv \tbe^{c_t}_{v_1} \, \tla^{c_t} , \\
        \cB_{l' , e'} :
        \quad &
        \cG^* = G_1^* \rtimes G_2 \ni 
        \ell^{c_2} \equiv \ell_2 = \beta_2 \, \lambda_2 \equiv \beta^{c_2}_{v_1'} \,\lambda^{c_2} ,
    \end{aligned}
    \quad \,
    \begin{aligned}
        &
        \cG = G_2^* \rtimes G_1 \ni 
        \th^{v_1} \equiv \th_1 = \tu_1 \, \ty_1 \equiv \tu^{v_1} \, \ty^{v_1}_{c_t} , \\
        &
        \cG = G_2^* \rtimes G_1 \ni 
        \th^{v_1'} \equiv \th_2 = \toy_2 \, \tu_2 \equiv \toy^{v_1'}_{c_2} \, \tu^{v_1'} .
    \end{aligned}
\ee
So that the variables we wish to identify are in the same representation, we set $c_2=c_t$. The proper geometric condition is encoded in the identification
\be
    \tla^{c_t} = \lambda^{c_2}
    \,\,\Leftrightarrow\,\,
    \tla_1 = \lambda_2 .    
\ee
Using the above constraint and the Heisenberg ribbon equation, we obtain the extended ribbon equation
\begin{align}
    &
    \begin{cases}
        &
        \ell_1 h_1 = \th_1 \tell_1 = \th_1 (\tbe_1 \tla_1) 
        \quad \Leftrightarrow \quad 
        \tla_1 = \tbe_1\mone \th_1\mone \ell_1 h_1 \\
        &
        \th_2 \tell_2 = \ell_2 h_2 = (\lambda_2 \obe_2) h_2 
        \quad \Leftrightarrow \quad 
        \lambda_2 = \th_2 \tell_2 h_2\mone \obe_2\mone \\
    \end{cases}
    \quad \stackrel{\tla_1 = \lambda_2}{\Rightarrow} \quad
    \tbe_1\mone \th_1\mone \ell_1 h_1 = \th_2 \tell_2 h_2\mone \obe_2
    \nonumber \\
    &
    \nonumber \\
    & \qquad 
    \begin{aligned}
        \quad \Leftrightarrow \qquad
        \tbe_1\mone (\tu_1 \ty_1)\mone (\lambda_1 \obe_1) (\oy_1 u_1)
        & =
        (\tu_2 \ty_2) (\tbe_2 \tla_2) (\oy_2 u_2)\mone \obe_2
        \\
        (\lambda_1 \obe_1) (\oy_1 u_1) \obe_2 (\oy_2 u_2)
        & =
        (\tu_1 \ty_1) \tbe_1 (\tu_2 \ty_2) (\tbe_2 \tla_2)
        \\
        \lambda_1 \, \obe_1 \, \oy_1 \, \beta' \, y' \, u_1 \, \oy_2 \, u_2
        & =
        \tu_1 \, \ty_1 \, \tu_2 \, \ty' \, \tbe' \, \ty_2 \, \tbe_2 \, \tla_2
        \\
        \lambda_1 
        (\obe_1 \beta')
        \big(\oy_1 y' (u_1 \rhd \oy_2)\big)
        (u_1 u_2)
        & =
        (\tu_1 \tu_2)
        \big((\ty_1 \lhd \tu_2) \ty' \ty_2\big) 
        (\tbe' \tbe_2) 
        \tla_2 ,
    \end{aligned}
    \label{ExtendedRibEq_Wedgeluing_Loop}
\end{align}
with $u_1 \, \obe_2 \, u_1\mone = \beta' y'$ and $\tu_2\mone \, \tbe_1 \, \tu_2 = \ty' \tbe'$.
The wedge gluing is therefore obtained through the symplectic reduction
\be
    (\cB_{c_1,e} \times \cB_{c_2,e'}) /\!/ G_2^*
\ee
with momentum map $\tla_1 \lambda_2\mone \in G_2$ and fused wedge variable $(\oy_{ww'})^{v_1}_{c_1} = \oy^{v_1}_{c_1} + u^{v_2}\rhd \oy^{v_1'}_{c_2} + \big(u^{v_2} \, \obe^{c_2}_{v_2'} \, (u^{v_2})\mone)|_{G_2^*} \in G_2^*$.

\begin{figure}[H]
    \begin{center}
    \begin{tikzpicture}
    \node[anchor=south west,inner sep=0] (image) at (0,0)
    {\includegraphics[width=0.6\textwidth]{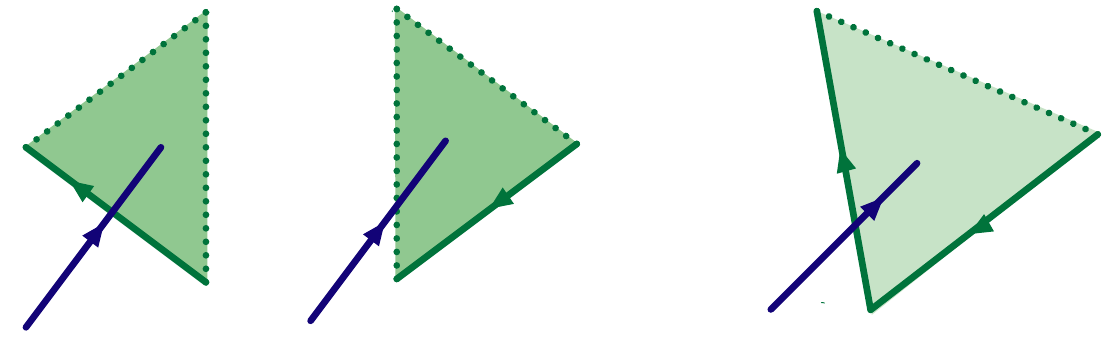}};
    \begin{scope}[x={(image.south east)},y={(image.north west)}]
    \node[] at (0,0) {$c_2$};
    \node[] at (0.27,0) {$c_1$};
    \node[] at (0.12,0.55) {$c_{t'}$};
    \node[] at (0.42,0.55) {$c_t$};
    \node[] at (0.65,0) {$c_1=c_2$};
    \node[] at (0.85, 0.55) {$c_t = c_{t'}$};
    \node[] at (0, 0.50) {$v_2'$};
    \node[] at (0.2,0.1) {$v_1'$};
    \node[] at (0.38,0.1) {$v_2$};
    \node[] at (0.55,0.55) {$v_1$};
    \node[] at (0.57,0.45) (glue1){};
    \node[] at (0.7,0.45) (glue2){};
    \tikzstyle{arrow} = [thick,->,>=stealth]
    \draw[arrow] (glue1) edge (glue2);
    \node[] at (0.88, 0.07) {$v_1' = v_2$};
    \node[] at (1, 0.54) {$v_1$};
    \node[] at (0.71,0.9) {$v_2'$};
    \end{scope}
    \end{tikzpicture}
    \caption{We fuse two faces by identifying the links $l$ and ${l'}$.  To merge the faces, we also require them to have the same root. We choose to root $t_e$ and $t_{e'}$ respectively at $v_2$ and $v_1'$. We  then identify $v_2 = v_1'$.
    }
    \label{face-fused}
    \end{center}
\end{figure}

\paragraph{Face gluing.}
The gluing of two faces, as illustrated in Fig. \ref{face-fused}, is done by identifying the link data dual to each face.
\medskip\\
The geometric identification consists in setting 
\be
    l = l' 
    \,\,\, \leftrightarrow \,\,\,
    c_1 = c_2 \,,\,\, c_t = c_{t'} .
\ee
We enforce the geometric constraint as a condition on $G_1$ elements. 
Let us represent the decorations of the links $l$ and $l'$ at the vertices $v_2$ and $v_1'$.
\be
    \begin{aligned}
        \cB_{l , e} :
        \quad &
        \cG^* = G_2 \ltimes G_1^*\ni 
        \tell^{c_t} \equiv \tell_1 = \tla_1 \, \tobe_1 \equiv \tla^{c_t} \, \tobe^{c_t}_{v_2} , \\
        \cB_{,' , e'} :
        \quad &
        \cG^* = G_1^*\rtimes G_2\ni 
        \tell^{c_{t'}} \equiv \tell_2 = \tbe_2 \, \tla_2 \equiv \tbe^{c_{t'}}_{v_1'} \, \tla^{c_{t'}} ,
    \end{aligned} 
    \quad \,
    \begin{aligned}
        &
        \cG = G_1 \ltimes G_2^* \ni 
        h^{v_2} \equiv h_1 = u_1 \, y_1 \equiv u^{v_2} \, y^{v_2}_{c_t} \, \\
        &
        \cG = G_1 \ltimes G_2^* \ni 
        \th^{v_1'} \equiv \th_2 = \tu_2 \, \ty_2 \equiv \tu^{v_1'} \, \ty^{v_1'}_{c_{t'}} .
    \end{aligned}
\ee
In order to express the link decorations in the same reference frame, their indices have to match, hence we further demand that
\be
    v_2 = v_1' .
\ee
The proper geometric condition is encoded in the identification
\be
    u^{v_2} = \tu^{v_1'}
    \,\,\Leftrightarrow\,\,
    u_1 = \tu_2 .
    \label{Face_GluingConstraint}
\ee
Using the above constraint and the Heisenberg ribbon equation, we obtain the extended ribbon equation
\begin{align}
    &
    \begin{cases}
        &
        \th_1 \tell_1 = \ell_1 h_1 = \ell_1 (u_1 y_1) 
        \quad \Leftrightarrow \quad 
        u_1 = \ell_1\mone \th_1 \tell_1 y_1\mone \\
        &
        \ell_2 h_2 = \th_2 \tell_2 = (\tu_2 \ty_2) \tell_2 
        \quad \Leftrightarrow \quad 
        \tu_2 = \ell_2 h_2 \tell_2\mone \ty_2\mone \\
    \end{cases}
    \quad \stackrel{u_1 = \tu_2}{\Rightarrow} \quad
    \ell_1\mone \th_1 \tell_1 y_1\mone = \ell_2 h_2 \tell_2\mone \ty_2\mone
    \nonumber \\
    &
    \nonumber \\
    & \qquad
    \begin{aligned}
        \quad \Leftrightarrow \qquad
        \oy_1\mone (\lambda_1 \obe_1)\mone (\tu_1 \ty_1) (\tbe_1 \tla_1) 
        & =
        (\lambda_2 \obe_2) (\oy_2 u_2) (\tbe_2 \tla_2)\mone \ty_2\mone
        \\
        (\tu_1 \ty_1) (\tbe_1 \tla_1) \ty_2 (\tbe_2 \tla_2)
        & =
        (\lambda_1 \obe_1) \oy_1 (\lambda_2 \obe_2) (\oy_2 u_2) 
        \\
        \tu_1 \, \ty_1 \, \tbe_1 \, \ty' \, \tbe' \, \tla_1 \, \tbe_2 \, \tla_2
        & =
        \lambda_1 \, \obe_1 \, \lambda_2 \, \beta' \, y' \, \obe_2 \, \oy_2 \, u_2 
        \\
        \tu_1 
        (\ty_1 \ty') 
        \big(\tbe_1 \tbe' (\tla_1 \rhd \tbe_2)\big) 
        (\tla_1 \tla_2)
        & =
        (\lambda_1 \lambda_2)
        \big((\obe_1 \lhd \lambda_2) \beta' \obe_2\big) 
        (y' \oy_2)
        u_2 ,
        \end{aligned}
    \label{ExtendedRibEq_FaceGluing}
\end{align}
with $\tla_1 \, \ty_2 \tla_1\mone = \ty' \tbe'$ and $\lambda_2\mone \, \oy_1 \, \lambda_2 = \beta' y'$.
The face fusion is therefore obtained through the symplectic reduction 
\be
    (\cB_{c_1,e} \times \cB_{c_2,e'})/\!/ G_1^*
    \label{PhaseSpace_FaceeGluing}
\ee
with momentum map $u^{v_2} (\tu^{v_1'})\mone \in G_1$ and fused face variable $(\tbe_{tt'})^{c_t}_{v_1} = \tbe^{c_t}_{v_1} + \lambda^{c_t} \rhd \tbe^{c_{t'}}_{v_1'} + \big( \tla^{c_1} \, \ty^{v_1'}_{c_{t'}} \, (\tla^{c_1})\mone \big) \big|_{G_1^*} \in G_1^*$.

\begin{figure}[H]
    \begin{center}
    \begin{tikzpicture}
    \node[anchor=south west,inner sep=0] (image) at (0,0)
    {\includegraphics[width=0.6\textwidth]{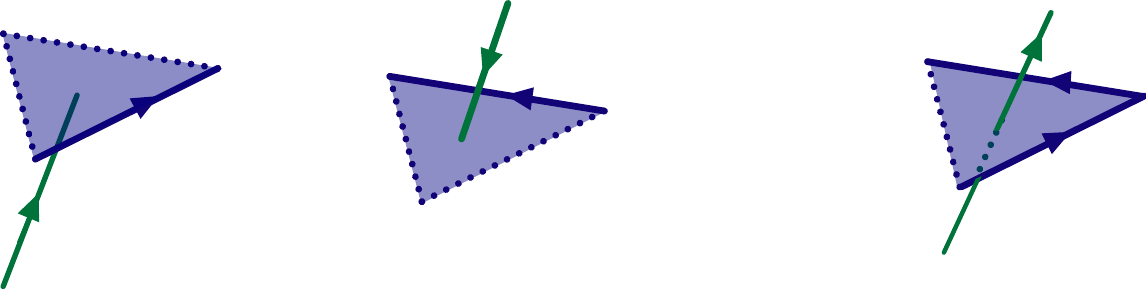}};
    \begin{scope}[x={(image.south east)},y={(image.north west)}]
    \node at (0.03,0) {$v_1$};
    \node at (0,0.5) {$c_1$};
    \node at (0.2, 0.7) {$c_t$};
    \node at (0.08,0.62) {$v_2$};
    \node at (0.3, 0.7) {$c_{t'}$};
    \node at (0.55,0.7) {$c_{2}$};
    \node at (0.41,1) {$v_1'$};
    \node at (0.38,0.6) {$v_2'$};
    \node at (0.8,0.04) {$v_1$};
    \node at (0.9,1) {$v_1'$};
    \node at (0.8, 0.4) {$c_1$};
    \node at (0.78, 0.73) {$c_{t'}$};
    \node at (1.05,0.6) {$c_2=c_t$};
    \node[] at (0.57,0.55) (glue1){};
    \node[] at (0.7,0.55) (glue2){};
    \tikzstyle{arrow} = [thick,->,>=stealth]
    \draw[arrow] (glue1) edge (glue2);
    \end{scope}
    \end{tikzpicture}
    \caption{We fuse two edges by identifying the wedges $w_l$ and $w_{l'}$. 
    Matching the wedge decorations imposes that they have the same root. Picking $c_t$ and $c_2$ as respectively root of $w_l$ and $w_{l'}$, we therefore must have $c_2 = c_t$. To fuse the edges, we must have that the target of the first edge, $e$ is the source of the second edge $e'$. Hence we must have $v_2=v_2'$  
    }
    \label{edge-fused}
    \end{center}
\end{figure}

\paragraph{Edge gluing.}
The gluing of two edges, as illustrated in Fig. \ref{edge-fused}, is done by identifying the wedge data, dual to each edge.  
\medskip\\
The geometric identification consists in setting 
\be 
    w_l = w_{l'} .
\ee
We enforce this geometric constraint as a condition on $G_2^*$ element. Let us represent the decorations of the wedges $w_l$ and $w_{l'}$ at the vertices $v_2$ and $v_2'$. We root them respectively at the nodes $c_t$ and $c_2$.
\be
    \begin{aligned}
        \cB_{l , e} :
        \quad &
        \cG^* = G_2 \ltimes G_1^* \ni 
        \tell^{c_t} \equiv \tell_1 = \tla_1 \, \tobe_1 \equiv \tla^{c_t} \, \tobe^{c_t}_{v_2} , \\
        \cB_{l' , e'} :
        \quad & 
        \cG^*= G_2 \ltimes G_1^* \ni 
        \ell^{c_2} \equiv \ell_2 = \lambda_2 \, \obe_2 \equiv \lambda^{c_2} \, \obe^{c_2}_{v_2'} , \end{aligned}
        \quad \,
    \begin{aligned}
        &
        \cG = G_1 \ltimes G_2^* \ni 
        h^{v_2} \equiv h_1 = u_1 \, y_1 \equiv u^{v_2} \, y^{v_2}_{c_t} , \\
        &
        \cG = G_2^* \rtimes G_1 \ni 
        h^{v_2'} \equiv h_2 = \oy_2 \, u_2 \equiv \oy^{v_2'}_{c_2} \, u^{v_2'} .
    \end{aligned}
\ee
In order to express the wedge decorations in the same reference frame, their indices have to match, hence we further demand that
\be
    c_t = c_2 \,\,,\quad v_2 = v_2' .
\ee
The condition $v_2 = v_2'$ provides the gluing of the edges.
The proper geometric condition is encoded in the identification 
\be
    y^{v_2}_{c_t} = \oy^{v_2'}_{c_2}
    \,\,\Leftrightarrow\,\,
    y_1 = \oy_2 .    
    \label{Edge_GluingConstraint}
\ee
Using the above constraint and the Heisenberg ribbon equations, we obtain the extended ribbon equation
\begin{align}
    &
    \begin{cases}
        &
        \th_1 \tell_1 = \ell_1 h_1 = \ell_1 (u_1 y_1) 
        \quad \Leftrightarrow \quad 
        y_1 = u_1\mone \ell_1\mone \th_1 \tell_1 \\
        &
        \th_2 \tell_2 = \ell_2 h_2 = \ell_2 (\oy_2 u_2)
        \quad \Leftrightarrow \quad 
        \oy_2 = \ell_2\mone \th_2 \tell_2 u_2\mone \\
    \end{cases}
    \quad \stackrel{y_1 = \oy_2}{\Rightarrow} \quad
    u_1\mone \ell_1\mone \th_1 \tell_1 = \ell_2\mone \th_2 \tell_2 u_2\mone
    \nonumber \\
    &
    \nonumber \\
    & \qquad 
    \begin{aligned}
        \quad \Leftrightarrow \qquad
        u_1\mone (\lambda_1 \obe_1)\mone (\tu_1 \ty_1) (\tbe_1 \tla_1)
        & =
        (\beta_2 \lambda_2)\mone (\toy_2 \tu_2) (\tla_2 \tobe_2) u_2\mone
        \\
        (\tu_1 \ty_1) (\tbe_1 \tla_1) u_2 (\tla_2 \tobe_2)\mone
        & =
        (\lambda_1 \obe_1) u_1 (\beta_2 \lambda_2)\mone (\toy_2 \tu_2)
        \\
        \tu_1 \, \ty_1 \, \tbe_1 (\tla_1 \rhd u_2) (\tla_1 \lhd u_2) \tobe_2\mone \, \tla_2\mone
        & =
        \lambda_1 \, \obe_1 (u_1 \rhd \lambda_2\mone) (u_1 \lhd \lambda_2\mone) \beta_2\mone \, \toy_2 \, \tu_2
        \\
        \tu_1 \, \ty_1 (\tla_1 \rhd u_2) \ty' \, \tbe' ((\tla_1 \lhd u_2) \rhd \tobe_2\mone) (\tla_1 \lhd u_2) \tla_2\mone
        & =
        \lambda_1 (u_1 \rhd \lambda_2\mone) (\obe_1 \lhd (u_1 \rhd \lambda_2\mone)) 
        \beta' \, y' (u_1 \lhd \lambda_2\mone) \toy_2 \, \tu_2 
        \nonumber
    \end{aligned} 
    \nonumber \\
    & \nonumber \\
    &
    \begin{aligned}
        & \qquad \qquad
        \big( \lambda_1 (u_1 \rhd \lambda_2\mone) \big)
        \big( (\obe_1 \lhd (u_1 \rhd \lambda_2\mone) \beta' \big)
        \big( y' ((u_1 \lhd \lambda_2\mone) \rhd \toy_2) \big)
        \big( (u_1 \lhd \lambda_2\mone) \tu_2 \big)
        \\
        & \qquad \qquad \qquad \qquad \qquad =
        \big( \tu_1 (\tla_1 \rhd u_2) \big)
        \big( (\toy_1 \lhd (\tla_1 \rhd u_2)) \ty' \big)
        \big( \tobe' ((\tla_1 \lhd u_2) \rhd \tobe_2\mone) \big)
        \big( (\tla_1 \lhd u_2) \tla_2\mone \big)
        ,
    \end{aligned}
    \label{ExtendedRibEq_EdgeGluing}
\end{align}
with $(u_1 \lhd \lambda_2\mone) \, \beta_2 \, (u_1 \lhd \lambda_2\mone)\mone = \beta' y'$ and $(\tla_1 \rhd u_2) \, \tbe_1 \, (\tla_1 \rhd u_2)\mone = \tbe' \ty'$.
The wedge fusion is therefore obtained through the symplectic reduction 
\be
    (\cB_{c_1,e} \times \cB_{c_2,e'})/\!/ G_2
    \label{PhaseSpace_WedgeGluing_2}
\ee
with momentum map $y^{v_2}_{c_t} (\oy^{v_2'}_{c_2})\mone = y^{v_2}_{c_t} - \oy^{v_2'}_{c_2} \in G_2^*$ and fused edge variable $\lambda^{c_1}_{ee'} = \lambda^{c_1} (u^{v_2} \rhd (\lambda^{c_2})\mone) \in G_2$.

\subsubsection{Triangulation phase space }
\label{sec:triangulation-main}

By consecutively applying the different gluings, which are associative and commutative \cite{alekseev1998}, we can construct the phase space for a given triangulation/cellular complex. To have a simpler final formula, it is useful to introduce the following phase space
\be
    \cB_{l,e} \cong (\cB_{c_1,e} \times \cB_{c_2e'}) /\!/ \cG 
    \,,\quad 
    \textrm{ with } c_{t'}= c_{t} \,,\,\, t_e=-t_{e'} \textrm{ and } e=-e',  
\ee
which decorates a \textit{full} link, instead of a half link. This phase space is obtained by gluing both the \textit{wedge and the link data} at once, hence by identifying both edge and face data. The resulting phase space is then isomorphic to the atomic phase space (see Proposition \ref{prop:glue-ribbon}). Instead of being associated to half link, it is associated to a full link and its dual wedge, as well as edge and face.   

To build  the 3d triangulation phase space from a set of atomic phase spaces, we need
\begin{itemize} 
    \item three edges closing into a triangle $t$. We consider three atomic phase spaces and fuse them along a common link, recovering a phase space with one single fused face, with one dual link and   three edges together with their dual wedges. This fusion is implemented by a symplectic reduction along two copies of $G_1^*$. Finally we     impose a constraint to close the edges. This is implemented by constraining the product of the three $G_2$ elements 
    and hence inducing a symplectic reduction with respect to $G_2^*$, which we will denote $G_2^{*t}$. Hence the phase space of the triangle $\cP_t$  is obtained by the symplectic reduction along the group $H^t\equiv \big((G_1^*\times G_1^*)\times G^{*t}_2 \big)$
    \be
        \cP_t = 
        \cB_{l,e}^{(3)} /\!/ H^t
        \,, \quad 
        H^t\equiv \big((G_1^*\times G_1^*)\times G^{*t}_2 \big).
    \ee
    We provide the explicit construction in Sec. \ref{sec:ex}. 
    \item four triangles closing into a tetrahedron $\tau$. Once we have the triangle phase space, we glue four of them by identifying the edges pairwise to form the tetrahedron. This is implementing six wedge gluings, by constraining momentum maps in $G_2$. Finally we  impose the closure constraint to close the faces. This is implemented by constraining the product/sum of four $G_1^*$ elements, which is implementing a symplectic reduction along the group  we denote $G_1^{\tau}$. 
    
    Hence the phase space of the tetrahedron $\cP^\tau$  is obtained by the symplectic reduction along the group
    $H^\tau= \big((G_2^*)^{\times 6})\times G^{\tau}_1 \big)$
    \be
        \cP_{\tau} = 
        \cP_t{}^{(4)} /\!/ H^\tau
        \,,\quad 
        H^\tau= \big((G_2^*)^{\times 6})\times G^{\tau}_1 \big). 
    \ee
     We provide the explicit construction in Sec. \ref{sec:ex}. 
\end{itemize}
Since we have both decoration on the triangulation \textit{and} its dual, we also have to impose   geometric constraints that arise on the dual, as well as some possible  compatibility between the two. We have 
\begin{itemize} 
\item  \textit{(1-)flatness constraint.} When gluing the wedges together, we need to remember that we have a trivial $t$-map, which implies that the holonomy that decorates a closed link around the wedge $w$ has to be flat. 
First, we consider $n$ atomic phase spaces and fuse them along a common edge, to recover the phase space with one single fused wedge, its dual edge, and $n$ links around it. This fusion is implemented by a symplectic reduction along $n$ copies of $G_2^*$. Finally we impose a condition to constraint the closed holonomy. This is implemented by constraining the product of $n$  elements in $G_1$ and hence inducing a symplectic reduction with respect to $G_1^*$, which we will note $G_1^{*\, \partial w}$. Therefore, the phase space of a closed  holonomy around an edge $\cP_{\partial w}$ is obtained by the symplectic reduction along the group $H^{\partial w} = \big( (G_2^*)^{\times n-1} \times G_1^{* \, \partial w} \big)$
\be
    \cP_{\partial w} = \cB^{(n)}_{l,e} /\!/ H^{\partial w}
    \,,\quad
    H^{\partial w} = \big( (G_2^*)^{\times n-1} \times G_1^{* \, \partial w} \big) .
\ee

\item \textit{Edge simplicity}. The edge simplicity constraint discussed in \cite{Dittrich:2008ar, Dittrich:2010ey, Asante:2019lki} is weaker than the flatness constraint. It mixes information between the triangulation and its dual. It encodes that the transport of the edge data around a closed link is trivial. If we already assumed the (1-)flatness constraint, then this is automatically implemented. In fact, not demanding the flatness condition first, but instead the edge simplicity condition would amount to allow for the possibility of a non trivial flatness constrain along a closed link, which would be equivalent to have a non-trivial $t$-map. This would then be outside the realm of skeletal crossed modules which we focused on exclusively.   
\end{itemize}

The following definition constitutes the main result of the paper.

\begin{definition}
\label{Def:main_result}
    We define the phase space $\cP$ of a triangulation $\cT$, with dual $\cT^*$ to be 
    \begin{align}
        \cP = \big(\bigtimes_{l\in\cT^*} \cB_{l,e} \big) /\!/ \big( \bigtimes_{\partial w\in\cT^*} H^{\partial w}  \bigtimes_{t\in \cT} H^t \bigtimes_{\tau\in \cT} H^{\tau} \big).
    \end{align}
\end{definition}
We emphasize that the triangulation is not yet fully geometrically consistent. Indeed, we could expect that the triangle decoration in $G_1^*$ to be related to the normal which should be constructed from the edge information (in $G_2$). This would amount to impose the simplicity constraints \cite{Plebanski:1977zz}, which can be a subtle issue \cite{peter}.

We could demand that the decoration of the faces/wedges of the dual polyhedron close. This would be equivalent to demand 2-flatness in the 2-gauge theory picture. Typically this is the condition that encodes dynamics. This is similar to the 2d case, where the 1-flatness condition, around the closed links, encodes dynamics.

\medskip

Upon quantization, we expect this phase space to describe possibly quantum curved 3d discrete geometries. The states, spin 2-networks, will be given in terms of the 2-representations of the crossed module $\cG$. These provide a natural generalization of the notion of spin networks introduced by Penrose \cite{Penrose71angularmomentum:}.

\subsection{Examples} \label{sec:ex}

Given the general construction of the phase space of a 
triangulation, in this section we show some concrete examples. 
We construct the phase space for a triangle and a tetrahedron. These explicit cases allow us to identify the shape of the momentum map constraints inducing the closure and generating the symplectic reduction. 
We also recover the phase space leading to the $G$-networks when focusing on the Poincar\'e 2-group 
equipped with a trivial Poisson structure. 
We recall that while we exclusively discuss triangulations, our construction holds for any cellular decomposition and its dual with minor adjustments.

\subsubsection{Holonomy around an edge.}
For the \textit{(1-)flatness constraint}, it is useful to derive the phase space of a closed loop of links (holonomies) around an edge. For the sake of simplicity, we first consider the shortest non trivial loop, made of two links. \\
To determine this phase space we consider then two atomic phase spaces,  Heisenberg doubles $\cB_{l_i,e_i}$, associated to the full links $l_1 = (c_1 c_1')$, $l_2 = (c_2 c_2')$ and to the edges $e_1 = (v_1 v_1')$, $e_2 = (v_2 v_2')$. According to the \textit{Wedge Gluing Part 2} of Sec. \ref{Sec_GluingRules}, we impose the identification $e = e'$, representing the edges at the common nodes $c_1' = c_2$.
The proper geometric identification is encoded in the gluing condition
\be
    \tla_1 = \lambda_2 .    
    \label{Wedge_GluingConstraint_Loop}
\ee
In order to enforce the closure of the fused holonomy, we identify $c_2' = c_1$, by imposing the momentum map
\be
    u_1 u_2 = 1 .
\ee
The symplectic reduction for a closed holonomy made of two links around an edge is thus
\be
    \cP_{\partial w} = (\cB_{l_1,e_1} \times \cB_{l_2,e_2}) /\!/ (G_2^* \times G_1^*) ,
\ee
with momentum maps $(\tla_1 \lambda_2\mone, u_1 u_2)$ and fused wedge variable (decoration of a closed dual face) 
\be
    G_2^* \ni 
    \oy = \oy_1 + u_1 \rhd \oy_2 + (u_1 \, \obe_2 \, u_1\mone)|_{G_2^*} .
\ee
The generalization to a loop of $n$ links is straightforward: impose $n-1$ momentum maps to identify the $n$ edges
\be
    \lambda_1 = \lambda_2 = \dots = \lambda_n ,
    \label{MomentumMap_Loop_Edges}
\ee
and then impose the closure of the holonomy
\be
    u_1 u_2 \cdots u_n = 1 .
    \label{MomentumMap_Loop_Hol}
\ee
The symplectic reduction for the phase space of a closed holonomy around an edge is 
\be
    \cP_{\partial w} = \cB_{l,e}^{\times n} /\!/ \big((G_2^*)^{\times n-1} \times G_1^*\big) ,
\ee
with momentum maps \eqref{MomentumMap_Loop_Edges} and \eqref{MomentumMap_Loop_Hol}, and fused wedge variable rooted at the node $c_1 = c_n'$
\be
    \oy = \oy_1 + u_1 \rhd \oy_2 + \dots + (u_1 \cdots u_{n-1}) \rhd \oy_n +
    (u_1 \, \obe_2 \, u_1\mone)|_{G_2^*} + \big((u_1 \cdots u_{n_1}) \, \obe_n \, (u_1 \cdots u_{n_1})\mone\big)\big|_{G_2^*} .
\ee

\begin{figure}[H]
    \begin{center}
    \begin{tikzpicture}
     \node[anchor=south west,inner sep=0] (image) at (0,0)
    {\includegraphics[width=0.6\textwidth]{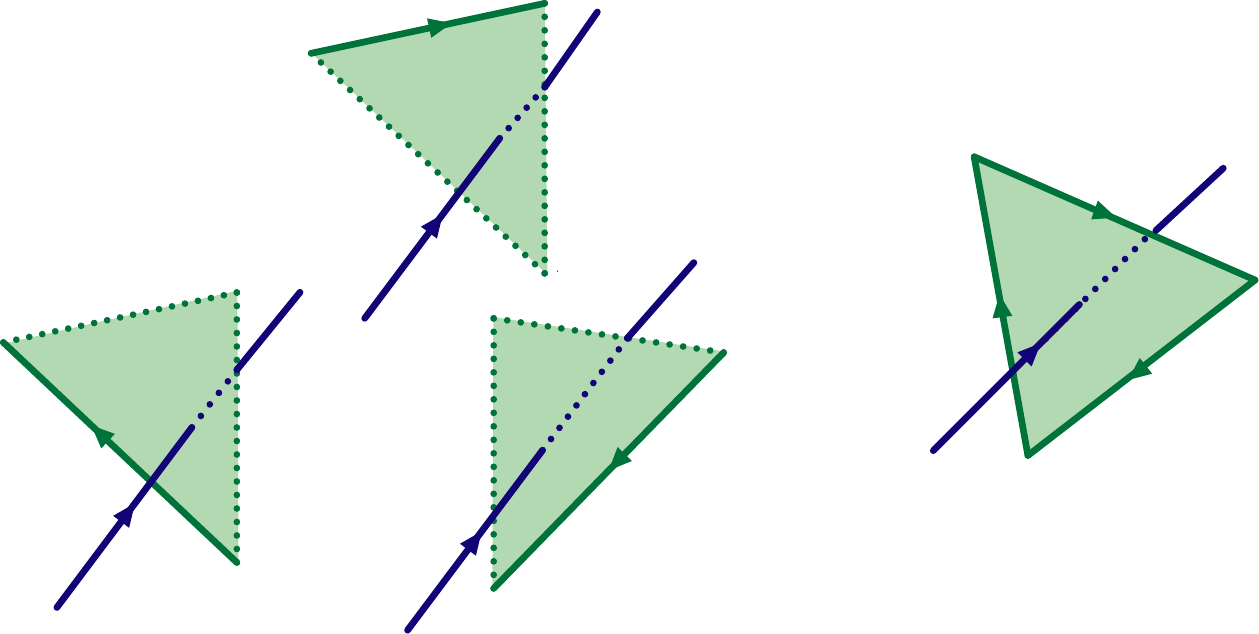}};
    \begin{scope}[x={(image.south east)},y={(image.north west)}]    
    \node at (0.05,0) {$c_1$};
    \node at (0.33,0.5) {$c_2$};
    \node at (0.23,0.57) {$c_1'$};
    \node at (0.5,1) {$c_2'$};
    \node at (0.31,-0.05) {$c_3$};
    \node at (0.55, 0.62) {$c_3'$};
    \tikzstyle{arrow}  = [thick,->,>=stealth]
    \node at (0.6,0.55) (glue1) {};
    \node at (0.75,0.55) (glue2) {};
    \draw[arrow] (glue1) edge (glue2);
    \node at (0,0.52) {$v_1'$};
    \node at (0.2,0.07) {$v_1$};
    \node at (0.6, 0.45) {$v_3$};
    \node at (0.41,0.05) {$v_3'$};
    \node at (0.23, 0.95) {$v_2$};
    \node at (0.4,1.05) {$v_2'$};
    \node at (0.7, 0.2) {$c_1 = c_2 =c_3$};
    \node at (0.9, 0.3) {$v_1 = v_3'$};
    \node at (1.05, 0.5) {$v_3 = v_2'$};
    \node at (0.8,0.8) {$v_1'=v_2$};
    \end{scope} 
    \end{tikzpicture}
    \caption{Identifying three links and fusing three faces to create the phase space associated to a triangle.}
    \label{fig:my_label}
    \end{center}
\end{figure}

\subsubsection{Triangle.} 
To determine the phase space of a triangle, we take three Heisenberg doubles $\cB_{l_i,e_i}$, with $i = 1,2,3$: three atomic phase spaces are associated to the full links\footnote{The result would be similar if only using half links.} $l_i=(c_i, c_i')$ and edges $e_i=(v_i v_i')$ as well as their respective dual structures.

The three faces of the atomic pieces are fused together into a single surface ---
the triangle face. We will fuse the face decoration in pairs since the fusion process is associative.  We will then require that the three edge decorations close.

First, we glue two phase spaces by identifying the links $l_1=(c_1 c_1') = (c_2 c_2')=l_2$ in the representation sitting at the common vertices $v_1' = v_2$ (we refer to the paragraph \ref{Sec_GluingRules} for a more detailed explanation). The geometric identification induces the momentum constraint
\be
    u_1 = \tu_2 .
\ee
Then we add the third phase space, by imposing the condition $l_1 = l_2 = l_3$  on the links represented at the common vertices $v_2' = v_3$. This time the identification induces the momentum constraint
\be
    u_2 = \tu_3 .
\ee
Putting everything together, we get the full ribbon constraint 
\be
    \tu_1
    (\ty_1 \ty' \ty'')
    \big( \tbe_1 (\lambda_1 \rhd \tbe_2) \tbe' ((\lambda_1 \lambda_2) \rhd \tbe_3) \tbe'' \big)
    (\tla_1 \tla_2 \tla_3)
    =
    (\lambda_1 \lambda_2 \lambda_3)
    \big( (\obe_1 \lhd (\lambda_2\lambda_3)) ((\obe_2 \beta') \lhd \lambda_3) \obe_3 \beta'' \big)
    (y'' \oy_3)
    u_3 ,
    \label{Triangle_ExtrendedRibbonEq}
\ee
with $\tla_1 \, \ty_2 \, \tla_1\mone = \ty' \tbe'$, $\lambda_2\mone \, \oy_1 \, \lambda_2 = y' \beta'$, $(\tla_1 \tla_2) \, \ty_3 \, (\tla_1 \tla_2)\mone = \ty'' \tbe''$ and $\lambda_3\mone (y' \oy_2) \lambda_3 = y'' \beta''$.

Last, in order to close the triangle we need to identify the vertices $v_3' = v_1$. This identification allows us to define  the closure constraint (we recall that the $t$-map is trivial) 
\be
  G_2\ni  \lambda_1 \lambda_2 \lambda_3 = 1.
\ee
The symplectic reduction for a triangle is thus
\be
    \cP_t = (\cB_{l_1,e_1} \times \cB_{l_2,e_2} \times \cB_{l_3,e_3}) /\!/ \big((G_1^* \times G_1^*) \times G_2^*\big) ,
    \label{PhaseSpace_Triangle}
\ee
with momentum maps $(u_1 \tu_2\mone , u_2 \tu_3\mone , \lambda_1 \lambda_2 \lambda_3)$ and triangle fused face variable %
\be
    \obe_t = 
    \obe_1 \lhd (\lambda_2 \lambda_3) + \obe_2 \lhd \lambda_3 + \obe_3 +
    \big( \lambda_3\mone \oy_2 \lambda_3 \big)\big|_{G_1^*} +
    \big( (\lambda_2\lambda_3)\mone \, \oy_1 \, (\lambda_2\lambda_3) \big)\big|_{G_1^*} .
    \label{Triangle_TotalFace}
\ee

\begin{figure}[H]
    \begin{center}
    \begin{tikzpicture}
    \node[anchor=south west,inner sep=0] (image) at (0,0)
    {\includegraphics[width=0.6\textwidth]{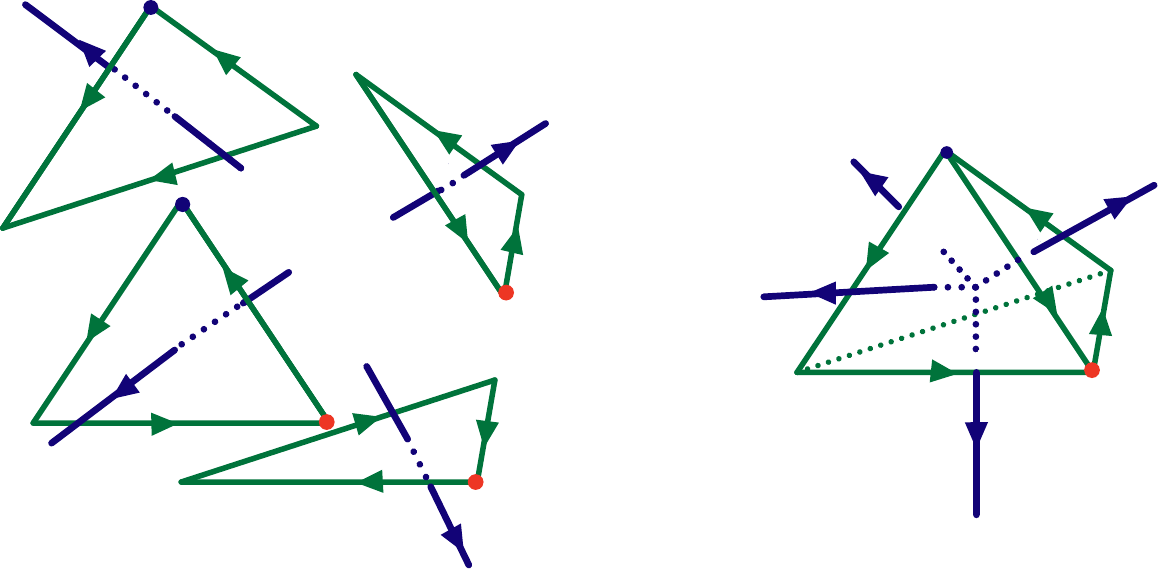}};
    \begin{scope}[x={(image.south east)},y={(image.north west)}]    
    \node at (0.45,0.12) {$v_{1;2}$};
    \node at (0.3,0.3) {$v_{1;1}$};
    \node at (0.49, 0.5) {$v_{3;1}$};
    \node at (0.16, 1) {$v_{4;1}$};
    \node at (0.11,0.63) {$v_{1;2}$};
    \node at (1,0.3) {$v_{1;1}=v_{2;1}=v_{3;1}$};
    \node at (0.85, 0.8) {$v_{4;1} = v_{1;2}$};
    \node at (0, 1) {$c_{4}'$};
    \node at (0.05,0.15) {$c_{1}'$};
    \node at (0.44,0) {$c_{2}'$};
    \node at (0.52,0.8) {$c_{3}'$};
    \node at (0.25,0.55) {$c_{1}$};
    \node at (0.18,0.5) {$e_1$};
    \node at (0.23, 0.67) {$c_{4}$}; 
    \node at (0.34, 0.37) {$c_{2}$};
    \node at (0.35,0.57) {$c_{3}$};
    
    \tikzstyle{arrow}  = [thick,->,>=stealth]
    \node at (0.55,0.55) (glue1) {};
    \node at (0.7,0.55) (glue2) {};
    \draw[arrow] (glue1) edge (glue2);
    
    \node at (0.85,0.55) {$c$};
    \end{scope}
    \end{tikzpicture}
    \caption{The phase space of a tetrahedron is constructed by gluing four triangles. To properly glue triangles, we must root the face variables at a common point. Three triangles are glued by identifying vertices $v_1^i$. The fourth triangle must have its root transported along an edge ($e_1$) before being identified. The links are connected at a common point $c \equiv c_1=c_2=c_3=c_4$.}
    \label{Fig:Tetra}
    \end{center}
\end{figure}

\subsubsection{Tetrahedron.}
In order to derive the phase space of a tetrahedron, we first consider twelve copies of the Heisenberg double, $\cB_{l_{a;i},e_{a;i}}$ for $a=1,2,3,4$ and $i=1,2,3$, associated to full links $l_{a;i} = (c_{a;i}c_{a;i}')$ and edges $e_{a;i} = (v_{a;i}v_{a;i}')$. Later on, the nodes $c_{a;1},c_{a;2},c_{a;3}$ will be identified as the source of the link dual to triangle $a$ and noted as $c_a$; then, the sources of the four links $c_1,c_2,c_3,c_4$ will be further identified as the center of the tetrahedron and noted as $c$. 

\smallskip 

To construct a tetrahedron, we have to impose first the momentum maps necessary to construct the  four triangle phase spaces:
\be
    u_{a;1} = \tu_{a;2} 
    \,\,,\quad
    u_{a;2} = \tu_{a;3}
    \,\,\, \text{ and } \,\,\,
    \lambda_{a;1} \lambda_{a;2} \lambda_{a;3} = 1
    \,\,,\,\, \text{ for } \,\,\,
    a = 1,2,3,4 .
    \label{MomentumMaps_Tetra_Links}
\ee
Then, we have to fuse the triangle phase spaces by identifying the three edges of each triangle with one of the edges of the other three triangles: use the \textit{Wedge Gluing Part 1} of Sec. \ref{Sec_GluingRules} to glue the triangles. According to Fig. \ref{Fig:Tetra}, we impose the the six momentum maps
\be
    \lambda_{1;1} = \lambda_{3;3}\mone
    \,\,,\quad
    \lambda_{1;2} = \lambda_{4;3}\mone
    \,\,,\quad
    \lambda_{1;3} = \lambda_{2;1}\mone
    \,\,,\quad
    \lambda_{2;2} = \lambda_{4;2}\mone
    \,\,,\quad
    \lambda_{2;3} = \lambda_{3;1}\mone
    \,\,,\quad
    \lambda_{3;2} = \lambda_{4;1}\mone .
    \label{MomentumMaps_Tetra_Edges}
\ee
Since the symplectic reduction prescription is commutative, the order in which we implement the above eighteen momentum maps does not matter. 
Computing the associated extended ribbon equation provides us the 
 expression of the fused decoration.
For instance, we can compute the extended ribbon equation for the four triangles (as in \eqref{Triangle_ExtrendedRibbonEq}) to derive the expression of the four fused face decorations
\be
    \begin{aligned}
        \obe_{t_1} & = 
        \beta_{1;1} + \obe_{1;2} \lhd \lambda_{1;3} + \obe_{1;3} +
        \big( \lambda_{1;3}\mone \oy_{1;2} \lambda_{1;3} \big)\big|_{G_1^*} +
        \big( (\lambda_{1;2}\lambda_{1;3})\mone \, \oy_{1;1} \, (\lambda_{1;2}\lambda_{1;3}) \big)\big|_{G_1^*} , \\
        \obe_{t_2} & = 
        \beta_{2;1} + \obe_{2;2} \lhd \lambda_{2;3} + \obe_{2;3} +
        \big( \lambda_{2;3}\mone \oy_{2;2} \lambda_{2;3} \big)\big|_{G_1^*} +
        \big( (\lambda_{2;2}\lambda_{2;3})\mone \, \oy_{2;1} \, (\lambda_{2;2}\lambda_{2;3}) \big)\big|_{G_1^*} , \\
        \obe_{t_3} & = 
        \beta_{3;1} + \obe_{3;2} \lhd \lambda_{3;3} + \obe_{3;3} +
        \big( \lambda_{3;3}\mone \oy_{3;2} \lambda_{3;3} \big)\big|_{G_1^*} +
        \big( (\lambda_{3;2}\lambda_{3;3})\mone \, \oy_{3;1} \, (\lambda_{3;2}\lambda_{3;3}) \big)\big|_{G_1^*} , \\
        \obe_{t_4} & = 
        \beta_{4;1} + \obe_{4;2} \lhd \lambda_{4;3} + \obe_{4;3} +
        \big( \lambda_{4;3}\mone \oy_{4;2} \lambda_{4;3} \big)\big|_{G_1^*} +
        \big( (\lambda_{4;2}\lambda_{4;3})\mone \, \oy_{4;1} \, (\lambda_{4;2}\lambda_{4;3}) \big)\big|_{G_1^*} .
    \end{aligned}
    \label{Tetrahedron_Faces}
\ee
From the ribbon equation we also know that the above face decorations are represented at the source of the four links, common node $c \equiv c_1 = c_2 = c_3 = c_4$ and rooted respectively at the vertices $v_{1;1}=v_{1;3}'$, $v_{2;1}=v_{2;3}'$, $v_{3;1}=v_{3;3}'$, $v_{4;1}=v_{4;3}'$. 

The extended ribbon equation is also useful to change the reference frame: we 
act by some edge or link decoration on the ribbon equation. 
For instance, starting by the twelve atomic phase spaces and fusing them pairwise by the momentum maps \eqref{MomentumMaps_Tetra_Edges}, we compute the six extended ribbon equations associated to the six edges of a tetrahedron; in principle, they are given in different reference frames. Then, acting with some edge decorations on such ribbon equations, is then possible to derive the expression of the six fused wedge decorations dual to the six edges, all rooted at the node $c$, and represented at the common vertex $v \equiv v_{1;1} = v_{2;1} = v_{3;1}$ (red vertex in Fig. \eqref{Fig:Tetra}):
\be
    \begin{aligned}
        \oy_{12} & =
        - \oy_{2;1} + (\lambda_{1;3} \, \oy_{1;3} \, \lambda_{1;3}\mone)|_{G_2^*}
        \,,\\
        \oy_{13} & =
        \oy_{1;1} - (\lambda_{3;3} \, \oy_{3;3} \, \lambda_{3;3}\mone)|_{G_2^*} 
        \,, \\
        \oy_{14} & =
        \oy_{1;2} - (\lambda_{4;3} \, \oy_{4;3} \, \lambda_{4;3}\mone)|_{G_2^*} \,,
    \end{aligned}
    \quad
    \begin{aligned}
        \oy_{23} & =
        -\oy_{3;1} + (\lambda_{2;3} \, \oy_{2;3} \, \lambda_{2;3}\mone)|_{G_2^*} 
        \,,\\
        \oy_{24} & =
        \oy_{2;2} - (\lambda_{4;2} \, \oy_{4;2} \, \lambda_{4;2}\mone)|_{G_2^*} 
        \,,\\
        \oy_{34} & =
        \oy_{3;2} - (\lambda_{4;1} \, \oy_{4;1} \, \lambda_{4;1}\mone)|_{G_2^*} \,.
    \end{aligned}
    \label{Tetra_FusedWedges}
\ee
where we denoted $\oy_{ab}$ the fused wedge decoration shared by triangles $a$ and $b$.
\smallskip \\
The construction of the tetrahedron phase space is completed with the last momentum map, known as 1-Gauss constraint, that enforces the closure of the four faces. According to Fig. \ref{Fig:Tetra}, three of the four triangles share the same reference frame, while the fourth is at a different vertex (vertex $v_{4;1}$ instead of $v$). To impose the 1-Gauss constraint, we thus act from the left on the ribbon equation of the fourth triangle by the edge decoration that connect the vertices $v_{4;1}$ and $v$, $\lambda_{1;1}$. Once we have the expression of the four face decorations all at the same reference frame, we can correctly impose the last momentum map
\be
    \obe_{t_1} + \obe_{t_2} + \obe_{t_3} + \lambda_{1;1} \rhd \obe_{t_4} + (\lambda_{1;1} \, \oy_{t_4} \, \lambda_{1;1}\mone)|_{G_1^*} = 0 ,
    \label{1-Gauss}
\ee
where we used the expressions in \eqref{Tetrahedron_Faces} and
\be
    \oy_{t_4} = \oy_{4:3} + (\lambda_{4;3}\mone \, \oy_{4;2} \, \lambda_{4;3})|_{G_2^*} + \big( (\lambda_{4;2}\lambda_{4;3})\mone \, \oy_{4;1} \, (\lambda_{4;2}\lambda_{4;3}) \big)\big|_{G_2^*} .
\ee
The symplectic reduction for a tetrahedron phase space is thus
\be
    \cP_{\tau} = 
    \cB_{l,e}^{\times 12} /\!/ \big(((G_1^* \times G_1^*) \times G_2^*)^{\times 4} \times ((G_2^*)^{\times 6} \times G_1)\big) ,
    \label{PhaseSpace_Tetrahedron}
\ee
with momentum maps \eqref{MomentumMaps_Tetra_Links}, \eqref{MomentumMaps_Tetra_Edges} and \eqref{1-Gauss}. 
We can alternatively express the constraint \eqref{1-Gauss} as a combination of four face decorations plus the contribution of the six fused wedge decorations \eqref{Tetra_FusedWedges}. This expression will be particularly useful in the next subsection. As we did for the fused wedges, we denote $\lambda_{ab}$ the edge shared by triangles $a$ and $b$; in particular we have 
\be
    \lambda_{12} \equiv \lambda_{2;1}
    \,\,,\quad
    \lambda_{13} \equiv \lambda_{1;1}
    \,\,,\quad
    \lambda_{14} \equiv \lambda_{1;2}
    \,\,,\quad
    \lambda_{23} \equiv \lambda_{3;1}
    \,\,,\quad
    \lambda_{24} \equiv \lambda_{2;2}
    \,\,,\quad
    \lambda_{34} \equiv \lambda_{3;2} .
\ee
The constraint \eqref{1-Gauss} then becomes
\begin{align}
    b_1 + b_2 + b_3 + \lambda_{13} \rhd b_4 
    & + 
    (\lambda_{12} \, \oy_{12} \, \lambda_{12}\mone)|_{G_1^*} +
    (\lambda_{13} \, \oy_{13} \, \lambda_{13}\mone)|_{G_1^*} + 
    \lambda_{13} \rhd (\lambda_{14} \, \oy_{14} \, \lambda_{14}\mone)|_{G_1^*} 
    \nonumber \\
    & +
    (\lambda_{23} \, \oy_{23} \, \lambda_{23}\mone)|_{G_1^*} +
    \lambda_{12} \rhd (\lambda_{24} \, \oy_{24} \, \lambda_{24}\mone)|_{G_1^*} + 
    \lambda_{23} \rhd (\lambda_{34} \, \oy_{34} \, \lambda_{34}\mone)|_{G_1^*} = 0 ,
    \label{1-Gauss_FusedWedges}
\end{align}
with the face decorations
\be
    \begin{aligned}
        b_1 & =
        \beta_{1;1} + \lambda_{13} \rhd \beta_{1;2} + \obe_{1;3} + 
        \lambda_{12} \rhd (\lambda_{21}\mone \, \oy_{1;3} \, \lambda_{21})|_{G_1^*} +
        \big( \lambda_{13} \, (\lambda_{14} \, \oy_{1;2} \, \lambda_{14}\mone)|_{G_2^*} \, \lambda_{13}\mone \big)\big|_{G_1^*}
        , \\
        b_2 & = 
        - \beta_{2;1} + \lambda_{12} \rhd \beta_{2;2} + \obe_{2;3} +
        \lambda_{34}\mone \rhd (\lambda_{32}\mone \, \oy_{2;3} \, \lambda_{32})|_{G_1^*} +
        \big( \lambda_{12} \, (\lambda_{24} \, \oy_{2;2} \, \lambda_{24}\mone)|_{G_2^*} \, \lambda_{12}\mone \big)\big|_{G_1^*}
        , \\
        b_3 & =
        - \beta_{3;1} + \lambda_{32} \rhd \beta_{3;2} - \obe_{3;3}
        - \lambda_{13} \rhd (\lambda_{13}\mone \, \oy_{3;3} \, \lambda_{13})|_{G_1^*} +
        \big( \lambda_{23} \, (\lambda_{34} \, \oy_{3;2} \, \lambda_{34}\mone)|_{G_2^*} \, \lambda_{23}\mone \big)\big|_{G_1^*}
        , \\
        b_4 & =
        - \beta_{4;1} - \lambda_{34}\mone \rhd \beta_{4;2} - \obe_{4;3}
        - \lambda_{13}\mone \rhd \big(
        (\lambda_{13} \, \oy_{4;3} \, \lambda_{13}\mone)|_{G_2^*} + \lambda_{12}\mone \rhd (\lambda_{14}\mone \, \oy_{4;3} \, \lambda_{14})|_{G_1^*} 
        \\
        & \quad \, +
        (\lambda_{12} \, \oy_{4;2} \, \lambda_{12}\mone)|_{G_2^*} + \lambda_{23} \rhd (\lambda_{24}\mone \, \oy_{4;2} \, \lambda_{24})|_{G_1^*} +
        (\lambda_{23} \, \oy_{4;1} \, \lambda_{23}\mone)|_{G_2^*} + \lambda_{13} \rhd (\lambda_{34}\mone \, \oy_{4;1} \, \lambda_{34})|_{G_1^*}
        \big) .
    \end{aligned}
    \label{Tetra_FusedFaces_b}
\ee
Note that, we derived the expression \eqref{1-Gauss_FusedWedges}, such that each fused wedge decoration $\oy_{ab}$ is conjugated by the respective dual edge decoration $\lambda_{ab}$.

\subsubsection{4-simplex boundary.}
As  we managed to construct the tetrahedron phase space, the fundamental building block of a 3$d$ space, we can use it to build any triangulation phase space. Let us give here an example of a closed piece of triangulation: we build the phase space of the boundary of a 4-simplex which was the main example considered in \cite{Asante:2019lki}. 

The boundary of a 4-simplex is given in terms of five tetrahedra, so we consider five tetrahedron phase spaces and fuse them by identifying each of the faces of each tetrahedron with a face of one of the other four tetrahedra.
The starting point to derive the 4-simplex boundary phase space are then sixty copies of the Heisenberg double $\cB_{l_{A;a;i},e_{A;a;i}}$, for $A=1,2,3,4,5$, $a=1,2,3,4$, $i=1,2,3$. For simplicity and to not be redundant, it is simpler to start by five copies of tetrahedron phase space $\cP_{\tau_A}$, as given in \eqref{PhaseSpace_Tetrahedron}.

We represent the 4-simplex boundary construction in Fig. \ref{Fig:4-simplex}. We denote by $c_A$ the center of the tetrahedron $\tau_A$. The five tetrahedra share five vertices: $v$ (in black) is shared by tetrahedra 1,2,3,5, $v'$ (in white) is shared by tetrahedra 1,3,4,5, $v''$ (in gray) is shared by tetrahedra 1,2,4,5, $v'''$ (in yellow) is shared by tetrahedra 2,3,4,5 and $v''''$ (in green) is shared by tetrahedra 1,2,3,4.
Let $t_{A;a}$ be the triangle $a$ of tetrahedron $A$ and $\obe_{t_{A;a}}$ of  \eqref{Tetrahedron_Faces} (or $b_{A;a}$ of  \eqref{Tetra_FusedFaces_b}) is the respective decoration represented at the center of tetrahedron $A$. $\oy_{A;ab}$ is instead  the variable that decorates the fused wedge shared by triangles $a$ and $b$ of tetrahedron $A$, rooted at the center of it.
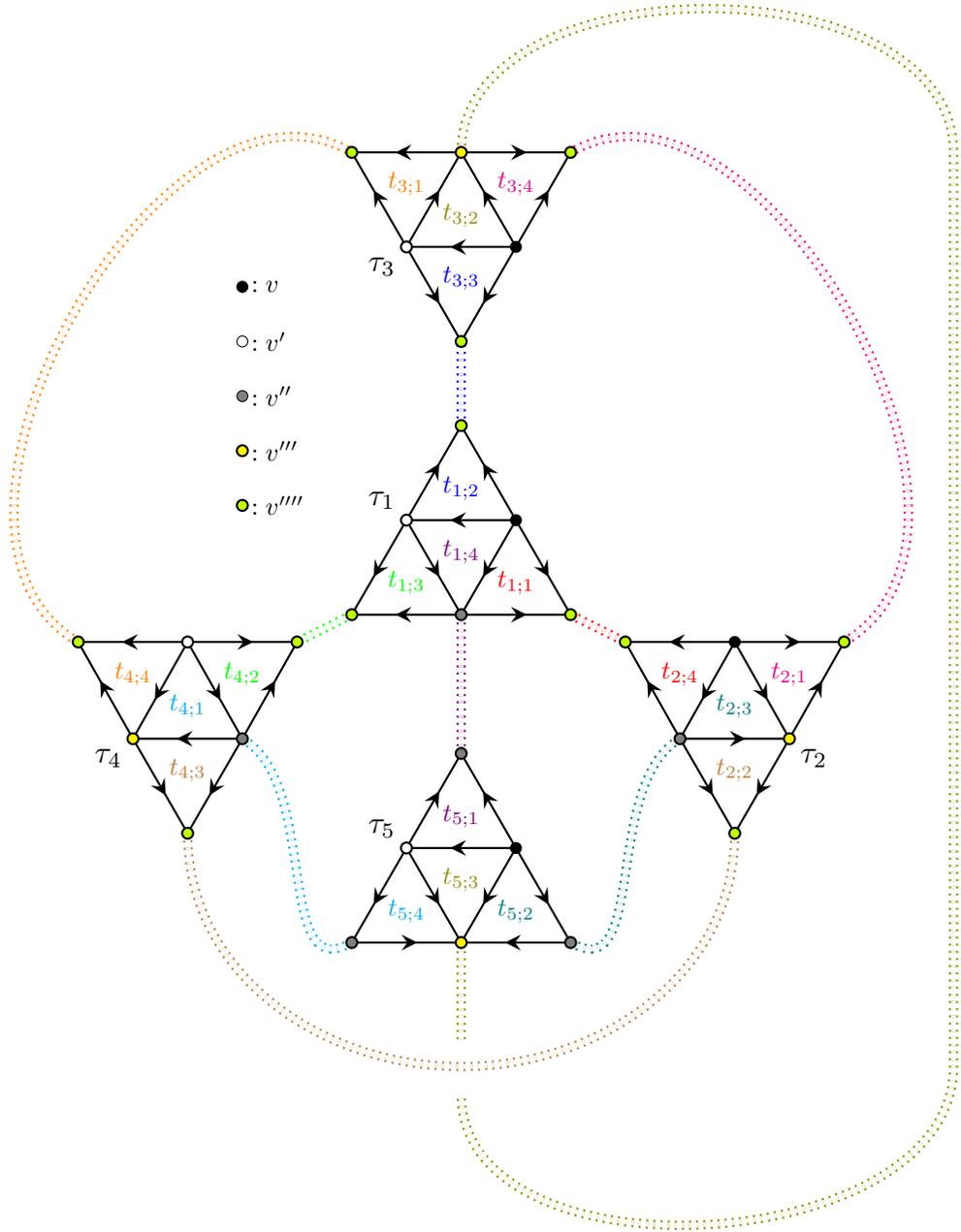
\begin{figure}
    \centering
        \begin{tikzpicture}[scale=0.75]
        \coordinate (v11) at (0,-5);
        \coordinate (v12) at (2,-5);
        \coordinate (v13) at (1,-6.73);
        \coordinate (v142) at (1,-3.27);
        \coordinate (v143) at (-1,-6.73);
        \coordinate (v144) at (3,-6.73);
        
        \coordinate (c11) at ($1/3*(v11) + 1/3*(v12) + 1/3*(v13)$);
        \coordinate (c12) at ($1/3*(v11) + 1/3*(v12) + 1/3*(v142)$);
        \coordinate (c13) at ($1/3*(v11) + 1/3*(v13) + 1/3*(v143)$);
        \coordinate (c14) at ($1/3*(v12) + 1/3*(v13) + 1/3*(v144)$);
        
        \node[above left , scale=1.2] at (v11) {$\tau_1$};
        
        \draw[- , thick] (v11) -- (v12) -- (v13) -- cycle;
        \draw[- , thick] (v12) -- (v144) -- (v13);
        \draw[- , thick] (v13) -- (v143) -- (v11);
        \draw[- , thick] (v11) -- (v142) -- (v12);
        
        \node[violet] at (c11) {$t_{1;4}$};
        \node[blue] at (c12) {$t_{1;2}$};
        \node[green] at (c13) {$t_{1;3}$};
        \node[red] at (c14) {$t_{1;1}$};
        
        \draw[decoration={markings,mark=at position 0.6 with {\arrow[scale=1.5,thick,>=stealth]{>}}},postaction={decorate}] (v12) -- (v11);
        \draw[decoration={markings,mark=at position 0.6 with {\arrow[scale=1.5,thick,>=stealth]{>}}},postaction={decorate}] (v12) -- (v13);
        \draw[decoration={markings,mark=at position 0.6 with {\arrow[scale=1.5,thick,>=stealth]{>}}},postaction={decorate}] (v12) -- (v142);
        \draw[decoration={markings,mark=at position 0.6 with {\arrow[scale=1.5,thick,>=stealth]{>}}},postaction={decorate}] (v12) -- (v144);
        \draw[decoration={markings,mark=at position 0.6 with {\arrow[scale=1.5,thick,>=stealth]{>}}},postaction={decorate}] (v13) -- (v144);
        \draw[decoration={markings,mark=at position 0.6 with {\arrow[scale=1.5,thick,>=stealth]{>}}},postaction={decorate}] (v11) -- (v142);
        \draw[decoration={markings,mark=at position 0.6 with {\arrow[scale=1.5,thick,>=stealth]{>}}},postaction={decorate}] (v13) -- (v143);
        \draw[decoration={markings,mark=at position 0.6 with {\arrow[scale=1.5,thick,>=stealth]{>}}},postaction={decorate}] (v11) -- (v143);
        \draw[decoration={markings,mark=at position 0.6 with {\arrow[scale=1.5,thick,>=stealth]{>}}},postaction={decorate}] (v11) -- (v13);

        \coordinate (v21) at (5,-9);
        \coordinate (v22) at (7,-9);
        \coordinate (v23) at (6,-7.23);
        \coordinate (v242) at (8,-7.23);
        \coordinate (v243) at (6,-10.73);
        \coordinate (v244) at (4,-7.23);
        
        \coordinate (c21) at ($1/3*(v21) + 1/3*(v22) + 1/3*(v23)$);
        \coordinate (c22) at ($1/3*(v22) + 1/3*(v23) + 1/3*(v242)$);
        \coordinate (c23) at ($1/3*(v21) + 1/3*(v22) + 1/3*(v243)$);
        \coordinate (c24) at ($1/3*(v21) + 1/3*(v23) + 1/3*(v244)$);
        
        \node[below right , scale=1.2] at (v22) {$\tau_2$};
        
        \draw[- , thick] (v21) -- (v22) -- (v23) -- cycle;
        \draw[- , thick] (v22) -- (v242) -- (v23);
        \draw[- , thick] (v22) -- (v243) -- (v21);
        \draw[- , thick] (v21) -- (v244) -- (v23);
        
        \node[teal] at (c21) {$t_{2;3}$};
        \node[magenta] at (c22) {$t_{2;1}$};
        \node[brown] at (c23) {$t_{2;2}$};
        \node[red] at (c24) {$t_{2;4}$};
        
        \draw[decoration={markings,mark=at position 0.6 with {\arrow[scale=1.5,thick,>=stealth]{>}}},postaction={decorate}] (v23) -- (v244);
        \draw[decoration={markings,mark=at position 0.6 with {\arrow[scale=1.5,thick,>=stealth]{>}}},postaction={decorate}] (v23) -- (v242);
        \draw[decoration={markings,mark=at position 0.6 with {\arrow[scale=1.5,thick,>=stealth]{>}}},postaction={decorate}] (v23) -- (v21);
        \draw[decoration={markings,mark=at position 0.6 with {\arrow[scale=1.5,thick,>=stealth]{>}}},postaction={decorate}] (v23) -- (v22);
        \draw[decoration={markings,mark=at position 0.6 with {\arrow[scale=1.5,thick,>=stealth]{>}}},postaction={decorate}] (v21) -- (v22);
        \draw[decoration={markings,mark=at position 0.6 with {\arrow[scale=1.5,thick,>=stealth]{>}}},postaction={decorate}] (v21) -- (v244);
        \draw[decoration={markings,mark=at position 0.6 with {\arrow[scale=1.5,thick,>=stealth]{>}}},postaction={decorate}] (v22) -- (v242);
        \draw[decoration={markings,mark=at position 0.6 with {\arrow[scale=1.5,thick,>=stealth]{>}}},postaction={decorate}] (v21) -- (v243);
        \draw[decoration={markings,mark=at position 0.6 with {\arrow[scale=1.5,thick,>=stealth]{>}}},postaction={decorate}] (v22) -- (v243);

        \coordinate (v31) at (0,0);
        \coordinate (v32) at (2,0);
        \coordinate (v33) at (1,1.73);
        \coordinate (v342) at (3,1.73);
        \coordinate (v343) at (1,-1.73);
        \coordinate (v344) at (-1,1.73);
        
        \coordinate (c31) at ($1/3*(v31) + 1/3*(v32) + 1/3*(v33)$);
        \coordinate (c32) at ($1/3*(v32) + 1/3*(v33) + 1/3*(v342)$);
        \coordinate (c33) at ($1/3*(v31) + 1/3*(v32) + 1/3*(v343)$);
        \coordinate (c34) at ($1/3*(v31) + 1/3*(v33) + 1/3*(v344)$);
        
        \node[below left , scale=1.2] at (v31) {$\tau_3$};
        
        \draw[- , thick] (v31) -- (v32) -- (v33) -- cycle;
        \draw[- , thick] (v32) -- (v342) -- (v33);
        \draw[- , thick] (v32) -- (v343) -- (v31);
        \draw[- , thick] (v31) -- (v344) -- (v33);
        
        \node[olive] at (c31) {$t_{3;2}$};
        \node[magenta] at (c32) {$t_{3;4}$};
        \node[blue] at (c33) {$t_{3;3}$};
        \node[orange] at (c34) {$t_{3;1}$};
        
        \draw[decoration={markings,mark=at position 0.6 with {\arrow[scale=1.5,thick,>=stealth]{>}}},postaction={decorate}] (v32) -- (v342);
        \draw[decoration={markings,mark=at position 0.6 with {\arrow[scale=1.5,thick,>=stealth]{>}}},postaction={decorate}] (v32) -- (v343);
        \draw[decoration={markings,mark=at position 0.6 with {\arrow[scale=1.5,thick,>=stealth]{>}}},postaction={decorate}] (v32) -- (v31);
        \draw[decoration={markings,mark=at position 0.6 with {\arrow[scale=1.5,thick,>=stealth]{>}}},postaction={decorate}] (v32) -- (v33);
        \draw[decoration={markings,mark=at position 0.6 with {\arrow[scale=1.5,thick,>=stealth]{>}}},postaction={decorate}] (v33) -- (v342);
        \draw[decoration={markings,mark=at position 0.6 with {\arrow[scale=1.5,thick,>=stealth]{>}}},postaction={decorate}] (v33) -- (v344);
        \draw[decoration={markings,mark=at position 0.6 with {\arrow[scale=1.5,thick,>=stealth]{>}}},postaction={decorate}] (v31) -- (v344);
        \draw[decoration={markings,mark=at position 0.6 with {\arrow[scale=1.5,thick,>=stealth]{>}}},postaction={decorate}] (v31) -- (v343);
        \draw[decoration={markings,mark=at position 0.6 with {\arrow[scale=1.5,thick,>=stealth]{>}}},postaction={decorate}] (v31) -- (v33);

        \coordinate (v41) at (-5,-9);
        \coordinate (v42) at (-3,-9);
        \coordinate (v43) at (-4,-7.23);
        \coordinate (v442) at (-2,-7.23);
        \coordinate (v443) at (-4,-10.73);
        \coordinate (v444) at (-6,-7.23);
        
        \coordinate (c41) at ($1/3*(v41) + 1/3*(v42) + 1/3*(v43)$);
        \coordinate (c42) at ($1/3*(v42) + 1/3*(v43) + 1/3*(v442)$);
        \coordinate (c43) at ($1/3*(v41) + 1/3*(v42) + 1/3*(v443)$);
        \coordinate (c44) at ($1/3*(v41) + 1/3*(v43) + 1/3*(v444)$);
        
        \node[below left , scale=1.2] at (v41) {$\tau_4$};
        
        \draw[- , thick] (v41) -- (v42) -- (v43) -- cycle;
        \draw[- , thick] (v42) -- (v442) -- (v43);
        \draw[- , thick] (v42) -- (v443) -- (v41);
        \draw[- , thick] (v41) -- (v444) -- (v43);
        
        \node[cyan] at (c41) {$t_{4;1}$};
        \node[green] at (c42) {$t_{4;2}$};
        \node[brown] at (c43) {$t_{4;3}$};
        \node[orange] at (c44) {$t_{4;4}$};
        
        \draw[decoration={markings,mark=at position 0.6 with {\arrow[scale=1.5,thick,>=stealth]{>}}},postaction={decorate}] (v43) -- (v442);
        \draw[decoration={markings,mark=at position 0.6 with {\arrow[scale=1.5,thick,>=stealth]{>}}},postaction={decorate}] (v42) -- (v442);
        \draw[decoration={markings,mark=at position 0.6 with {\arrow[scale=1.5,thick,>=stealth]{>}}},postaction={decorate}] (v43) -- (v42);
        \draw[decoration={markings,mark=at position 0.6 with {\arrow[scale=1.5,thick,>=stealth]{>}}},postaction={decorate}] (v43) -- (v41);
        \draw[decoration={markings,mark=at position 0.6 with {\arrow[scale=1.5,thick,>=stealth]{>}}},postaction={decorate}] (v42) -- (v41);
        \draw[decoration={markings,mark=at position 0.6 with {\arrow[scale=1.5,thick,>=stealth]{>}}},postaction={decorate}] (v43) -- (v444);
        \draw[decoration={markings,mark=at position 0.6 with {\arrow[scale=1.5,thick,>=stealth]{>}}},postaction={decorate}] (v41) -- (v444);
        \draw[decoration={markings,mark=at position 0.6 with {\arrow[scale=1.5,thick,>=stealth]{>}}},postaction={decorate}] (v41) -- (v443);
        \draw[decoration={markings,mark=at position 0.6 with {\arrow[scale=1.5,thick,>=stealth]{>}}},postaction={decorate}] (v42) -- (v443);

        \coordinate (v51) at (0,-11);
        \coordinate (v52) at (2,-11);
        \coordinate (v53) at (1,-12.73);
        \coordinate (v542) at (1,-9.27);
        \coordinate (v543) at (-1,-12.73);
        \coordinate (v544) at (3,-12.73);
        
        \coordinate (c51) at ($1/3*(v51) + 1/3*(v52) + 1/3*(v53)$);
        \coordinate (c52) at ($1/3*(v51) + 1/3*(v52) + 1/3*(v542)$);
        \coordinate (c53) at ($1/3*(v51) + 1/3*(v53) + 1/3*(v543)$);
        \coordinate (c54) at ($1/3*(v52) + 1/3*(v53) + 1/3*(v544)$);
        
        \node[above left , scale=1.2] at (v51) {$\tau_5$};
        
        \draw[- , thick] (v51) -- (v52) -- (v53) -- cycle;
        \draw[- , thick] (v52) -- (v544) -- (v53);
        \draw[- , thick] (v53) -- (v543) -- (v51);
        \draw[- , thick] (v51) -- (v542) -- (v52);
        
        \node[olive] at (c51) {$t_{5;3}$};
        \node[violet] at (c52) {$t_{5;1}$};
        \node[cyan] at (c53) {$t_{5;4}$};
        \node[teal] at (c54) {$t_{5;2}$};
        
        \draw[decoration={markings,mark=at position 0.6 with {\arrow[scale=1.5,thick,>=stealth]{>}}},postaction={decorate}] (v52) -- (v542);
        \draw[decoration={markings,mark=at position 0.6 with {\arrow[scale=1.5,thick,>=stealth]{>}}},postaction={decorate}] (v52) -- (v51);
        \draw[decoration={markings,mark=at position 0.6 with {\arrow[scale=1.5,thick,>=stealth]{>}}},postaction={decorate}] (v52) -- (v53);
        \draw[decoration={markings,mark=at position 0.6 with {\arrow[scale=1.5,thick,>=stealth]{>}}},postaction={decorate}] (v52) -- (v544);
        \draw[decoration={markings,mark=at position 0.6 with {\arrow[scale=1.5,thick,>=stealth]{>}}},postaction={decorate}] (v51) -- (v542);
        \draw[decoration={markings,mark=at position 0.6 with {\arrow[scale=1.5,thick,>=stealth]{>}}},postaction={decorate}] (v51) -- (v53);
        \draw[decoration={markings,mark=at position 0.6 with {\arrow[scale=1.5,thick,>=stealth]{>}}},postaction={decorate}] (v544) -- (v53);
        \draw[decoration={markings,mark=at position 0.6 with {\arrow[scale=1.5,thick,>=stealth]{>}}},postaction={decorate}] (v543) -- (v53);
        \draw[decoration={markings,mark=at position 0.6 with {\arrow[scale=1.5,thick,>=stealth]{>}}},postaction={decorate}] (v51) -- (v543);

        \draw[- , double distance=2pt , thick , dotted , blue] (v142) -- (v343);
        \draw[- , double distance=2pt , thick , dotted , green] (v143) -- (v442);
        \draw[- , double distance=2pt , thick , dotted , red] (v144) -- (v244);
        \draw[- , double distance=2pt , thick , dotted , magenta] (v242) to [out=30 , in=30] (v342);
        \draw[- , double distance=2pt , thick , dotted , orange] (v444) to [out=150 , in=150] (v344);
        \draw[- , double distance=2pt , thick , dotted , brown] (v443) to [out=-90 , in=180] (1,-15) to [out=0 , in=-90] (v243);
        \draw[- , double distance=2pt , thick , dotted , violet] (v542) -- (v13);
        \draw[- , double distance=2pt , thick , dotted , teal] (v544) to [out=-30 , in=-150] (v21);
        \draw[- , double distance=2pt , thick , dotted , cyan] (v543) to [out=-150 , in=-30] (v42);
        
        \draw[- , double distance=2pt , thick , dotted , olive] (v33) to [out=90 , in=90] ($ (v33) + (9,0) $) -- (10,-15) to [out=-90 , in=-90] (1,-15.5);
        \draw[- , double distance=2pt , thick , dotted , olive] (1,-14.5) -- (v53);

        \draw[fill] (v12) circle [radius=0.1];
        \draw[thick , fill=white] (v11) circle [radius=0.1];
        \draw[thick , fill=gray] (v13) circle [radius=0.1];
        \draw[thick , fill=lime] (v142) circle [radius=0.1];
        \draw[thick , fill=lime] (v143) circle [radius=0.1];
        \draw[thick , fill=lime] (v144) circle [radius=0.1];
        
        \draw[fill] (v23) circle [radius=0.1];
        \draw[thick , fill=gray] (v21) circle [radius=0.1];
        \draw[thick , fill=yellow] (v22) circle [radius=0.1];
        \draw[thick , fill=lime] (v242) circle [radius=0.1];
        \draw[thick , fill=lime] (v243) circle [radius=0.1];
        \draw[thick , fill=lime] (v244) circle [radius=0.1];
        
        \draw[fill] (v32) circle [radius=0.1];
        \draw[thick , fill=white] (v31) circle [radius=0.1];
        \draw[thick , fill=yellow] (v33) circle [radius=0.1];
        \draw[thick , fill=lime] (v342) circle [radius=0.1];
        \draw[thick , fill=lime] (v343) circle [radius=0.1];
        \draw[thick , fill=lime] (v344) circle [radius=0.1];
        
        \draw[thick , fill=white] (v43) circle [radius=0.1];
        \draw[thick , fill=yellow] (v41) circle [radius=0.1];
        \draw[thick , fill=gray] (v42) circle [radius=0.1];
        \draw[thick , fill=lime] (v442) circle [radius=0.1];
        \draw[thick , fill=lime] (v443) circle [radius=0.1];
        \draw[thick , fill=lime] (v444) circle [radius=0.1];
        
        \draw[fill] (v52) circle [radius=0.1];
        \draw[thick , fill=white] (v51) circle [radius=0.1];
        \draw[thick , fill=yellow] (v53) circle [radius=0.1];
        \draw[thick , fill=gray] (v542) circle [radius=0.1];
        \draw[thick , fill=gray] (v543) circle [radius=0.1];
        \draw[thick , fill=gray] (v544) circle [radius=0.1];

        \draw[fill] ($(v343) + (-4,1)$) circle [radius=0.1];
        \node[right] at ($(v343) + (-4,1)$) {$:v$};
        
        \draw[fill=white] ($(v343) + (-4,0)$) circle [radius=0.1];
        \node[right] at ($(v343) + (-4,0)$) {$:v'$};
        
        \draw[fill=gray] ($(v343) + (-4,-1)$) circle [radius=0.1];
        \node[right] at ($(v343) + (-4,-1)$) {$:v''$};
        
        \draw[thick , fill=yellow] ($(v343) + (-4,-2)$) circle [radius=0.1];
        \node[right] at ($(v343) + (-4,-2)$) {$:v'''$};
        
        \draw[thick , fill=lime] ($(v343) + (-4,-3)$) circle [radius=0.1];
        \node[right] at ($(v343) + (-4,-3)$) {$:v''''$};
    \end{tikzpicture}
    \caption{4-simplex boundary construction: five tetrahedra share five vertices. Each of the four faces of each tetrahedron is identified with one of the faces of the other four tetrahedra. We use the same color and a double dotted line for the identified faces.}
    \label{Fig:4-simplex}
\end{figure} 

In order to fuse the tetrahedra phase spaces, we use the \textit{Link Gluing} of Sec. \ref{Sec_GluingRules}; according to the Fig. \ref{Fig:4-simplex}, we denote $t_{AB}$ the face shared by tetrahedra $A$ and $B$:
\be
    \begin{aligned}
        &
        t_{12} \equiv t_{1;1} = - t_{2;4} 
        \,\,,\quad
        t_{13} \equiv t_{1;2} = - t_{3;3}
        \,\,,\quad
        t_{14} \equiv t_{1;3} = - t_{4;2}
        \,\,,\quad
        t_{15} \equiv t_{1;4} = - t_{5;1}
        \,\,,\quad
        t_{23} \equiv t_{2;1} = - t_{3;4}
        \,\,,\\
        &
        t_{24} \equiv t_{2;2} = - t_{4;3}
        \,\,,\quad
        t_{25} \equiv t_{2;3} = - t_{5;2}
        \,\,,\quad
        t_{34} \equiv t_{3;1} = - t_{4;4}
        \,\,,\quad
        t_{35} \equiv t_{3;2} = - t_{5;3}
        \,\,,\quad
        t_{45} \equiv t_{4;1} = - t_{5;4} .
    \end{aligned}
\ee
We enforce these geometric conditions on the variables $\obe_{t_{A;a}}$ through the ten momentum maps
\be
    \begin{aligned}
        &
        \obe_{12} \equiv \obe_{t_{1;1}} = - (u_{12} \, \obe_{t_{2;4}} \, u_{12}\mone)|_{G_1^*}
        \,\,,\quad
        \obe_{31} \equiv \obe_{t_{3;3}} = - (u_{13}\mone \, \obe_{t_{1;2}} \, u_{13})|_{G_1^*}
        \,\,,\quad
        \obe_{14} \equiv \obe_{t_{1;3}} = - (u_{14} \, \obe_{t_{4;2}} \, u_{14}\mone)|_{G_1^*}
        \,\,,\\
        &
        \obe_{51} \equiv \obe_{t_{5;1}} = - (u_{15}\mone \, \obe_{t_{1;4}} \, u_{15})|_{G_1^*}
        \,\,,\quad
        \obe_{23} \equiv \obe_{t_{2;1}} = - (u_{23} \, \obe_{t_{3;4}} \, u_{23}\mone)|_{G_1^*}
        \,\,,\quad
        \obe_{42} \equiv \obe_{t_{4;3}} = - (u_{24}\mone \, \obe_{t_{2;2}} \, u_{24})|_{G_1^*}
        \,\,,\\
        &
        \obe_{25} \equiv \obe_{t_{2;3}} = - (u_{25} \, \obe_{t_{5;2}} \, u_{25}\mone)|_{G_1^*}
        \,\,,\quad
        \obe_{34} \equiv \obe_{t_{3;1}} = - (u_{34} \, \obe_{t_{4;4}} \, u_{34}\mone)|_{G_1^*}
        \,\,,\quad
        \obe_{53} \equiv \obe_{t_{5;3}} = - (u_{35}\mone \, \obe_{t_{3;2}} \, u_{35})|_{G_1^*}
        \,\,,\\
        &
        \obe_{45} \equiv \obe_{t_{4;1}} = - (u_{45} \, \obe_{t_{5;4}} \, u_{45}\mone)|_{G_1^*} ,
    \end{aligned}
    \label{MomentumMaps_Sphere_Faces}
\ee
where we denoted $\obe_{AB}$ the decoration of the fused face shared by tetrahedra $A$ and $B$, rooted at the center of tetrahedron $A$. Using a similar notation, for later purpose, it is useful to introduce the variable $b_{AB}$:
\be
    \begin{aligned}
        &
        b_{12} \equiv b_{1;1} = - (u_{12} \, b_{2;4} \, u_{12}\mone)|_{G_1^*}
        \,\,,\quad
        b_{31} \equiv b_{3;3} = - (u_{13}\mone \, b_{1;2} \, u_{13})|_{G_1^*}
        \,\,,\quad
        b_{14} \equiv b_{1;3} = - (u_{14} \, b_{4;2} \, u_{14}\mone)|_{G_1^*}
        \,\,,\\
        &
        b_{51} \equiv b_{5;1} = - (u_{15}\mone \, b_{1;4} \, u_{15})|_{G_1^*}
        \,\,,\quad
        b_{23} \equiv b_{2;1} = - (u_{23} \, b_{3;4} \, u_{23}\mone)|_{G_1^*}
        \,\,,\quad
        b_{42} \equiv b_{4;3} = - (u_{24}\mone \, b_{2;2} \, u_{24})|_{G_1^*}
        \,\,,\\
        &
        b_{25} \equiv b_{2;3} = - (u_{25} \, b_{5;2} \, u_{25}\mone)|_{G_1^*}
        \,\,,\quad
        b_{34} \equiv b_{3;1} = - (u_{34} \, b_{4;4} \, u_{34}\mone)|_{G_1^*}
        \,\,,\quad
        b_{53} \equiv b_{5;3} = - (u_{35}\mone \, b_{3;2} \, u_{35})|_{G_1^*}
        \,\,,\\
        &
        b_{45} \equiv b_{4;1} = - (u_{45} \, b_{5;4} \, u_{45}\mone)|_{G_1^*} .
    \end{aligned}
\ee
According to the \textit{Link Gluing} in Sec. \ref{Sec_GluingRules}, for each o these momentum maps we have a fused link decoration
\be
    \begin{aligned}
        &
        u_{12} = u_{1;1} u_{2;4}\mone
        \,\,,\quad
        u_{13} = u_{1;2} u_{3;3}\mone
        \,\,,\quad
        u_{14} = u_{1;3} u_{4;2}\mone
        \,\,,\quad
        u_{15} = u_{1;4} u_{5;1}\mone
        \,\,,\quad
        u_{23} = u_{2;1} u_{3;4}\mone
        \,\,,\\
        &
        u_{24} = u_{2;2} u_{4;3}\mone
        \,\,,\quad
        u_{25} = u_{2;3} u_{5;2}\mone
        \,\,,\quad
        u_{34} = u_{3;1} u_{4;4}\mone
        \,\,,\quad
        u_{35} = u_{3;2} u_{5;3}\mone
        \,\,,\quad
        u_{45} = u_{4;1} u_{5;4}\mone
        \,\,,
    \end{aligned}
    \label{FusedLinks_Sphere}
\ee
where $u_{AB}$ decorates the fused link with source at the center of tetrahedron $A$ and target at the center of tetrahedron $B$. Last, to complete the gluing of the five tetrahedra, we identify the boundary of the faces, too. 
According to Fig. \ref{Fig:4-simplex}, denote $\lambda_{A;ab}$ the edge shared by faces $a$ and $b$ of tetrahedron $A$, represented at the center of it. Let $e_{ABC}$ be the edge shared by tetrahedra $A$,$B$,$C$ and $\lambda_{ABC}^D$ be the respective edge decoration represented at the node $D$ defined through the following momentum maps
\be
    \begin{aligned}
        &
        \lambda_{123}^1 \equiv 
        \lambda_{1;12} = u_{12} \rhd \lambda_{2;14} = u_{13} \rhd \lambda_{3;34} 
        \,\,,\quad
        \lambda_{124}^1 \equiv 
        \lambda_{1;13} = u_{12} \rhd \lambda_{2;24} = u_{14} \rhd \lambda_{4;23}
        \,\,,\\
        &
        \lambda_{125}^1 \equiv 
        \lambda_{1;14} = u_{12} \rhd \lambda_{2;34} = u_{15} \rhd \lambda_{5;12} 
        \,\,,\quad
        \lambda_{134}^1 \equiv 
        \lambda_{1;23} = u_{13} \rhd \lambda_{3;13} = u_{14} \rhd \lambda_{4;24}
        \,\,,\\
        &
        \lambda_{135}^1 \equiv 
        \lambda_{1;24} = u_{13} \rhd \lambda_{3;23} = u_{15} \rhd \lambda_{5;13} 
        \,\,,\quad
        \lambda_{145}^1 \equiv 
        \lambda_{1;34} = u_{14} \rhd \lambda_{4;12} = u_{15} \rhd \lambda_{5;14}
        \,\,,\\
        &
        \lambda_{234}^2 \equiv 
        \lambda_{2;12} = u_{23} \rhd \lambda_{3;14} = u_{24} \rhd \lambda_{4;34} 
        \,\,,\quad
        \lambda_{235}^2 \equiv 
        \lambda_{2;13} = u_{23} \rhd \lambda_{3;24} = u_{25} \rhd \lambda_{5;23}
        \,\,,\\
        &
        \lambda_{245}^2 \equiv 
        \lambda_{2;23} = u_{24} \rhd \lambda_{4;13} = u_{25} \rhd \lambda_{5;24} 
        \,\,,\quad
        \lambda_{345}^3 \equiv 
        \lambda_{3;12} = u_{34} \rhd \lambda_{4;14} = u_{35} \rhd \lambda_{5;34}
        \,\,.
    \end{aligned}
    \label{MomentumMaps_Sphere_Edges}
\ee
From the \textit{Wedge Gluing Part 1} of Sec. \ref{Sec_GluingRules}, we have a fused wedge decoration for each momentum map
\be
    \begin{aligned}
        \oy_{123}^1 & =
        \oy_{1;12} + u_{12} \rhd \oy_{2;14} + u_{13} \rhd \oy_{3;34} +
        (u_{12} \, \obe_{23} \, u_{12}\mone)|_{G_2^*} +
        (u_{13} \, \obe_{31} \, u_{13}\mone)|_{G_2^*}
        \,,\\
        \oy_{124}^1 & =
        \oy_{1;13} + u_{12} \rhd \oy_{2;24} + u_{14} \rhd \oy_{4;23} +
        (u_{14} \, \obe_{42} \, u_{14}\mone)|_{G_2^*}
        \,,\\
        \oy_{125}^1 & =
        \oy_{1;14} + u_{12} \rhd \oy_{2;34} + u_{15} \rhd \oy_{5;12} +
        (u_{12} \, \obe_{25} \, u_{12}\mone)|_{G_2^*} +
        (u_{15} \, \obe_{51} \, u_{15}\mone)|_{G_2^*} 
        \,,\\
        \oy_{134}^1 & =
        \oy_{1;23} + u_{13} \rhd \oy_{3;14} + u_{14} \rhd \oy_{4;24} +
        (u_{13} \, \obe_{34} \, u_{13}\mone)|_{G_2^*} +
        (u_{13} \, \obe_{31} \, u_{13}\mone)|_{G_2^*}
        \,,\\
        \oy_{135}^3 & =
        \oy_{1;24} + u_{13} \rhd \oy_{3;23} + u_{15} \rhd \oy_{5;13} +
        (u_{35} \, \obe_{51} \, u_{35}\mone)|_{G_2^*} +
        (u_{35} \, \obe_{53} \, u_{35}\mone)|_{G_2^*}
        \,,\\
        \oy_{145}^1 & =
        \oy_{1;34} + u_{14} \rhd \oy_{4;12} + u_{15} \rhd \oy_{5;14} +
        (u_{14} \, \obe_{45} \, u_{14}\mone)|_{G_2^*} +
        (u_{15} \, \obe_{51} \, u_{15}\mone)|_{G_2^*}
        \,,\\
        \oy_{235}^2 & =
        \oy_{2;13} + u_{23} \rhd \oy_{3;24} + u_{25} \rhd \oy_{5;23} +
        (u_{25} \, \obe_{53} \, u_{25}\mone)|_{G_2^*}
        \,,\\
        \oy_{245}^2 & =
        \oy_{2;23} + u_{24} \rhd \oy_{4;13} + u_{25} \rhd \oy_{5;24} +
        (u_{24} \, \obe_{45} \, u_{24}\mone)|_{G_2^*} +
        (u_{24} \, \obe_{42} \, u_{24}\mone)|_{G_2^*}
        \,,\\
        \oy_{234}^2 & =
        \oy_{2;12} + u_{23} \rhd \oy_{3;14} + u_{24} \rhd \oy_{4;34} +
        (u_{23} \, \obe_{34} \, u_{23}\mone)|_{G_2^*} +
        (u_{24} \, \obe_{42} \, u_{24}\mone)|_{G_2^*}
        \,,\\
        \oy_{345}^3 & =
        \oy_{3;12} + u_{34} \rhd \oy_{4;14} + u_{35} \rhd \oy_{5;34} +
        (u_{34} \, \obe_{45} \, u_{34}\mone)|_{G_2^*} +
        (u_{35} \, \obe_{53} \, u_{35}\mone)|_{G_2^*} ,
    \end{aligned}
    \label{ClosedFusedWedges}
\ee
where the variable $\oy_{ABC}^D$ decorates the fused wedge dual to the edge $e_{ABC}$ (closed surface in the dual complex shared by tetrahedra $A$,$B$, $C$), it is rooted at the center of tetrahedron $D$. 

\medskip

The symplectic reduction for a 4-simplex boundary  phase space is thus
\begin{align}
    \cP_{\tau^4} 
    & =
    \cB_{l,e}^{\times 60} /\!/ \big(\big(((G_1^* \times G_1^*) \times G_2^*)^{\times 4} \times ((G_2^*)^{\times 6} \times G_1)\big)^{\times 5} \times ((G_1)^{\times 10} \times (G_2^*)^{\times 10}) \big)
    \nonumber \\
    & =
    \cP_{\tau}^{\times 5} /\!/ \big((G_1)^{\times 10} \times (G_2^*)^{\times 10} \big) ,
\end{align}
with momentum maps \eqref{MomentumMaps_Sphere_Faces} and \eqref{MomentumMaps_Sphere_Edges}.
Using \eqref{1-Gauss_FusedWedges}, the 1-Gauss constraints of the five tetrahedra, each of them represented at the center of the respective tetrahedron, are
\begin{align}
    \label{1-Gauss_Tetra1}
    \tau_1 : \quad &
    b_{12} - (u_{13} \, b_{31} \, u_{13}\mone)|_{G_1^*} + \lambda_{135}^1 \rhd b_{14} - (u_{15} \, b_{51} \, u_{15}\mone)|_{G_1^*}
    \nonumber \\
    & \quad +
    (\lambda_{123}^1 \, \oy_{1;12} \, (\lambda_{123}^1)\mone)|_{G_1^*} +
    \lambda_{125}^1 \rhd (\lambda_{124}^1 \, \oy_{1;13} \, (\lambda_{124}^1)\mone)|_{G_1^*} +
    (\lambda_{125}^1 \, \oy_{1;14} \, (\lambda_{125}^1)\mone)|_{G_1^*} 
    \nonumber \\
    & \quad +
    \lambda_{135}^1 \rhd (\lambda_{134}^1 \, \oy_{1;23} \, (\lambda_{234}^1)\mone)|_{G_1^*} +
    (\lambda_{135}^1 \, \oy_{1;24} \, (\lambda_{135}^1)\mone)|_{G_1^*} +
    \lambda_{135}^1 \rhd (\lambda_{145}^1 \, \oy_{1;34} \, (\lambda_{145}^1)\mone)|_{G_1^*} = 0
    , \\
    \tau_2 : \quad &
    b_{23} - \lambda_{125}^2 \rhd (u_{24} \, b_{42} \, u_{24}\mone)|_{G_1^*} + b_{25} - (u_{12}\mone \, b_{12} \, u_{12})|_{G_1^*} 
    \nonumber \\
    & \quad +
    \lambda_{235}^2 \rhd (\lambda_{234}^2 \, \oy_{2;12} \, (\lambda_{234}^2)\mone)|_{G_1^*} +
    (\lambda_{235}^2 \, \oy_{2;13} \, (\lambda_{235}^2)\mone)|_{G_1^*} +
    (\lambda_{123}^2 \, \oy_{2;14} \, (\lambda_{123}^2)\mone)|_{G_1^*} 
    \nonumber \\
    & \quad +
    \lambda_{125}^2 \rhd (\lambda_{245}^2 \, \oy_{2;23} \, (\lambda_{245}^2)\mone)|_{G_1^*} +
    \lambda_{125}^2 \rhd (\lambda_{124}^2 \, \oy_{2;24} \, (\lambda_{124}^2)\mone)|_{G_1^*} +
    (\lambda_{125}^2 \, \oy_{2;34} \, (\lambda_{125}^2)\mone)|_{G_1^*} = 0
    , \\
    \tau_3 : \quad &
    \lambda_{135}^3 \rhd b_{34} - (u_{35} \, b_{53} \, u_{35}\mone)|_{G_1^*} + b_{31} - (u_{23}\mone \, b_{23} \, u_{23})|_{G_1^*} 
    \nonumber \\
    & \quad +
    \lambda_{135}^3 \rhd (\lambda_{345}^3 \, \oy_{3;12} \, (\lambda_{345}^3)\mone)|_{G_1^*} +
    \lambda_{135}^3 \rhd (\lambda_{134}^3 \, \oy_{3;13} \, (\lambda_{134}^3)\mone)|_{G_1^*} +
    \lambda_{235}^3 \rhd (\lambda_{234}^3 \, \oy_{3;14} \, (\lambda_{234}^3)\mone)|_{G_1^*} 
    \nonumber \\
    & \quad +
    (\lambda_{135}^3 \, \oy_{3;23} \, (\lambda_{135}^3)\mone)|_{G_1^*} +
    (\lambda_{235}^3 \, \oy_{3;24} \, (\lambda_{235}^3)\mone)|_{G_1^*} +
    (\lambda_{123}^3 \, \oy_{3;34} \, (\lambda_{123}^3)\mone)|_{G_1^*} = 0
    , \\
    \tau_4 : \quad &
    \lambda_{135}^4 \rhd \big(b_{45} - (u_{14}\mone \, b_{14} \, u_{14})|_{G_1^*} + \lambda_{145}^4 \rhd b_{42} - (u_{34}\mone \, b_{34} \, u_{34})|_{G_1^*}\big) 
    \nonumber \\
    & \quad +
    \lambda_{135}^4 \rhd (\lambda_{145}^4 \, \oy_{4;12} \, (\lambda_{145}^4)\mone)|_{G_1^*} +
    \lambda_{125}^4 \rhd (\lambda_{245}^4 \, \oy_{4;13} \, (\lambda_{245}^4)\mone)|_{G_1^*} +
    \lambda_{135}^4 \rhd (\lambda_{345}^4 \, \oy_{4;14} \, (\lambda_{345}^4)\mone)|_{G_1^*} 
    \nonumber \\
    & \quad +
    \lambda_{125}^4 \rhd (\lambda_{124}^4 \, \oy_{4;23} \, (\lambda_{124}^4)\mone)|_{G_1^*} +
    \lambda_{135}^4 \rhd (\lambda_{134}^4 \, \oy_{4;24} \, (\lambda_{134}^4)\mone)|_{G_1^*} +
    \lambda_{235}^4 \rhd (\lambda_{234}^4 \, \oy_{4;34} \, (\lambda_{234}^4)\mone)|_{G_1^*} = 0
    , \\
    \tau_5 : \quad &
    b_{51} - (u_{25}\mone \, b_{25} \, u_{25})|_{G_1^*} + b_{53} - \lambda_{135}^5 \rhd (u_{45}\mone \, b_{45} \, u_{45})|_{G_1^*}
    \nonumber \\
    & \quad +
    (\lambda_{125}^5 \, \oy_{5;12} \, (\lambda_{125}^5)\mone)|_{G_1^*} +
    (\lambda_{135}^5 \, \oy_{5;13} \, (\lambda_{135}^5)\mone)|_{G_1^*} +
    \lambda_{135}^5 \rhd (\lambda_{145}^5 \, \oy_{5;14} \, (\lambda_{145}^5)\mone)|_{G_1^*} 
    \nonumber \\
    & \quad +
    (\lambda_{235}^5 \, \oy_{5;23} \, (\lambda_{235}^5)\mone)|_{G_1^*} +
    \lambda_{125}^5 \rhd (\lambda_{245}^5 \, \oy_{5;24} \, (\lambda_{245}^5)\mone)|_{G_1^*} +
    \lambda_{135}^5 \rhd (\lambda_{345}^5 \, \oy_{5;34} \, (\lambda_{345}^5)\mone)|_{G_1^*} = 0 .
    \label{1-Gauss_Tetra5}
\end{align}

\subsubsection{Recovering the  un-deformed case.} In this section, we consider the explicit example of the Poincar\'e 2-group with a trivial Poisson structure. This is to recover the phase space that was introduced in \cite{Asante:2019lki}  and recovered again in \cite{peter}. The most difficult constraints to recover are the 1-Gauss constraints since they are non-trivial.

Let $\cG = \ISO(4)\cong \SO(4)\ltimes \R^4\cong \R^4\rtimes\SO(4)$ and $\cG^*=\ISO(4)^*\cong \SO^*(4)\times \R^4\cong \R^4\times\SO^*(4)$. Let $\cT$ be a triangulation with edges and faces decorated by elements of $\bbR^4$ and $\SO^*(4)$ respectively. The dual 2-complex $\cT^*$ has links and wedges decorated by holonomies of $\SO(4)$ and $\bbR^{4*} \cong \bbR^4$ respectively. 
Specifically, we have 
\begin{align}
    G_1 = \SO(4) \ni u 
    \,,\quad 
    G^*_2 = \bbR^4 \ni y, \oy
    \,,\quad 
    G_2 = \bbR^4 \ni \lambda, \tla
    \,,\quad 
    G_1^* = \SO^*(4)\cong \bbR^6 \ni \beta, \tbe.
\end{align}
Note that there is no action of $\bbR^4$ on $\SO(4)$ nor on $\SO^*(4)$ and the conjugations \eqref{conjug} take a simpler shape
\be
    \begin{aligned}
        u \beta u\mone
        = \beta' y'
        & \,\,\, \Leftrightarrow \,\,
        \begin{cases}
            & \beta' =  (u \beta u\mone)|_{SO^*(4)}  \\
            & y' = (u \beta u\mone)|_{\bbR^4}=1 
        \end{cases}
        \\
        \lambda y \lambda\mone 
        = y'' \beta'' 
        & \,\,\, \Leftrightarrow \,\,
        \begin{cases}
            & \beta' = (\lambda y \lambda\mone)|_{SO^*(4)} \equiv [\lambda , y]  \\
            & y' = (\lambda y \lambda\mone)|_{\bbR^4} = y ,
        \end{cases}
    \end{aligned}
\ee
where for the second conjugation we used a convenient representation\footnote{If $X_\mu$ is a Lie algebra generator of $\bbR^4$, we use a representation such that $X_\mu X_\nu=0$. This implements that $e^X= 1+ X$. If $Y_\nu$ are the generators of another $\bbR^4$, then $e^Y e^X e^{-Y}= 1+X+ [Y,X]$. } of $\bbR^4$. 
\medskip 
First, to construct the triangle phase space we impose the momentum maps $u_1=\tu_2$ and $u_2=\tu_3$, plus the closure constraint that here takes the form $\lambda_1 + \lambda_2 + \lambda_3 = 0$. The fused face decoration \eqref{Triangle_TotalFace} is thus
\be
    \beta_t = \beta_1 + \beta_2 + \beta_3 + \beta_4 
    - [\lambda_2 \,,\, \oy_1] - [\lambda_3 \,,\, \oy_1] - [\lambda_3 \,,\, \oy_2] .
\ee
Similarly, to construct the tetrahedron phase space we first consider four triangle phase spaces and fuse them imposing the six momentum maps \eqref{MomentumMaps_Tetra_Edges}, 
\be
    \lambda_{1;1} = -\lambda_{3;3}
    \,\,,\quad
    \lambda_{1;2} = -\lambda_{4;3}
    \,\,,\quad
    \lambda_{1;3} = -\lambda_{2;1}
    \,\,,\quad
    \lambda_{2;2} = -\lambda_{4;2}
    \,\,,\quad
    \lambda_{2;3} = -\lambda_{3;1}
    \,\,,\quad
    \lambda_{3;2} = -\lambda_{4;1} .
\ee
From these momentum maps we then derive the six fused wedge decorations \eqref{Tetra_FusedWedges} represented at the center of the tetrahedron:
\be
    \begin{aligned}
        &
        \oy_{12} = - \oy_{2;1} + \oy_{1;3} 
        \,\,,\quad
        \oy_{13} = \oy_{1;1} - \oy_{3;3} 
        \,\,,\quad
        \oy_{14} = \oy_{1;2} - \oy_{4;3} 
        \,\,,\\
        &
        \oy_{23} = - \oy_{3;1} + \oy_{2;3} 
        \,\,,\quad
        \oy_{24} = \oy_{2;2} - \oy_{4;2} 
        \,\,,\quad
        \oy_{34} = \oy_{3;2} - \oy_{4;1} .
    \end{aligned}
\ee
The last momentum map that we impose for the construction of the tetrahedron is the 1-Gauss constraint \eqref{1-Gauss_FusedWedges}, that now reduces to
\be
    b_1 + b_2 + b_3 + b_4 + 
    [\lambda_{12} \,,\, \oy_{12}] +
    [\lambda_{13} \,,\, \oy_{13}] +
    [\lambda_{14} \,,\, \oy_{14}] +
    [\lambda_{23} \,,\, \oy_{23}] +
    [\lambda_{24} \,,\, \oy_{24}] +
    [\lambda_{34} \,,\, \oy_{34}] = 0 ,
\ee
where 
\be
    \begin{aligned}
        b_1 & =
        \beta_{1;1} + \beta_{1;2} + \beta_{1;3} -
        [\lambda_{21} \,,\, \oy_{1;3}] + [\lambda_{13} \,,\, \oy_{1;2}]
        , \\
        b_2 & = 
        - \beta_{2;1} + \beta_{2;2} + \beta_{2;3} -
        [\lambda_{32} \,,\, \oy_{2;3}] +
        [\lambda_{12} \,,\, \oy_{2;2}]
        , \\
        b_3 & =
        - \beta_{3;1} + \beta_{3;2} - \beta_{3;3} -
        [\lambda_{13} \,,\, \oy_{3;3}] +
        [\lambda_{23} \,,\, \oy_{3;2}]
        , \\
        b_4 & =
        - \beta_{4;1} - \beta_{4;2} - \beta_{4;3} -
        [(\lambda_{13} - \lambda_{14}) \,,\, \oy_{4;3}] -
        [(\lambda_{12} - \lambda_{24}) \,,\, \oy_{4;2}] -
        [(\lambda_{23} - \lambda_{34}) \,,\, \oy_{4;1}] .
    \end{aligned}
\ee
Once we derived the tetrahedron phase space, we can use it to build the 4-simplex boundary phase space.

In order to do it, we first consider five tetrahedra phase spaces and fuse them through the momentum maps \eqref{MomentumMaps_Sphere_Faces} and \eqref{MomentumMaps_Sphere_Edges}, so that we derive the fused link decorations \eqref{FusedLinks_Sphere} and the fused wedge decorations \eqref{ClosedFusedWedges}. Then we consider the following change of variables for the face decorations
\be
    \begin{aligned}
        b_{12}' = b_{12}
        & -
        u_{12} \, [\lambda_{123}^2 \,,\, \oy_{2;14}] \, u_{12}\mone -
        u_{12} \, [\lambda_{124}^2 \,,\, \oy_{2;24}] \, u_{12}\mone -
        u_{12} \, [\lambda_{125}^2 \,,\, \oy_{2;34}] \, u_{12}\mone
        \, , \\
        b_{31}' = b_{31} 
        & +
        [\lambda_{134}^3 \,,\, \oy_{3;13}] +
        [\lambda_{123}^3 \,,\, \oy_{3;34}] -
        u_{13}\mone \, [\lambda_{135}^1 \,,\, \oy_{1;24}] \, u_{13}
        \, , \\
        b_{14}' = b_{14}
        & -
        u_{14} \, [\lambda_{145}^4 \,,\, \oy_{4;12}] \, u_{14}\mone -
        u_{14} \, [\lambda_{124}^4 \,,\, \oy_{4;23}] \, u_{14}\mone -
        u_{14} \, [\lambda_{134}^4 \,,\, \oy_{4;24}] \, u_{14}\mone
        \, , \\
        b_{51}' = b_{51}
        & +
        [\lambda_{125}^5 \,,\, \oy_{5;12}] +
        [\lambda_{145}^5 \,,\, \oy_{5;14}]
        \, , \\
        b_{23}' = b_{23}
        & -
        u_{23} \, [\lambda_{234}^3 \,,\, \oy_{3;14}] \, u_{23}\mone -
        u_{23} \, [\lambda_{235}^3 \,,\, \oy_{3;24}] \, u_{23}\mone
        \, , \\
        b_{42}' = b_{42}
        & +
        [\lambda_{245}^4 \,,\, \oy_{4;13}] +
        [\lambda_{234}^4 \,,\, \oy_{4;34}]
        \, , \\
        b_{25}' = b_{25}
        & -
        u_{25} \,
        [\lambda_{235}^5 \,,\, \oy_{5;23}] \, u_{25}\mone -
        u_{25} \, [\lambda_{245}^5 \,,\, \oy_{5;24} \, u_{25}\mone
        \, , \\
        b_{34}' = b_{34}
        & -
        u_{34} \, [\lambda_{345}^4 \,,\, \oy_{4;14}] \, u_{34}\mone
        \, , \\
        b_{53}' = b_{53}
        & +
        [\lambda_{135}^5 \,,\, \oy_{5;13}] +
        [\lambda_{345}^5 \,,\, \oy_{5;34}] .
    \end{aligned}
\ee
We use these   new variables in the five 1-Gauss constraints \eqref{1-Gauss_Tetra1}-\eqref{1-Gauss_Tetra5}, in order to express the closure constraints in terms of the closed faces (wedges) variables \eqref{ClosedFusedWedges}.
\begin{align}
    \tau_1 : \quad
    b_{12}' - (u_{13} \, b_{31}' \, u_{13}\mone) + b_{14}' - (u_{15} \, b_{51}' \, u_{15}\mone) 
    & +
    [\lambda_{123}^1 \,,\, \oy_{123}^1] +
    [\lambda_{124}^1 \,,\, \oy_{124}^1] +
    [\lambda_{125}^1 \,,\, \oy_{125}^1] 
    \nonumber \\
    & +
    [\lambda_{134}^1 \,,\, \oy_{134}^1] +
    [\lambda_{145}^1 \,,\, \oy_{145}^1]
    = 0 
    , \\
    \tau_2 : \quad
    b_{23}' - (u_{24} \, b_{42}' \, u_{24}\mone) + b_{25}' - (u_{12}\mone \, b_{12}' \, u_{12}) 
    & +
    [\lambda_{235}^2 \,,\, \oy_{235}^2] +
    [\lambda_{245}^2 \,,\, \oy_{245}^2] +
    [\lambda_{234}^2 \,,\, \oy_{234}^2]
    = 0
    , \\
    \tau_3 : \quad
     b_{34}' - (u_{35} \, b_{53}' \, u_{35}\mone) + b_{31}' - (u_{23}\mone \, b_{23}' \, u_{23}) 
     & +
    [\lambda_{135}^3 \,,\, \oy_{135}^3] +
    [\lambda_{345}^3 \,,\, \oy_{345}^3]
    = 0
    , \\
    \tau_4 : \quad
    b_{45} - (u_{14}\mone \, b_{14}' \, u_{14}) + b_{42}' - (u_{34}\mone \, b_{34}' \, u_{34})
    & = 0
    , \\
    \tau_5 : \quad
    b_{51}' - (u_{25}\mone \, b_{25}' \, u_{25}) + b_{53}' - (u_{45}\mone \, b_{45} \, u_{45})
    & = 0 .
\end{align}
Comparing this result with that of \cite{peter}, we derive the dictionary between the variables here and in \cite{peter} in Table \ref{dic}. 
\begin{table}[h!]
    \centering
    \begin{tabular}{|c|c|c|}
    \hline
          & Our notation & Notation in \cite{peter} \\
         \hline\hline
        Face    (triangle) & $b_{ij}'$  & $b_{(ij)^*}$
         \\
         \hline
         Link & $u_{ij}$ & $h_{ij}$
         \\
         \hline
         Edge $e$ & $\lambda^i_{e^*}$ & $\ell_i^e$\\
         \hline
         Wedge/dual face  & $Y^{i}_{e^*}$ & $V_{i}^{e^*}$\\\hline
    \end{tabular}
    \caption{Dictionary between the variables used here and the ones in \cite{peter}. }
    \label{dic}
\end{table}

Hence our prescription allows to recover exactly  phase space of \cite{Asante:2019lki, peter} in the case of the Euclidean Poisson Lie 2-group with a trivial Poisson structure.

\subsubsection{Other choices of groups $G_i$.} There are a number of other interesting cases that we can highlight. 
\begin{itemize}
\item \textbf{Curved case.} We have described the flat case and we can input curvature in the triangulation, ie dealing with non-abelian groups on the edges,  in different manners. 
\begin{itemize}
    \item \textit{Use the $\kappa$-Poincar\'e deformation}. This would amount to consider the case we described in more details in Sec. \ref{Sec_Semidualization}. We would take 
    $$
    G_1=\SO, \, G_2=\AN, \, G_1^*=\SO^*, \, G_2^*=\AN^*.
    $$
    This type of deformation would work for both sign of the cosmological constant in the Lorentzian case, and for a negative cosmological constant in the Euclidian case. 
    \item \textit{Use a self dual case}. We could start the classical double based on $T^*Spin(4)\cong T^*(\SU(2)\times\SU(2))\cong (\SU(2)\times\SU(2)) \ltimes \bbR^6$ and semi-dualize it, to recover   $(\SU(2)\ltimes\R^3) \bowtie (\SU(2)\ltimes\R^3),$
    which can be seen as the double of the Drinfeld double of $\SU(2)$ (as a Poisson Lie group). 
\end{itemize} 
    \item \textbf{Dual case.}  Instead of using $\cG$ to decorate $\cT$, we use $\cG^*$ and $\cG$ decorates $\cT^*$. This would amount to dualize the cases we considered and can be seen as the analogue of the BF vacuum picture \cite{Dittrich:2016typ}. 
\end{itemize}

\section{Conclusion}

Building a phase space associated to a triangulation is relevant both for topological studies and quantum gravity models. Upon quantization, the phase space  leads to the relevant notion of quantum state of discrete geometry. Spin networks are graphs decorated by (quantum) group representations and encode the naturally quantum states of (curved) discrete geometries in a 2+1 dimensional space. While spin networks can be generalized to the 3+1 dimensional case, they are likely not the best tool to discuss discrete geometries in this case. We expect instead  spin 2-networks based on representations of (quantum) 2-groups to be relevant. This means that we should have some phase space based on 2-groups associated with a triangulation. This is the construction we proposed in Def.\ \ref{Def:main_result}. 

We did not deal with the most general type of 2-groups. Instead we dealt with skeletal strict 2-groups, ie strict 2-groups such that the $t$-map is trivial, such as the Poincar\'e 2-group. We focused on this class of 2-groups as they are similar to groups in many aspects. Because of this similarity, we could recycle many structures such as Heisenberg double and symplectic reduction to achieve our construction. 

The skeletal 2-groups we dealt with will lead upon quantization to quantum 2-groups in general. Indeed, they are in general equipped with a  non-trivial Poisson structure. As an example we showed how the $\kappa$-Poincar\'e deformation arises naturally, when dealing with the Poincar\'e 2-group. But in fact, any deformation (ie non-trivial Poisson structure) of the Poincar\'e group could be used here and exported to the Poincar\'e 2-group setting.   

We gave the explicit construction for the main building blocks of a 3d triangulation, namely the triangle and the tetrahedron. We also recovered the phase space introduced in \cite{Asante:2019lki} based on the un-deformed Poincar\'e 2-group and discussed how they can be generalized to different cases. We highlighted the $\kappa$-Poincar\'e case which can be viewed as defining  polyhedra with curved edges, but still flat face decorations. 

\smallskip

The construction of the triangulation phase space and the construction of  examples of Poisson 2-groups relevant for describing the 3d triangulation phase space  constitute the main results of the paper.

\smallskip

They open many new directions and beg for many questions to be answered.

\paragraph{Generalization, away from skeletal 2-groups.} 
Our construction is based on the fact that skeletal Poisson 2-groups share many similarities with Poisson Lie groups. The first question to explore is  how  can we extend our construction to a non-trivial $t$-map? This would allow to have a non-abelian decorations on the faces. In this case it is unlikely that we can use the group structures and some type of semi-dualization as we did. In a sense, we expect non skeletal 2-groups to be more 2-groups than groups. 

In the  construction we proposed, the edge variables in $G_2$ are always Poisson commutative since dually the wedge decoration in $G^*_2$ is abelian. Switching on curvature on the wedge side ---  making $G^*_2$ non abelian and dealing with a non-trivial $t$-map in the crossed module $\G=G_1\ltimes G_2^*$ --- would generate some Poisson non-commutativity for the edge variables. With the proper choice of group $G^*_2$, such  as $\SU(2)$, this could imply a discrete spectrum for the length operator. This interesting aspect is currently under investigation.   

Another interesting generalization is to consider the case of weak 2-groups \cite{baez:2003}. Finally, to have a full description of the 3d triangulation, it seems that a 3-group structure might exhaust all the possible triangulation decorations, namely, 1d, 2d, and 3d. We expect that the volume decoration would be dually equivalent to considering particle like defects. A different argument, also pointing to the use of 3-groups for matter is \cite{Radenkovic:2019qme}.

\paragraph{Derivation from a classical action.} The phase space defined in terms of the un-deformed Poincar\'e 2-group, proposed in \cite{Asante:2019lki},  can be derived from the \textit{BFCG} action \cite{Girelli:2007tt} defined as a 2-gauge theory for the Poincar\'e 2-group, or from the \textit{BF} action seen as a gauge theory for the Poincar\'e group \cite{peter}. Indeed, such \textit{BFCG} and \textit{BF} actions can be related by a change of polarization in the translation sector only \cite{Mikovic:2011si}. This is the continuum version of the semi-dualization relating the Lie bi-algebras $\d$ and $\b$. It would be interesting to see whether we could derive the other cases of skeletal 2-groups. In particular, we expect that the $\kappa$-Poincar\'e 2-group should arise when dealing with a $\SO(4,1)$ \textit{BF} theory, where we would do a change of polarization in the $\AN$ sector. This would allow to define a \textit{BFCG} type of action of a slightly more complicated shape than the usual one, due to the presence of the mutual actions.  

Another natural example to consider is $\SO(3,1)$  \textit{BF} theory which can be constrained to recover 4d gravity (with no cosmological constant). We conjecture that such theory can be discretized in terms of the 3d $\kappa$-Poincar\'e 2-group. This would provide potentially an interesting new route to define a spinfoam model for 4d gravity, when implementing the simplicity constraints.

\paragraph{Simplicity constraints. }
The polyhedron we defined is not really geometrical since the face decorations are independent of the edge decorations. We would like to have --- on faces --- some data encoding the normal and the area of the face, as done in the Minkowski theorem, while maintaining  the edge information. 
Imposing such geometricity constraint is equivalent to the simplicity constraint which are essential to project the topological sector to the gravity sector. This type of constraint usually breaks the 2-gauge invariance. It would be interesting to see whether we can still define a 2-group where the face decoration, the normal, is defined in terms of the edges. This would potentially  provide an interesting structure to discuss discrete gravity.            
{
\paragraph{Revisiting twisted geometries. }
In the usual loop quantum gravity picture, one uses $T^*\SU(2)$ to decorate the link and triangle data. This phase space can be decomposed in a different manner to have discrete geometry variables, such as the dihedral angle, the normal and the area of the triangle \cite{Freidel:2010aq, Freidel:2013bfa}. This discrete geometry description goes  under the name of \textit{twisted geometries}. We have increased the phase  space dimension to include variables describing the edges (and the dual faces) which also allows to discuss curved geometries. It would be interesting to explore how this can generate a more elaborate version of the twisted geometries, possibly also incorporating curvature.   
}


\paragraph{Quantization: spin 2-networks, 2-representations of deformed Poincar\'e 2-group, state sum. }
Studying the (2-)representation theory of a 2-group is a difficult task in general. The case of skeletal 2-groups was done in \cite{Baez:2008hz}. It would be interesting to see how such structure is deformed when dealing with the deformed Poincar\'e 2-groups. Strict quantum 2-groups have been defined in \cite{Majid:2012gy}, but their representation theory is still an open question to the best of our knowledge. Using similar arguments as in \cite{Asante:2019lki}, one could  discuss the spectrum of the length operator  and assess whether it becomes discrete or not according to the choice of Poisson 2-group.   

Starting from $G$-networks, it was shown how we could recover the KBF amplitude/state sum model \cite{Korepanov:2002tp, Korepanov:2002tq, Korepanov:2002tr, Baratin:2014era}, defined in terms of the Poincar\'e 2-group representations. Interestingly, there was no need to use some type of Plancherel/Peter-Weyl formula to recover it. It would be interesting to see whether we can apply this type calculations to determine the deformed version of the KBF model.

\paragraph{Semi-dualization, relation between spin networks and spin 2-networks?} As we alluded several times now, skeletal 2-groups such as the Poincar\'e 2-group behave in some respect as groups. 
Since the Lie algebra case $\d$ and the skeletal Lie 2-algebra case, $\b$, can be related by a semi-dualization map, we conjecture that the BF amplitude based on spin networks defined for $G_1\bowtie G_2$ could be related to a BFCG type amplitude based on spin 2-networks defined for the skeletal2-group $G_1\ltimes G_2^*$. We leave this for later studies.

\section*{Acknowledgments.}
We would like to thank A. Riello for discussions and suggestions. 

\bibliographystyle{hplain}

\bibliography{biblio}

\begin{thebibliography}{10}

\bibitem{Alekseev_1994}
A.~Yu. Alekseev and A.~Z. Malkin.
\newblock {\em Symplectic structures associated to lie-poisson groups}.
\newblock {\em Communications in Mathematical Physics}, 162(1):147--173, Apr
  1994.

\bibitem{alekseev1998}
Anton Alekseev, Anton Malkin, and Eckhard Meinrenken.
\newblock {\em Lie group valued moment maps}.
\newblock {\em J. Differential Geom.}, 48(3):445--495, 1998.

\bibitem{alexandrov}
A.D. Alexandrov.
\newblock {\em Convex Polyhedra}.
\newblock Springer-Verlag Berlin Heidelberg, 2005.

\bibitem{AmelinoCamelia:1999pm}
Giovanni Amelino-Camelia and Shahn Majid.
\newblock {\em {Waves on noncommutative space-time and gamma-ray bursts}}.
\newblock {\em Int. J. Mod. Phys. A}, 15:4301--4324, 2000, hep-th/9907110.

\bibitem{Asante:2019lki}
Seth~K. Asante, Bianca Dittrich, Florian Girelli, Aldo Riello, and Panagiotis
  Tsimiklis.
\newblock {\em {Quantum geometry from higher gauge theory}}.
\newblock {\em Class. Quant. Grav.}, 37(20):205001, 2020, 1908.05970.

\bibitem{Atiyah:1989vu}
M.~Atiyah.
\newblock {\em {Topological quantum field theories}}.
\newblock {\em Inst. Hautes Etudes Sci. Publ. Math.}, 68:175--186, 1989.

\bibitem{Baez:1995xq}
J.C. Baez and J.~Dolan.
\newblock {\em {Higher dimensional algebra and topological quantum field
  theory}}.
\newblock {\em J. Math. Phys.}, 36:6073--6105, 1995, q-alg/9503002.

\bibitem{baez:2003}
John Baez and Aaron Lauda.
\newblock {\em Higher-dimensional algebra v: 2-groups}.
\newblock {\em Theory and Applications of Categories}, 12, 08 2003.

\bibitem{BAEZ1997125}
John~C. Baez.
\newblock {\em Higher-dimensional algebra ii. 2-hilbert spaces}.
\newblock {\em Advances in Mathematics}, 127(2):125--189, 1997.

\bibitem{Baez:2008hz}
John~C. Baez, Aristide Baratin, Laurent Freidel, and Derek~K. Wise.
\newblock {\em {Infinite-Dimensional Representations of 2-Groups}}, volume
  1032.
\newblock 2012, 0812.4969.

\bibitem{Baez:2003fs}
John~C. Baez and Alissa~S. Crans.
\newblock {\em {Higher-Dimensional Algebra VI: Lie 2-Algebras}}.
\newblock {\em Theor. Appl. Categor.}, 12:492--528, 2004, math/0307263.

\bibitem{Baez:2010ya}
John~C. Baez and John Huerta.
\newblock {\em {An Invitation to Higher Gauge Theory}}.
\newblock {\em Gen. Rel. Grav.}, 43:2335--2392, 2011, 1003.4485.

\bibitem{Ballesteros:1995mi}
A.~Ballesteros, F.~J. Herranz, M.~A. del Olmo, and M.~Santander.
\newblock {\em {A New 'null plane' quantum Poincare algebra}}.
\newblock {\em Phys. Lett. B}, 351:137--145, 1995, q-alg/9502019.

\bibitem{Baratin:2014era}
Aristide Baratin and Laurent Freidel.
\newblock {\em {A 2-categorical state sum model}}.
\newblock {\em J. Math. Phys.}, 56(1):011705, 2015, 1409.3526.

\bibitem{Barrett:1995mg}
John~W. Barrett.
\newblock {\em {Quantum gravity as topological quantum field theory}}.
\newblock {\em J. Math. Phys.}, 36:6161--6179, 1995, gr-qc/9506070.

\bibitem{Barrett:1993ab}
John~W. Barrett and Bruce~W. Westbury.
\newblock {\em {Invariants of piecewise linear three manifolds}}.
\newblock {\em Trans. Am. Math. Soc.}, 348:3997--4022, 1996, hep-th/9311155.

\bibitem{Bianchi_2011}
Eugenio Bianchi, Pietro Don{\'a}, and Simone Speziale.
\newblock {\em Polyhedra in loop quantum gravity}.
\newblock {\em Physical Review D}, 83(4), Feb 2011.

\bibitem{Bonzom:2014bua}
Valentin Bonzom, Ma{\"\i}t{\'e} Dupuis, and Florian Girelli.
\newblock {\em {Towards the Turaev-Viro amplitudes from a Hamiltonian
  constraint}}.
\newblock {\em Phys. Rev.}, D90(10):104038, 2014, 1403.7121.

\bibitem{Bonzom:2014wva}
Valentin Bonzom, Mait{\'e} Dupuis, Florian Girelli, and Etera~R. Livine.
\newblock {\em {Deformed phase space for 3d loop gravity and hyperbolic
  discrete geometries}}.
\newblock 2014, 1402.2323.

\bibitem{Chari:1994pz}
V.~Chari and A.~Pressley.
\newblock {\em {A guide to quantum groups}}.
\newblock 1994.

\bibitem{Charles:2016xzi}
Christoph Charles and Etera~R. Livine.
\newblock {\em {The closure constraint for the hyperbolic tetrahedron as a
  Bianchi identity}}.
\newblock {\em Gen. Rel. Grav.}, 49(7):92, 2017, 1607.08359.

\bibitem{chen2013}
Zhuo Chen, Mathieu Sti{\c c}non, and Ping Xu.
\newblock {\em Poisson 2-groups}.
\newblock {\em J. Differential Geom.}, 94(2):209--240, 06 2013.

\bibitem{CHEN201359}
Zhuo Chen, Mathieu Sti{\'e}non, and Ping Xu.
\newblock {\em Weak lie 2-bialgebras}.
\newblock {\em Journal of Geometry and Physics}, 68:59--68, 2013.

\bibitem{Crane:1993vs}
Louis Crane.
\newblock {Topological field theory as the key to quantum gravity}.
\newblock In {\em {Knots and Quantum Gravity}}, pages 121--132, 9 1993,
  hep-th/9308126.

\bibitem{Crane:1994ji}
Louis Crane, Louis~H. Kauffman, and David~N. Yetter.
\newblock {\em {State sum invariants of four manifolds. 1.}}
\newblock 9 1994, hep-th/9409167.

\bibitem{Delcamp:2018sef}
Clement Delcamp, Laurent Freidel, and Florian Girelli.
\newblock {\em {Dual loop quantizations of 3d gravity}}.
\newblock 3 2018, 1803.03246.

\bibitem{dirac2001lectures}
P.A.M. Dirac.
\newblock {\em Lectures on Quantum Mechanics}.
\newblock Belfer Graduate School of Science, monograph series. Dover
  Publications, 2001.

\bibitem{Dittrich:2016typ}
Bianca Dittrich and Marc Geiller.
\newblock {\em {Quantum gravity kinematics from extended TQFTs}}.
\newblock {\em New J. Phys.}, 19(1):013003, 2017, 1604.05195.

\bibitem{Dittrich:2010ey}
Bianca Dittrich and James~P. Ryan.
\newblock {\em {Simplicity in simplicial phase space}}.
\newblock {\em Phys. Rev. D}, 82:064026, 2010, 1006.4295.

\bibitem{Dittrich:2008ar}
Bianca Dittrich and James~P. Ryan.
\newblock {\em {Phase space descriptions for simplicial 4d geometries}}.
\newblock {\em Class. Quant. Grav.}, 28:065006, 2011, 0807.2806.

\bibitem{Dupuis:2020ndx}
Ma\"\i{}t\'e Dupuis, Laurent Freidel, Florian Girelli, Abdulmajid Osumanu, and
  Julian Rennert.
\newblock {\em {On the origin of the quantum group symmetry in 3d quantum
  gravity}}.
\newblock 6 2020, 2006.10105.

\bibitem{Dupuis:2013haa}
Ma{\"\i}t{\'e} Dupuis and Florian Girelli.
\newblock {\em {Quantum hyperbolic geometry in loop quantum gravity with
  cosmological constant}}.
\newblock {\em Phys. Rev.}, D87(12):121502, 2013, 1307.5461.

\bibitem{Dupuis:2013lka}
Ma{\"\i}t{\'e} Dupuis and Florian Girelli.
\newblock {\em {Observables in Loop Quantum Gravity with a cosmological
  constant}}.
\newblock {\em Phys. Rev.}, D90(10):104037, 2014, 1311.6841.

\bibitem{Freidel:2009ck}
Laurent Freidel and Etera~R. Livine.
\newblock {\em {The Fine Structure of SU(2) Intertwiners from U(N)
  Representations}}.
\newblock {\em J. Math. Phys.}, 51:082502, 2010, 0911.3553.

\bibitem{Freidel:2010aq}
Laurent Freidel and Simone Speziale.
\newblock {\em {Twisted geometries: A geometric parametrisation of SU(2) phase
  space}}.
\newblock {\em Phys. Rev. D}, 82:084040, 2010, 1001.2748.

\bibitem{Freidel:2013bfa}
Laurent Freidel and Jonathan Ziprick.
\newblock {\em {Spinning geometry = Twisted geometry}}.
\newblock {\em Class. Quant. Grav.}, 31(4):045007, 2014, 1308.0040.

\bibitem{Girelli:2007tt}
F.~Girelli, H.~Pfeiffer, and E.M. Popescu.
\newblock {\em {Topological Higher Gauge Theory - from BF to BFCG theory}}.
\newblock {\em J. Math. Phys.}, 49:032503, 2008, 0708.3051.

\bibitem{Girelli:2005ii}
Florian Girelli and Etera~R. Livine.
\newblock {\em {Reconstructing quantum geometry from quantum information: Spin
  networks as harmonic oscillators}}.
\newblock {\em Class. Quant. Grav.}, 22:3295--3314, 2005, gr-qc/0501075.

\bibitem{peter}
Tsimiklis~P. Girelli~F.
\newblock Discretization of 4d poincare bf theory: from groups to 2-groups.

\bibitem{Haggard:2015ima}
Hal~M. Haggard, Muxin Han, and Aldo Riello.
\newblock {\em {Encoding Curved Tetrahedra in Face Holonomies: Phase Space of
  Shapes from Group-Valued Moment Maps}}.
\newblock {\em Annales Henri Poincare}, 17(8):2001--2048, 2016, 1506.03053.

\bibitem{kapovich1996}
Michael Kapovich and John~J. Millson.
\newblock {\em The symplectic geometry of polygons in euclidean space}.
\newblock {\em J. Differential Geom.}, 44(3):479--513, 1996.

\bibitem{Kitaev1997}
A.~{\relax Yu}. Kitaev.
\newblock {\em {Fault tolerant quantum computation by anyons}}.
\newblock {\em Annals Phys.}, 303:2--30, 2003, quant-ph/9707021.

\bibitem{Korepanov:2002tp}
Igor~G. Korepanov.
\newblock {\em {Euclidean 4-simplices and invariants of four-dimensional
  manifolds. I. Moves $3 \rightarrow 3$}}.
\newblock {\em Theor. Math. Phys.}, 131:765--774, 2002, math/0211165.

\bibitem{Korepanov:2002tq}
Igor~G. Korepanov.
\newblock {\em {Euclidean 4-simplices and invariants of four-dimensional
  manifolds. II. An Algebraic complex and moves $2\leftrightarrow4$}}.
\newblock {\em Theor. Math. Phys.}, 133:1338--1347, 2002, math/0211166.

\bibitem{Korepanov:2002tr}
Igor~G. Korepanov.
\newblock {\em {Euclidean 4-simplices and invariants of four-dimensional
  manifolds. III. Moves $1\leftrightarrow5$ and related structures}}.
\newblock {\em Theor. Math. Phys.}, 135:601--613, 2003, math/0211167.

\bibitem{Kosinski}
Piotr Kosinski and Pawel Maslanka.
\newblock {\em The duality between $\kappa$-poincar\'e algebra and
  $\kappa$-poincar\'e group}.
\newblock 12 1994.

\bibitem{Livine:2018vxi}
Etera~R. Livine.
\newblock {\em {Hamiltonian flows of Lorentzian polyhedra: Kapovich-Millson
  phase space and SU(1, 1) intertwiners}}.
\newblock {\em J. Math. Phys.}, 60(1):012301, 2019, 1807.06848.

\bibitem{Livine:2011zz}
Etera~R. Livine and Johannes Tambornino.
\newblock {\em {Loop gravity in terms of spinors}}.
\newblock {\em J. Phys. Conf. Ser.}, 360:012023, 2012, 1109.3572.

\bibitem{Livine:2011gp}
Etera~R. Livine and Johannes Tambornino.
\newblock {\em {Spinor Representation for Loop Quantum Gravity}}.
\newblock {\em J. Math. Phys.}, 53:012503, 2012, 1105.3385.

\bibitem{lu}
J.H. Lu.
\newblock {\em Multiplicative and Affine Poisson Structures on Lie Groups}.
\newblock University of California, Berkeley, 1990.

\bibitem{Lukierski:1996sb}
J.~Lukierski.
\newblock {\em {Fundamental masses from quantum symmetries}}.
\newblock {\em Acta Phys. Polon. B}, 27:849--864, 1996.

\bibitem{Lukierski:1992dt}
Jerzy Lukierski, Anatol Nowicki, and Henri Ruegg.
\newblock {\em {New quantum Poincare algebra and k deformed field theory}}.
\newblock {\em Phys. Lett. B}, 293:344--352, 1992.

\bibitem{Lurie:2009keu}
Jacob Lurie.
\newblock {\em {On the Classification of Topological Field Theories}}.
\newblock 5 2009, 0905.0465.

\bibitem{MACKAAY1999288}
Marco Mackaay.
\newblock {\em Spherical 2-categories and 4-manifold invariants}.
\newblock {\em Advances in Mathematics}, 143(2):288--348, 1999.

\bibitem{MACKAAY2000353}
Marco Mackaay.
\newblock {\em Finite groups, spherical 2-categories, and 4-manifold
  invariants}.
\newblock {\em Advances in Mathematics}, 153(2):353--390, 2000.

\bibitem{Majid:1996kd}
S.~Majid.
\newblock {\em {Foundations of quantum group theory}}.
\newblock Cambridge University Press, 2011.

\bibitem{Majid:1994cy}
S.~Majid and H.~Ruegg.
\newblock {\em {Bicrossproduct structure of kappa Poincare group and
  noncommutative geometry}}.
\newblock {\em Phys. Lett. B}, 334:348--354, 1994, hep-th/9405107.

\bibitem{Majid:2008iz}
S.~Majid and B.~J. Schroers.
\newblock {\em {q-Deformation and Semidualisation in 3d Quantum Gravity}}.
\newblock {\em J. Phys. A}, 42:425402, 2009, 0806.2587.

\bibitem{Majid:2012gy}
Shahn Majid.
\newblock {\em {Strict quantum 2-groups}}.
\newblock 8 2012, 1208.6265.

\bibitem{Martins:2010ry}
Joao~Faria Martins and Aleksandar Mikovic.
\newblock {\em {Lie crossed modules and gauge-invariant actions for 2-BF
  theories}}.
\newblock {\em Adv. Theor. Math. Phys.}, 15(4):1059--1084, 2011, 1006.0903.

\bibitem{Mikovic:2011si}
A.~Mikovic and M.~Vojinovic.
\newblock {\em {Poincare 2-group and quantum gravity}}.
\newblock {\em Class. Quant. Grav.}, 29:165003, 2012, 1110.4694.

\bibitem{Penrose71angularmomentum:}
Roger Penrose.
\newblock Angular momentum: an approach to combinatorial spacetime.
\newblock In {\em Living Reviews in Relativity
  http://www.livingreviews.org/lrr-2008-5 Carlo Rovelli}, pages 151--180.
  University Press, 1971.

\bibitem{Plebanski:1977zz}
Jerzy~F. Plebanski.
\newblock {\em {On the separation of Einsteinian substructures}}.
\newblock {\em J. Math. Phys.}, 18:2511--2520, 1977.

\bibitem{Radenkovic:2019qme}
Tijana Radenkovic and Marko Vojinovic.
\newblock {\em {Higher Gauge Theories Based on 3-groups}}.
\newblock {\em JHEP}, 10:222, 2019, 1904.07566.

\bibitem{Riello:2017iti}
Aldo Riello.
\newblock {\em {Self-dual phase space for ( 3+1 )-dimensional lattice
  Yang-Mills theory}}.
\newblock {\em Phys. Rev. D}, 97(2):025003, 2018, 1706.07811.

\bibitem{Riello:2021lfl}
Aldo Riello.
\newblock {\em {Edge modes without edge modes}}.
\newblock 4 2021, 2104.10182.

\bibitem{Riello}
Aldo Riello.
\newblock {\em {Personal communication}}.
\newblock 2021.

\bibitem{rovelli2004}
C.~Rovelli.
\newblock {\em Quantum Gravity}.
\newblock Cambridge Monographs on Mathematical Physics. Cambridge University
  Press, 2004.

\bibitem{Sellaroli:2017wwc}
Giuseppe Sellaroli.
\newblock {\em {An algorithm to reconstruct convex polyhedra from their face
  normals and areas}}.
\newblock 2017, 1712.00825.

\bibitem{SemenovTianShansky:1993ws}
M.A. Semenov-Tian-Shansky.
\newblock {\em {Poisson Lie groups, quantum duality principle, and the quantum
  double}}.
\newblock {\em Theor. Math. Phys.}, 93:1292--1307, 1992, hep-th/9304042.

\bibitem{treloar}
Thomas Treloar.
\newblock {\em The symplectic geometry of polygons in the $3$-sphere}.
\newblock {\em Canadian Journal of Mathematics}, 54, 10 2000.

\bibitem{TV}
O.~Y. Turaev, V. G.;~Viro.
\newblock {\em State sum invariants of 3-manifolds and quantum 6j-symbols}.
\newblock {\em Topology}, 31(4):865--902, 1992.

\bibitem{Wen:1989iv}
X.~G. Wen.
\newblock {\em {Topological Order in Rigid States}}.
\newblock {\em Int. J. Mod. Phys. B}, 4:239, 1990.

\bibitem{Wen:2012hm}
Xiao-Gang Wen.
\newblock {\em {Topological order: from long-range entangled quantum matter to
  an unification of light and electrons}}.
\newblock {\em ISRN Cond. Matt. Phys.}, 2013:198710, 2013, 1210.1281.

\bibitem{Witten:1988hc}
Edward Witten.
\newblock {\em {(2+1)-Dimensional Gravity as an Exactly Soluble System}}.
\newblock {\em Nucl. Phys.}, B311:46, 1988.

\bibitem{Witten:1989mi}
Edward Witten.
\newblock {\em Quantum field theory and the jones polynomial}.
\newblock {\em Commun. Math. Phys.}, 121:351--399, 1989.

\bibitem{Yetter:1992rz}
D.~N. Yetter.
\newblock {\em {Topological quantum field theories associated to finite groups
  and crossed G sets}}.
\newblock {\em J. Knot Theor. Ramifications}, 1:1--20, 1992.

\bibitem{Yetter:1993dh}
D.~N. Yetter.
\newblock {\em {TQFT's from homotopy 2 types}}.
\newblock {\em J. Knot Theor. Ramifications}, 2:113--123, 1993.

\bibitem{Zakrzewski_1994}
S~Zakrzewski.
\newblock {\em Quantum poincare group related to the kappa -poincare algebra}.
\newblock {\em Journal of Physics A: Mathematical and General},
  27(6):2075--2082, mar 1994.

\end{thebibliography}

\end{document}